\documentclass[a4paper,12pt]{article}
\pdfoutput=1 

\PassOptionsToPackage{linecolor=blue,backgroundcolor=blue!25,bordercolor=blue,textsize=scriptsize}{todonotes}
\PassOptionsToPackage{dvipsnames}{xcolor}

\usepackage{jheppub} 


\usepackage{float, array, xspace, amscd, amsthm}
\usepackage{fancyhdr}
\usepackage{longtable}

\usepackage{amsmath}
\usepackage{slashed}
\usepackage{microtype}

\allowdisplaybreaks

\makeatletter
\g@addto@macro\bfseries{\boldmath}
\makeatother

\let\originalleft\left
\let\originalright\right
\renewcommand{\left}{\mathopen{}\mathclose\bgroup\originalleft}
\renewcommand{\right}{\aftergroup\egroup\originalright}

\usepackage{amsthm,amssymb,euscript,bbold}
\usepackage{amsfonts}
\usepackage{mathrsfs}
\usepackage{graphicx}
\usepackage{mathtools}
\usepackage{slashed}
\usepackage{amsthm}
\usepackage{amscd}
\usepackage{tensor}

\usepackage[shortlabels]{enumitem}
\usepackage{epsfig}                    
\usepackage[matrix,arrow]{xy}          
\usepackage{xspace}                    
\usepackage{stmaryrd}                  
\usepackage{slashed}
\usepackage{appendix,graphicx}
\usepackage{enumitem}
\usepackage{hyperref}

\usepackage{units} 
\usepackage{tikz}
\usepackage{tikz-cd}
\usepackage{color}

\usepackage{todonotes}

\usepackage{tikz-cd}

\DeclareSymbolFont{bbold}{U}{bbold}{m}{n}
\DeclareSymbolFontAlphabet{\mathbbold}{bbold}

\newtheorem{lemma}{Lemma}

\newcommand{\dd}{{\rm d}}

\newcommand{\End}{{\rm End}}
\newcommand{\tr}{{\rm tr}}



\newcommand{\cN}{\mathcal{N}}
\newcommand{\cA}{\mathcal{A}}

\def\bea#1\eea{\begin{align}#1\end{align}}

\usepackage{color}

\theoremstyle{definition}

\setcounter{tocdepth}{2}


\global\long\def\ext{\Lambda}%

\newcommand{\ii}{\mathrm{i}}
\newcommand{\ee}{\mathrm{e}}

\newcommand{\der}{\partial}
\newcommand{\del}{\partial}
\newcommand{\delb}{\bar{\partial}}

\newcommand{\bbR}{\mathbb{R}}
\newcommand{\bbC}{\mathbb{C}}

\DeclareMathOperator{\SU}{\mathit{SU}}

\DeclareMathOperator{\SO}{\mathit{SO}}

\DeclareMathOperator{\GL}{\mathit{GL}}

\DeclareMathOperator{\Spin}{\mathit{Spin}}
\DeclareMathOperator{\G}{\mathit{G}}

\DeclareMathOperator{\Cliff}{Cliff}

\newcommand{\rep}[1]{\mathbf{#1}}

\newcommand{\repp}[2]{(\rep{#1}, \rep{#2})}
\newcommand{\id}{\mathbbold{1}}

\DeclareMathOperator{\im}{Im}


\DeclareMathOperator{\adj}{ad}

\newcommand{\LC}{\nabla}

\newcommand{\dc}{\check{\dd}} 
\newcommand{\dcDagger}{\dc^\dagger}

\newcommand{\proj}[1]{\times_{#1}}

\newcommand{\inn}{\mathbin{\lrcorner}}

\newcommand{\on}{\mathbin{\otimes_{N}}}

\newcommand{\ba}{\bar{a}}
\newcommand{\bb}{\bar{b}}
\newcommand{\bc}{\bar{c}}

\newcommand{\be}{\bar{e}}

\newcommand{\he}{\hat{e}}
\newcommand{\hE}{\hat{E}}

\newcommand{\hs}[1]{\hspace{#1}}

\newcommand{\ra}{\rightarrow}
\newcommand{\Ra}{\Rightarrow}
\newcommand{\lra}{\leftrightarrow}

\newcommand{\tm}{{\tilde{m}}}
\newcommand{\tn}{{\tilde{n}}}

\newcommand{\dil}{\phi}


\newcommand{\complex}{BPS complex\hs{5pt}}
\newcommand{\complexes}{BPS complexes\hs{5pt}}
\newcommand{\complexNoSpace}{BPS complex}
\newcommand{\complexesNoSpace}{BPS complexes}


\newcommand{\mc}{\mathcal} 
\newcommand{\mf}{\mathfrak} 
\newcommand{\mb}{\mathbb} 
\renewcommand{\on}{\operatorname}

\newcommand{\slot}{\;\cdot\;} 

\newcommand{\curv}{R}
\newcommand{\cx}{\mc A} 
\newcommand{\bx}{A} 
\newcommand{\cy}{\mc B} 
\newcommand{\by}{B} 
\newcommand{\gc}{\mc F} 
\newcommand{\gb}{F} 


\newcommand{\Riem}{R} 
\newcommand{\Dirac}{\mc{D}}

\newcommand{\gTwoForm}{\phi}

\newcommand{\diracOperator}{\slashed{\LC}}
\newcommand{\gTwoEvenSpinor}[1]{\psi(#1_{0}, #1_{2})}
\newcommand{\gTwoOddSpinor}[1]{\chi(#1_{1}, #1_3)}
\newcommand{\gTwoComplexSpinor}{\Psi}
\newcommand{\diracBar}[1]{\overline{#1}}
\newcommand{\gTwoInvariantSpinor}{\epsilon}
\newcommand{\formInnerProduct}[2]{\langle #1, #2\rangle}
\newcommand{\gTwoProjector}[2]{\mathcal{P}^{#1}_{\textbf{#2}}}
\newcommand{\covariantExteriorDerivative}{\dd_\LC}
\newcommand{\contraction}{\lrcorner}



\title{BPS complexes and Chern--Simons theories from $G$-structures in gauge theory and gravity}
\author[a]{Julian Kupka,}
\emailAdd{j.kupka@herts.ac.uk}
\author[a]{Charles Strickland-Constable,}
\emailAdd{c.strickland-constable@herts.ac.uk}
\author[b]{Eirik Eik Svanes,}
\emailAdd{eirik.e.svanes@uis.no}
\author[c]{\\David Tennyson,}
\emailAdd{dtennyson@tamu.edu}
\author[a]{and Fridrich Valach}
\emailAdd{f.valach@herts.ac.uk}

\affiliation[a]{Department of Physics, Astronomy and Mathematics,
University of Hertfordshire, College Lane, Hatfield, AL10 9AB, United Kingdom}

\affiliation[b]{Department of Mathematics and Physics, Faculty of Science and Technology,
University of Stavanger, N-4036, Stavanger, Norway}

\affiliation[c]{Mitchell Institute for Fundamental Physics and Astronomy, Texas A\&M University, College Station, TX, 77843, USA}

\abstract{
We consider a variety of physical systems in which one has states that can be thought of as generalised instantons.
These include Yang--Mills theories on manifolds with a torsion-free $G$-structure, analogous gravitational instantons and certain supersymmetric solutions of ten-dimensional supergravity, using their formulation as generalised $G$-structures on Courant algebroids. 
We provide a universal algebraic construction of a complex, which we call the BPS complex, that computes the infinitesimal moduli space of the instanton as one of its cohomologies. 
We call a class of these spinor type complexes, which are closely connected to supersymmetric systems, and show how their Laplacians have nice properties. 
In the supergravity context, the BPS complex becomes a double complex, in a way that corresponds to the left- and right-moving sectors of the string, and becomes much like the double complex of $(p,q)$-forms on a K\"ahler manifold. 
If the BPS complex has a symplectic inner product, one can write down an associated linearised BV Chern--Simons theory, which reproduces several classic examples in gauge theory. 
We discuss applications to (quasi-)topological string theories and heterotic superpotential functionals, whose quadratic parts can also be constructed naturally from the BPS complex.}


\subheader{\hfill\textrm{MI-HET-834}}

\begin{document}

\maketitle

\section{Introduction}

When studying a quantum field theory or string theory, it is often convenient to study a simpler `twisted' theory first which, in nice cases, teaches us something about the full theory.
In the past, this strategy has been used to great effect and has led to interesting connections with other topics in geometry and topology.
For example, topological quantum field theories~\cite{Witten:1990} are simple models of quantum field theory in that they have no propagating degrees of freedom and they were famously related to topological invariants of the underlying manifolds in a variety of situations~\cite{Witten:1990,Witten:1988hf,Witten:1988xj}. 
The topological versions of string $\sigma$-models~\cite{Witten:1988xj,Vafa:1990mu} also gave rise to topological string theories~\cite{Witten:1991zz,Bershadsky:1993cx}, which are subsectors of the full string theory where exact computations can be performed. 
In fact, these computations were shown to give results that remain valid in the full theory~\cite{Bershadsky:1993cx,Antoniadis:1993ze}. 
A further interesting class of simplified QFTs are holomorphic field theories, as studied in~\cite{nekrassov1996four,Popov:1999cq,Costello:2011np,Williams:2018ows}, which are simpler than full smooth QFTs but have considerably more structure than topological field theories. 

Many of these theories arise from taking a supersymmetric theory and applying a twist~\cite{Witten:1990}. 
This is a two-step process in which one first modifies the Lorentz symmetry of the theory, and correspondingly the energy-momentum tensor, so that one of the spinor supercharges becomes a scalar that can be defined globally regardless of the particular metric on the manifold. 
One then adds this supercharge to the BRST operator of the theory, which significantly modifies the spectrum of physical states in general. 
In the case that the energy-momentum tensor becomes BRST exact, the correlation functions become metric independent and one has a topological field theory of ``Witten type". 
For example, Donaldson-Witten theory can be constructed as a twist of 4d $\mathcal{N}=2$ Yang-Mills theory, in which case the spectrum of the twisted BRST operator becomes the $\SU(2)$ instantons. Other examples are the A- and B-models which are twists of 2d $\mathcal{N}=(2,2)$ $\sigma$-models. Subsequently, such twists have been studied extensively~\cite{Baulieu:1988xs,Vafa:1994tf,Blau:1996bx,Baulieu:1997jx,Acharya:1997jn,Baulieu:1997nj,Kapustin:2006pk,Elliott:2020ecf}.

For metrics which admit a parallel spinor field, one can single out a global scalar supercharge for the theory without the need to modify the Lorentz symmetry or energy-momentum tensor~\cite{Acharya:1997gp,Blau:1997pp}, 
though one still needs to add the scalar supercharge to the BRST operator of the theory (for example, in order that the Yang-Mills action becomes a topological term up to BRST exact pieces). 
The parallel spinor field is equivalent to a torsion-free $G$-structure on the tangent bundle, with a structure group that preserves the spinor. 
In other cases, one could have a $G$-structure which admits a line of singlet spinors which are charged under some $U(1)$ subgroup, and then one must perform a twist of the theory to obtain a scalar supercharge. 
Either way, such $G$-structures play an important role in the study of twisted theories. 
In gauge theories, the physical states which are kept in the twisted theories are then the various types of instanton configurations defined with respect to these $G$-structures,
and the quantum path integrals localise onto these states. 
These instantons solve first-order equations which imply the second order Yang-Mills field equations, as explored and classified in~\cite{Corrigan:1982th,Ward:1983zm}. 
There have been many previous studies of the geometry of these instanton configurations and their moduli spaces~\cite{ATIYAH1978185,Baulieu:1997jx,Acharya:1996tw,Dorey:2002ik}, their coupling to gravity in the context of heterotic supergravity~\cite{delaOssa:2016ivz,Bunk:2023ojb,Silva:2024fvl}, and their relation to D-branes in type II strings \cite{Bianchi:1998nk,Douglas:1995bn}. 

Recently, it was proposed how to twist supergravity theories~\cite{Costello:2016mgj} using this latter strategy, enabling also the study of holographic duality to the twisted gauge theories~\cite{Costello:2018zrm} as a toy model of the AdS/CFT correspondence~\cite{Maldacena:1997re}. In the supergravity context, the twist is performed by working in the BV formalism and giving a vacuum expectation value to the bosonic ghost field corresponding to one of the supersymmetries. For supergravity theories coupled to super Yang-Mills multiplets, this procedure leads to the twist of the Yang-Mills sector in agreement with the previous constructions. 
Much work has been done to construct twisted supergravity theories~\cite{Eager:2021ufo,Saberi:2021weg,Costello:2021kiv,Raghavendran:2021qbh,Hahner:2023kts}, much of it using the pure-spinor superfield formalism (see e.g.~\cite{Cederwall:2013vba} and references therein) and recent developments of it~\cite{Eager:2021wpi,Cederwall:2023wxc}. 

Ubiquitous in these constructions is a differential complex, properties of which provide information about the instanton states as well as 1-loop corrections.
In this paper, we show how one can construct these complexes associated to instantons and their gravitational analogues via a seemingly universal algebraic procedure whose main ingredient is a (generalised) $G$-structure on the manifold. 
We shall refer to these complexes as the \complexNoSpace.
One of the cohomologies of the \complex computes the infinitesimal moduli of the instantonic configuration. 
We go on to show how \complexes can be used in various ways to construct Chern--Simons-type actions associated to instantonic states and explore how their relation with spinors and supersymmetry provides additional relations between the various differential operators acting on them. 

A precursor of central importance for the present work is the complex defined by Carri\'on~\cite{ReyesCarrion:1998si}. It is associated to a $G$-structure (which must satisfy relevant constraints on its intrinsic torsion) such that the first cohomology group gives the infinitesimal moduli space of the relevant instanton configurations in gauge theories. In this article, we show how to construct such complexes not just for gauge theory instantons, but for instanton states associated to $G$-structures quite generally. The gravitational versions of such instantons are the torsion-free $G$-structures themselves while in supergravity they are general supersymmetric Minkowksi backgrounds (including fluxes). 
In the cases of heterotic geometries with flux, our construction reproduces the complexes found in earlier systematic constructions of the infinitesimal moduli~\cite{delaOssa:2014cia,delaOssa:2017pqy} after redefining fields (see also~\cite{Anderson:2014xha,Garcia-Fernandez:2015hja}). Indeed, this shows very concretely how the bundles denoted $Q$ in those references, 
and which also underlie the algebroids discussed in~\cite{Anderson:2014xha,Garcia-Fernandez:2018ypt,Garcia-Fernandez:2020awc}, arise naturally in generalised geometry.

Generalised geometry~\cite{Hitchin_2003,Gualtieri-thesis} and the resulting formulation of supergravity~\cite{CSW1} and supersymmetric backgrounds~\cite{Grana:2004bg,Grana:2005sn,CSW4,Coimbra:2016ydd} is a natural language in which to consider general supersymmetric solutions, and it will be a central part of our construction here. In this picture, the conditions for supersymmetry become equivalent to the existence of a torsion-free generalised $G$-structure~\cite{CSW4,Coimbra:2016ydd} (or singlet torsion in the AdS case~\cite{Coimbra:2015nha,Ashmore:2016qvs,Coimbra:2017fqv}) and this formalism has been used, for example, to make general statements about the moduli spaces of such backgrounds with no assumptions about the nature of the fluxes~\cite{Ashmore:2019qii,Ashmore:2019rkx,Tennyson:2021qwl,Smith:2022baw}, as well as many other developments. Here, this reformulation enables us to define the \complex associated to a supersymmetric flux geometry as the direct analogue of the complex we define for torsion-free $G$-structures in ordinary Riemannian geometry. 
In this work, for simplicity we restrict attention to $O(d,d)\times\bbR^+$ generalised geometry for the NS-NS sector of type II theories as described in~\cite{CSW1} and the $O(d,d+n)\times\bbR^+$ generalised geometry for heterotic supergravity~\cite{Garcia-Fernandez:2013gja,Coimbra:2014qaa}.\footnote{In this work we will take the connection on the tangent bundle which appears in the Riemann tensor squared term in the Bianchi identity to be an additional spurious degree of freedom, as in e.g.~\cite{delaOssa:2014cia,Ashmore:2019rkx}. See~\cite{McOrist:2021dnd,McOrist:2024zdz} for an interesting approach to resolving this issue.}
In these geometries, the decomposition of the generalised tangent bundle into positive and negative sub-bundles under the generalised metric enables one to associate target space geometric features with the left- and right-moving sectors of the string worldsheets. 
This in fact gives our complex the structure of a double complex and we will see that it behaves analogously to the double complex of $(p,q)$-forms on a K\"ahler manifold. 
The $\SU(3)\times\SU(3)$, $G_2 \times G_2$ and $\Spin(7) \times \Spin(7)$ cases were previously described in detail in~\cite{Ashmore:2021pdm}.

For particular $G$-structures, the \complex obtains extra structure, and we call these \emph{spinor-type} complexes. 
These appear already in examples of the gauge theory \complex of Carri\'on. It can happen that the sum of terms appearing in the even and odd parts of the \complex form the decomposition of spinor representations of the orthogonal group under the structure group $G$. 
The \complex thus becomes isomorphic to pair of spinors, which together form a representation of the Clifford algebra. 
The key point is that under this isomorphism, the Dirac operator is related to a combination of the differential and its adjoint. 
In these cases, one finds a general proof that the Laplacian for the \complex is proportional to the de Rham Laplacian acting on the forms. 
This provides a general structure to case-by-case computations which have appeared in other sources~\cite{2003math......5124B,Ashmore:2021pdm}. 

The notion of spinor type complex also appears in $O(d,d)$ generalised geometry, where the weighted spinor bundle associated to the generalised tangent space is isomorphic to the polyforms on the manifold~\cite{Hitchin_2003,Gualtieri-thesis,CSW1}, with the $O(d,d)$ generalised Dirac operator becoming the exterior derivative twisted by the three-form field strength $H$. Previous works~\cite{Gualtieri-thesis,Cavalcanti:2012fr,goto2005deformations} have studied the decomposition of this complex which occurs when one has generalised $G$-structures of various types and its Hodge theory. While in the generalised K\"ahler case we reproduce the complex of \cite{Cavalcanti:2012fr}, we emphasise that our construction is different to what is done in those works, allowing for complexes which do not have a generalised spinor interpretation. 
Further, even in the generalised K\"ahler case we use a different grading on the double complex according to the exterior powers of the sub-bundles $C_+$ and $C_-$ of the generalised tangent space in which the terms occur. This is crucial for matching the elements of the complex against worldsheet features of string theory. Rather than viewing the $O(d,d)$ Dirac operator $(\dd + H \wedge)$ as a differential, in our construction we view it as a sum of left- and right-moving Dirac operators. That its square is zero corresponds to the equality of the left- and right-moving Laplacians $\Delta_+$ and $\Delta_-$, which is related to level-matching on the string. Coupled with the K\"ahler identities mentioned above, this equality also ensures that we have an analogue of the $\der\bar\der$-lemma.  

Note that in the context of supersymmetric backgrounds, the isomorphisms of the \complexesNoSpace, which provide the deformation complexes of the generalised $G$-structure associated to the NSNS fields, with polyforms that could be thought of as RR fields, is a manifestation of the spectral flow on the worldsheet from a spacetime perspective. This identification of NSNS and RR degrees of freedom is special to $\cN=2$ backgrounds and explains why one can naturally describe the supersymmetry of NSNS backgrounds in terms of generalised pure spinors~\cite{Grana:2004bg,Grana:2005sn,Claus_Jeschek_2005}, which one would more naturally associate with RR degrees of freedom. 
In this sense, the description of the \complex that we provide here, thought of as a quotient of $\Lambda^\bullet(E)$, puts the discussion firmly back into the sphere of NSNS objects, and also applies to general structure groups, rather than restricting to the spinor type cases. 

As well as providing the infinitesimal moduli spaces of solutions, a principle application of our formalism lies in the systematic construction of field theories in which the classical states are instantons. In many interesting cases, the \complexes we define have a symplectic inner product, such that one can directly write down a quadratic BV action for a field theory from them. For certain examples of Carri\'on's gauge theory \complexesNoSpace, this reproduces the quadratic part of the actions for Chern--Simons theory and its various higher dimensional generalisations~\cite{thomas1997gauge,Donaldson:1996kp,Earp:thesis,Costello:2013zra} which have been of great interest to physicists and geometers alike. 
For the complexes arising from supergravity backgrounds, we are able to reconstruct the $\SU(3)$ and $G_2$ heterotic superpotential functionals to quadratic order. Indeed, via this construction, there will be a heterotic analogue of each of the aforementioned gauge theories. It was also shown in previous work~\cite{Ashmore:2021pdm} how the $\SU(3)\times\SU(3)$, $G_2 \times G_2$ and $\Spin(7) \times \Spin(7)$ complexes for type II geometry gave target space descriptions of the worldsheet BRST complexes of the corresponding (quasi-)topological string theories~\cite{Witten:1988xj,deBoer:2005pt}. This enables one to write target space quadratic actions (where the relevant symplectic pairing exists) whose quantisation gives the correct one-loop terms~\cite{Pestun:2005rp,Ashmore:2021pdm}.

This paper is structured as follows. In section~\ref{sec:Carrion}, we review Carri\'on's construction and go on to introduce spinor type complexes, prove general results about their Hodge theory and Laplacians, discuss the definition of instantons and the relation to BPS states and comment on product manifolds. 
Section~\ref{sec:BV} contains a series of examples of quadratic BV actions which result from the \complexes of section~\ref{sec:Carrion}. 
We generalise the construction to gravitational instantons (i.e.\ torsion-free $G$-structures) in section~\ref{sec:grav}. 
Section~\ref{sec:gen-Dolbeault} provides the further generalisation to generalised $G$-structures, which in relevant cases correspond to generic supersymmetric Minkowksi backgrounds of type II and heterotic supergravity with NSNS and gauge field fluxes. We also derive the K\"ahler identities for $G_+ \times G_-$ structures, and introduce spinor type complexes in the type II cases, for which we prove the equality of the Laplacians $\Delta_\pm$. 
Section~\ref{sec:applications} contains some examples of applications of the formalism: infinitesimal moduli spaces of flux backgrounds, topological string theories and heterotic superpotential functionals. 

\section{Carri\'on's instanton complex}
\label{sec:Carrion}

In this section, we review Carri\'on's construction of a complex associated to gauge theory instantons on a manifold with a torsion-free $G$-structure.\footnote{One can in fact relax the torsion-free requirement to allow for some intrinsic torsion in some cases, but for simplicity we will restrict to considering the case of torsion-free structures here.} These are the \complexes for gauge theories in our terminology. We go on to define what we call ``spinor-type" examples of these complexes, for which we are able to provide a general proof that their Laplacians are proportional to the usual de Rham Laplacians acting on forms. We also demonstrate that on product manifolds with product metrics, the \complex becomes the tensor product of the \complexes on the factors, such that the product of two spinor type examples remains spinor type. Finally, in cases where the \complex has a symplectic inner product of the right degree, one can see it as the BV complex of a field theory and we discuss several examples of theories which arise in this way in section~\ref{sec:BV}. 

\subsection{Colour-stripped \complexNoSpace}\label{sec:basic_carrion}

Let us recall how Carri\'on~\cite{ReyesCarrion:1998si} defines a complex $(\cx^\bullet, \check\dd)$ on a real manifold $M$ of dimension $d$ with a $G$-structure, for $G \subset \SO(d)$. In fact, to begin with, let us examine a ``colour-stripped" version, in which one considers only differential forms, rather than the endomorphism valued forms considered in~\cite{ReyesCarrion:1998si} which we will address later in section~\ref{sec:instanton}. By a slight abuse of notation, we will write $\mf g$ both for the Lie algebra of $G$ and the induced subbundle of $\Lambda^2T^*:=\Lambda^2T^*M\cong \mf{so}(TM)$. At first we consider the case with trivial gauge bundle for simplicity. One defines the bundle $\bx^k$ to be the quotient of the bundle of $k$-forms
\begin{equation}\Lambda^k T^*\end{equation} 
by the subbundle 
\begin{equation}\by^k:=\mf g\wedge \Lambda^{k-2} T^*.\end{equation}
Sections of $\bx^k$ and $\by^k$ will be denoted $\cx^k$ and $\cy^k$, respectively. The differential on the \complex $\cx^\bullet$ is then defined to be $\check{\dd} = \mathcal{P} \circ \dd$ where $\mathcal{P}$ is the projector onto the representation appearing in $\cx^\bullet$.

For a general $G$-structure, one would not find that $\check \dd^2=0$. However, it is easy to see that a sufficient condition for this to hold is that the image of $\dd$ restricted to $\cy^2=\Gamma(\mathfrak{g}) \subset \Gamma(\Lambda^2 T^*)=\Omega^2$ lies inside $\cy^3$. To see this, note that any element $\beta\in\cy$ can be written as a sum of terms
\begin{equation}
\begin{aligned}
	\beta = \sum_i \beta_i \wedge \lambda_i
\end{aligned}
\end{equation}
with $\beta_i \in \Gamma(\mathfrak{g})$, and so if $\dd(\cy^2)\subset\cy^3$ then
\begin{equation}
\begin{aligned}
	\dd \beta = \sum_i (\dd\beta_i) \wedge \lambda_i +  \sum_i \beta_i \wedge (\dd \lambda_i)
\end{aligned}
\end{equation}
also lies in $\cy$. This means that $\dd$ preserves $\cy$, which is then a differential ideal. To see that then $\check{\dd}^2 = 0$, note that given $\alpha \in \cx^k$, one has 
\begin{equation}
\begin{aligned}
	\dd \alpha = \check{\dd} \alpha + \eta
\end{aligned}
\end{equation}
with $\eta \in \cy$. Thus
\begin{equation}
\label{eq:check-dd-squared}
\begin{aligned}
	0 = \dd^2 \alpha = (\check{\dd}^2 \alpha + \eta') + \dd\eta
\end{aligned}
\end{equation}
where $\eta'$ and $\dd\eta$ are in $\cy$. This implies that both $\check{\dd}^2 \alpha= 0$ and $\eta' + \dd\eta = 0$ as $\Lambda^kT^*$ is the direct sum of $\bx^k$ and $\by^k$. We have essentially just shown that $(\cy, \dd)$ is a sub-complex and taken the quotient of $(\Omega^\bullet, \dd)$ by it to construct $(\cx,\check{d})$.

The condition that $\dd$ sends $\cy^2$ to $\cy^3$ can be viewed as an intrinsic torsion condition as follows. Let $\hat{\LC} = \LC + \Sigma$ be a $G$-compatible connection, which is not necessarily torsion-free. Here $\LC$ is the Levi-Civita connection and $\Sigma \in \Gamma( T^* \otimes (T \otimes T^*))$ is a generic tensor. As $\hat{\LC}$ is $G$-compatible, it preserves $\mathfrak{g}$ representations, and thus for any $\beta \in \cy^2 \subset \Omega^2$ one has
\begin{equation}
\begin{aligned}
	\hat{\LC} \beta \in \Gamma(T^* \otimes \mathfrak{g})
	\qquad \text{and so}
	\qquad \dd_{\hat{\LC}} \beta \in \Gamma(\mathfrak{g} \wedge T^*)=\cy^3.
\end{aligned}
\end{equation}
Therefore, we have
\begin{equation}
\begin{aligned}
	\dd \beta = \dd_\LC \beta = \dd_{\hat{\LC}} \beta - \dd_\Sigma \beta 
\end{aligned}
\end{equation}
Thus the condition that $\dd\beta \in \cy^3$ is equivalent to $\dd_\Sigma \beta \in \cy^3$. This is simply a condition on which $G$-irreducible parts of $\Sigma$ are allowed. However, any $G$-compatible part of $\Sigma$ can be absorbed into $\hat{\LC}$ and $\dd_\Sigma$ depends only on the torsion of $\Sigma$ (i.e.\ $\Sigma_{[m}{}^n{}_{p]}$) so in fact this is a constraint on the intrinsic torsion of the structure.


\subsection{Examples}

We start with the two trivial examples. Firstly, one can consider the case $G=\SO(d)$, in which case all two-forms are in the span of the Lie algebra at each point and are thus projected out, leaving
\begin{equation}
\label{eq:trivial-ord-complex}
\begin{aligned}
	0 \ra \Omega^{0} \stackrel{\dd}{\ra} \Omega^{1} 
	{\ra}  0
\end{aligned}
\end{equation}

The other trivial case is that of an identity structure $G=\{ \id \}$, in which case the quotient does nothing and the \complex is simply the de Rham complex.

As we will elaborate on further in section~\ref{sec:instanton}, one of the initial motivations for the construction was the observation that taking structure group $\SU(2)_L \subset \SU(2)_L \times \SU(2)_R \simeq \Spin(4)$ in four dimensions, one finds the complex 
\begin{equation}
\label{eq:4d-Carrion}
\begin{aligned}
	0 \ra \Omega^{0} \stackrel{\dd}{\ra} \Omega^{1} \stackrel{\mc{P}_+\dd}{\ra} \Omega^{2,+} 
	{\ra}  \,0
\end{aligned}
\end{equation}
where $\Omega^{2,+}$ is the self-dual two-forms and $\mc{P}_+$ is the projector $\Omega^{2} \ra \Omega^{2,+}$. A one-form $\alpha$ is thus $\check{d}$-closed if $\dd \alpha$ is anti-self-dual. 

Next, in any even dimension $d=2N$ one can consider $\mathfrak{u}(N) \subset \mathfrak{so}(d)$. We have a preserved complex structure $J$ with respect to which $\Gamma(\mathfrak{u}(N)) \simeq \Omega^{1,1}$, so that the quotient procedure gives
\begin{equation}
\label{eq:Dolbeault-complex-full}
\begin{aligned}
	0 \ra \Omega^{0,0} \stackrel{\der + \bar\der}{\ra} \Omega^{1,0} \oplus \Omega^{0,1}  
	\stackrel{\der + \bar\der}{\ra}  \Omega^{2,0} \oplus \Omega^{0,2}  
	\stackrel{\der + \bar\der}{\ra} \Omega^{3,0} \oplus \Omega^{0,3}   \stackrel{}{\ra} \dots
\end{aligned}
\end{equation}
One can also consider the subtly different complex which comes from taking $\mathfrak{su}(N) \subset \mathfrak{so}(d)$, which is the direct higher-dimensional analogue of~\eqref{eq:4d-Carrion}. This is the same as~\eqref{eq:Dolbeault-complex-full} aside from an extra piece in degree two, corresponding to forms proportional to the K\"ahler form $\omega$:
\begin{equation}
\label{eq:SUN-complex}
\begin{aligned}
	0 \ra \Omega^{0,0} \stackrel{\der + \bar\der}{\ra} \Omega^{1,0} \oplus \Omega^{0,1}  
	\stackrel{\der + \bar\der}{\ra}  \Omega^{2,0} \oplus \langle \omega \rangle \oplus \Omega^{0,2} 
	\stackrel{\der + \bar\der}{\ra} \Omega^{3,0} \oplus \Omega^{0,3}   \stackrel{}{\ra} \dots
\end{aligned}
\end{equation}
In the definition of an instanton that we describe in section~\ref{sec:instanton} below, this extra piece gives an extra constraint on the curvature of our gauge connection which imposes the full hermitian Yang-Mills equations rather than merely the condition that the gauge bundle is holomorphic.

Aside from the degree zero piece,~\eqref{eq:Dolbeault-complex-full} is clearly the induced real space of a sum of two complexes of complex forms: the usual Dolbeault complexes $(\Omega^{\bullet,0}, \der)$ and $(\Omega^{0,\bullet}, \bar\der)$. In what follows, we will complexify the scalar and tend to focus on the anti-holomorphic half
\begin{equation}
\label{eq:Dolbeault-complex-anti}
\begin{aligned}
	0 \ra \Omega^{0,0} \stackrel{\bar\der}{\ra} \Omega^{0,1}  
	\stackrel{\bar\der}{\ra} \Omega^{0,2}  
	\stackrel{\bar\der}{\ra} \Omega^{0,3}    
		\stackrel{}{\ra} \dots \stackrel{\bar\der}{\ra} \Omega^{0,N} \ra 0
\end{aligned}
\end{equation}
The complexification of the scalar is important as it is the generator of gauge transformations in what follows, and in applications of this complex to moduli spaces, results in geometric invariant theory suggest that the moduli space including the D-term supersymmetry conditions should be the solutions of the F-terms modulo complexified gauge transformations (see e.g.~\cite{Baulieu:1997jx,Ashmore:2019qii} for discussions of this phenomenon). In the applications that we consider, the $U(N)$ complex~\eqref{eq:Dolbeault-complex-full} will be living on a Calabi-Yau manifold where there is in fact a further reduction of the structure group to $\SU(N)$, and this $\SU(N)$ structure is also torsion-free. 
The additional degree two piece of the complex that is present in~\eqref{eq:SUN-complex} versus~\eqref{eq:Dolbeault-complex-full} would impose the hermitian Yang-Mills condition on our instanton field strength, which can be thought of as a D-term condition. By working with the $U(N)$ complex and complexifying the scalar, we thus account for these additional conditions in a simple fashion when considering moduli. 
Thus in the remainder of this article, we will take~\eqref{eq:Dolbeault-complex-anti} (or very occasionally its complex conjugate) when considering $U(N) \subset \SO(2N)$ structures.

For other cases, we use the notation that $\Omega^k_{\rep{r}}$ denotes $k$-forms which are sections of the sub-bundle of $\Lambda^k T^*M$ transforming in the representation $\rep{r}$ of the structure group $G$.

In seven dimensions, one can consider torsion-free $G_2$ structures (i.e.\ metrics with Riemannian holonomy $G_2$), with $\mf g_2\subset \mf{so}(7)$. This gives a \complexNoSpace:
\begin{equation}
\label{eq:g2-ord-complex}
\begin{aligned}
	0 \ra \Omega^{0} \stackrel{\dd}{\ra} \Omega^{1} 
	\stackrel{\check\dd}{\ra}  \Omega^{2}_{\rep{7}}
	\stackrel{\check\dd}{\ra} \Omega^{3}_{\rep{1}}   \stackrel{}{\ra} 0
\end{aligned}
\end{equation}
Similarly, for $\Spin(7)$ in eight dimensions the \complex is:
\begin{equation}
\label{eq:spin7-ord-complex}
\begin{aligned}
	0 \ra \Omega^{0} \stackrel{\dd}{\ra} \Omega^{1} 
	\stackrel{\check\dd}{\ra}  \Omega^{2}_{\rep{7}}
	\stackrel{}{\ra} 0
\end{aligned}
\end{equation}
%


\subsection{Spinor type complexes and Hodge theory}
\label{sec:spinor-type}

In the construction above, we assumed that the structure group is a subgroup of $\SO(d)$ and thus in particular induces a Riemannian metric on $M$, and consequently also an inner product on $\Lambda^k T^*$. We can then identify $\bx$ with the orthogonal complement of $\by$, i.e.\ we have an orthogonal decomposition 
\begin{equation}
\begin{aligned}
	\Lambda^k T^*  = \bx^k \oplus \by^k  = \bx^k \oplus (\mf{g} \wedge \Lambda^{k-2}T^*) \ .
\end{aligned}
\end{equation}
Using the induced positive inner product on $\bx^\bullet$ one can define the adjoint $\check\dd^\dagger$ of the differential $\check\dd$. Armed with this, one can then define the Laplacian operator 
\begin{equation}
\begin{aligned}
	\check\Delta = \{ \check\dd , \check\dd^\dagger \}
\end{aligned}
\end{equation}
and find a Hodge decomposition of each space $\cx^k$ in the usual way\footnote{We ignore finer analytic details concerning the completeness of the inner product in making this statement here.} 
\begin{equation}
\begin{aligned}
	\cx^k = \im \check\dd \oplus \im \check\dd^\dagger \oplus H^k_{\check\dd} \ .
\end{aligned}
\end{equation}

In fact, it is often convenient to introduce conventional numerical factors into the definitions of the inner product and the adjoint so that these operations remain compatible with isomorphisms between the different spaces $\cx^k$ in the complex. We do not wish to go into details of this here, but refer the reader to~\cite{Ashmore:2021pdm} for explicit details in the cases of $G_2$ in seven dimensions and $\Spin(7)$ in eight dimensions. There it is seen that the convention choices result in, for example, Laplacian operators that depend only on the representation of the structure group and not the degree of form in which the representation appears.

One can observe that, in some interesting cases, the vector bundle underlying the \complex is isomorphic to a pair of spinor bundles $\bx^\bullet \simeq S_+ \oplus S_-$. We will refer to these cases as \complexes of ``spinor type". Here, in even dimensions the summands $S^{\pm}$ are the spinor bundles of positive and negative chirality, while in odd dimensions they are simply two isomorphic copies of the spinor bundle. 
Overall, they form a representation of the Clifford algebra, and thus there is an action of the Dirac operator on them, which is crucial for what follows below. 
One can also see that they correspond to the odd and even degree forms in the \complex respectively and thus transform oppositely under parity so that they form a pinor overall. 
For spinor type complexes, the inner products used to define the adjoint operator and Laplacian are naturally the relevant spinor inner products, written in terms of forms, and it is the representation theory of spinors that leads to the seemingly strange numerical factor choice arising in the treatment of e.g.~\cite{Ashmore:2021pdm}. Further, one has that a Dirac-type operator 
\begin{equation}
\label{eq:Dirac-complex}
\Dirac = a\check\dd + b\check\dd^\dagger
\end{equation}
corresponds to the usual spinor Dirac operator $\slashed{\LC}$ acting on the pair of spinors.\footnote{This is also noted in~\cite{Baulieu:1997jx}.} The coefficients $a$ and $b$ depend on the particular details of the case in question, but the square of the operator is always proportional to $\check\Delta$
\begin{equation}
\label{eq:Dirac-complex_square}
\Dirac^2 = ab\{ \check\dd ,\check\dd^\dagger \} = ab \check\Delta
\end{equation}

For example, for the $U(3)$ complex in six-dimensions, we have 
\begin{equation}
	0 \ra \Omega_\rep{1} \ra \Omega_\rep{3} \ra \Omega_\rep{\bar3} \ra \Omega_\rep{1} \ra 0
\end{equation}
where we have in fact decomposed into $\SU(3)$ representations as we assume that the metric has a torsion-free $\SU(3)$ structure as discussed above. The even parts are thus $\rep{1} + \rep{\bar3} \simeq \rep{\bar4}$ and odd parts are $\rep{1} + \rep{3} \simeq \rep{4}$. This is the usual isomorphism between $\Omega^{0,\text{(even/odd)}}$ and $S^{\pm}$ on a Calabi-Yau manifold given by 
\begin{equation}
	\omega_{\ba_1 \dots \ba_k} 
	\lra \omega_{\ba_1 \dots \ba_k} \gamma^{\ba_1 \dots \ba_k} \epsilon
\end{equation}
where $\epsilon$ is the parallel singlet spinor. It is then a simple calculation to see that the Dirac operator $\slashed{\LC}$ acts as $\bar\der + 2\bar\der^\dagger$ on the corresponding $(0,k)$-forms. This then squares to $2\Delta_{\bar{\der}} = \Delta_{\rm{dR}}$.

Another example of a spinor type complex comes from the $G_2$ case in seven-dimensions where we have the \complex representations 
\begin{equation}
	0 \ra \Omega_\rep{1} \ra \Omega_\rep{7} \ra \Omega_\rep{7} \ra \Omega_\rep{1} \ra 0
\end{equation}
so that again the even and odd parts become spinors:
\begin{equation}
\begin{aligned}
	   \gTwoEvenSpinor{\omega} &= \omega_0 \gTwoInvariantSpinor + \tfrac12 \omega_{mn} \gamma^{mn} \gTwoInvariantSpinor \\
	   \gTwoOddSpinor{\omega} &= \omega_m \gamma^m \gTwoInvariantSpinor + \tfrac1{3!} \tilde\omega_3 \gTwoForm_{mnp} \gamma^{mnp} \gTwoInvariantSpinor
\end{aligned}
\end{equation}
where $\gTwoInvariantSpinor$ is the parallel singlet spinor and the singlet 3-form is $\omega_{mnp} = \tilde\omega_3 \gTwoForm_{mnp}$ for a scalar $\omega_3$ and $G_2$ three-form $\gTwoForm$.

Following the conventions of \cite{Coimbra:2016ydd}, we first note that we can choose $\gTwoInvariantSpinor$ Majorana, and that the $\gamma$-matrices are purely imaginary. Thus, it is natural to form a complex spinor with real part $\psi$ and imaginary part $\chi$ as
\begin{align}
    \gTwoComplexSpinor(\omega) \coloneqq \gTwoEvenSpinor{\omega} + \gTwoOddSpinor{\omega}.
\end{align}
We can then express the spinor Dirac inner product via form inner products, and find:
\begin{align} \label{eq:spinor_inner_product}
\begin{split}
    \diracBar{\gTwoComplexSpinor(\omega)} \gTwoComplexSpinor(\lambda) = &\formInnerProduct{\omega_0}{\lambda_0} +  \formInnerProduct{\omega_1}{\lambda_1} + 3 \cdot \formInnerProduct{\omega_2}{\lambda_2} + 7 \cdot \formInnerProduct{\omega_3}{\lambda_3} \\ 
    &+ i \left(\formInnerProduct{\omega_0 \gTwoForm}{ \lambda_3} - \formInnerProduct{\omega_3}{\lambda_0 \gTwoForm} + \formInnerProduct{\omega_1 \contraction \gTwoForm}{ \lambda_2} - \formInnerProduct{\omega_2}{ \lambda_1 \contraction \gTwoForm}\right)
\end{split}
\end{align}    
where we used the standard metric induced inner product on forms $\formInnerProduct{\omega_p}{\lambda_p} = \frac{1}{p!} \omega_{m_1 \dots m_p} \lambda^{m_1 \dots m_p}$. We note that the factors in front of the two- and three-form inner product arise naturally through $\gamma$-matrix contractions, matching the ones imposed in \cite{Coimbra:2016ydd}, and thus providing an explanation for their appearance in the $G_2$ complex. We thus take the, positive definite, real part of the inner product eq. \eqref{eq:spinor_inner_product} to define the adjoint operators of $\dc$.

Since the Dirac operator $\diracOperator$ maps even to odd spinors and vice versa, we make the ansatz
\begin{align}
    \diracOperator \gTwoEvenSpinor{\omega} &= \gTwoOddSpinor{(\Dirac\omega)}, \\
    \diracOperator \gTwoOddSpinor{\omega} &= \gTwoEvenSpinor{(\Dirac\omega)},
\end{align}
for some differential operator $\Dirac$ acting on the forms.

Using the inner product formula \eqref{eq:spinor_inner_product}, we deduce the action of $\Dirac$ on the even spinor to be 
\begin{align}
    (\Dirac\omega)_m &= \LC_m \omega_0 + 3 \LC^{p} \omega_{pm} , \label{eq:D_omega_zero}\\
    (\Dirac\omega)_{mnp} &=  ( \gTwoProjector{3}{1} \covariantExteriorDerivative 
    	\omega_{2})_{mnp},
\end{align}
for $\gTwoProjector{3}{1}$ the projector onto the singlet part of the three-form, while the action on odd spinors is given by 
\begin{align}
    (\Dirac \omega)_0 &= \LC^m \omega_m, \\
    (\Dirac \omega)_{mn} &= (\gTwoProjector{2}{7} \covariantExteriorDerivative \omega_1)_{mn} + \frac{7}{3} \LC^p\omega_{p mn} \label{eq:D_omega_three} \ ,
\end{align}
for $\gTwoProjector{2}{7}$ the projector onto the $\rep{7}$ part of the two-form. Note, that not only do we find the $\dc \equiv \gTwoProjector{}{} \circ \covariantExteriorDerivative$ action naturally, but also the correct factors for the adjoint, as found in \cite{Coimbra:2016ydd}. Thus we can identify
\begin{align}
    (\Dirac\omega)_1 &= \dc \omega_0  -\dcDagger \omega_2, \\
    (\Dirac\omega)_3 &= \dc \omega_2, \\
    (\Dirac\omega)_0 &= -\dcDagger \omega_1, \\
    (\Dirac\omega)_2 &= \dc \omega_1 -\dcDagger \omega_3. 
\end{align}
Consequently, we have shown the $G_2$ complex to be of spinor type and that the Dirac operator corresponds to $\Dirac = \check{\dd} - \check{\dd}^\dagger$ so that
\begin{equation}
	 \diracOperator \gTwoComplexSpinor(\omega)
	= \gTwoComplexSpinor \big( (\check{\dd} - \check{\dd}^\dagger) \omega \big) 
\end{equation}

We note that where the complex becomes spinor type, there must be a singlet spinor $\epsilon$ corresponding to the scalar $\Omega^0$ and (at least in the $\SU(N)$, $G_2$ and $\Spin(7)$ cases) the vector representation does not feature a singlet of the structure group, so that we must have $\bar\epsilon \gamma^m \epsilon = 0$ and thus $\epsilon$ is a pure spinor. 
This suggests possible connections to the pure spinor superfield formalism~\cite{Nilsson:1985cm,Howe:1991bx,Berkovits:2000fe,Berkovits:2001rb,Cederwall:2001dx,Cederwall:2010tn}, which has been a topic of particular interest recently~\cite{Saberi:2021weg,Eager:2021wpi,Cederwall:2023wxc}. 

Using the spinor presentation, one can see that the Laplacian $\check\Delta$ will always be proportional to the de Rham Laplacian acting on the corresponding form if the metric is Ricci flat as follows.
The Laplacian $\check\Delta = (\check{\dd} + \check{\dd}^\dagger)^2$ is proportional to the square of the Dirac operator in the spinor formulation which a standard calculation expands as
\begin{equation}
	\slashed{\LC}^2 \psi = (\LC^m \LC_m - \tfrac14 R) \psi \ ,
\end{equation}
where $R$ is the Ricci scalar. For a spinor $\psi = \slashed{\omega} \epsilon$ with $\LC_m \epsilon = 0$, the operator on the RHS acts as a scalar operator and as $\LC \epsilon = 0$ its expression transfers directly across to the resulting form
\begin{equation}
	(\check\Delta \omega)_{m_1 \dots m_k} 
		\propto (\LC^m \LC_m - \tfrac14 R) \omega_{m_1 \dots m_k}
\end{equation}
Next, we expand the de Rham Laplacian acting on a generic form $\omega$ as
\begin{equation}
\begin{aligned}
	(\Delta \omega)_{m_1 \dots m_k} 
		& = -\LC^2 \omega_{m_1 \dots m_k}  
			+ k [\LC^p, \LC_{[m_1}] \omega_{|p|m_2\dots m_k]} \\
		&= -\LC^2 \omega_{m_1 \dots m_k} 
			+ k R^p{}_{[m_1}\omega_{|p|m_2 \dots m_k]} 
		 - k(k-1) R^p{}_{[m_1}{}^q{}_{m_2} \omega_{|pq|m_3 \dots m_k]} \\
		&= -\LC^2 \omega_{m_1 \dots m_k} 
			+ k R^p{}_{[m_1}\omega_{|p|m_2 \dots m_k]}
		 - \tfrac12 k(k-1) R_{[m_1m_2}{}^{pq} \omega_{|pq|m_3 \dots m_k]} 
\end{aligned}
\end{equation}
where we have used the algebraic Bianchi identity. However, as $R \in \Gamma(\mf{g} \otimes \mf{g})$ and $\cx$ was defined to be orthogonal to $\cy$, we have
\begin{equation}
	R_{mn}{}^{pq} \omega_{pq n_1 \dots} = 0
\end{equation}
for forms $\omega$ in $\cx$, so that for $\omega \in \cx$ the Riemann tensor term above vanishes. 
If the metric is Ricci flat, which for some structure groups $G$ is implied by the vanishing of the intrinsic torsion, we have $\check\Delta \propto \Delta \propto \LC^2$ for both operators acting on $\cx$.

This result implies that the harmonic forms, which are representatives of the cohomology, match those for the usual de Rham Laplacian in the relevant representation of the structure group. 
Note, however, that the Ricci flat condition is sufficient but not necessary here. For example, it is well-known that on all K\"ahler manifolds the Dolbeault Laplacian $\Delta_{\bar\der}$ is half the de Rham Laplacian and the cohomology groups also decompose by $(p,q)$ type.

\subsection{Bundle valued forms and instanton moduli}
\label{sec:instanton}

One can extend the \complex as defined above to one based on 
\begin{equation}
\Omega^\bullet(\End(V))=\Gamma(\Lambda^\bullet T^*\otimes \on{End}(V))
\end{equation} for $V$ a vector bundle with gauge group $K$, so long as one has a connection $A$ on it whose curvature $F \in \Omega^2(\End(V))$ is an instanton, by which we mean
\begin{equation}
\label{eq:gen-instanton}
	F \in \Gamma(\mf{g} \otimes \End(V)) \subset \Omega^2(\End(V))
\end{equation}
The point is that this means quantities of the form 
\begin{equation}
\label{eq:ddA-squared}
	\dd_A^2 \omega = (\dd + A \wedge)^2 \omega = F \wedge \omega \in \Gamma(\mf{g} \wedge \Lambda^\bullet T^*\otimes \End(V))
\end{equation}
are projected out when one takes the quotient by $\Gamma(\mf{g} \wedge \Lambda^\bullet T^*\otimes \End(V))$. Defining $\check\dd_A$ to be the composition of $\dd_A$ and projection onto the quotient $\cx^\bullet(\End V)$, the above arguments establishing that $(\cx^\bullet(\End V), \check\dd_A)$ is a complex go through exactly as before aside from the small modification that now~\eqref{eq:check-dd-squared} becomes
\begin{equation}
\label{eq:check-ddA-squared}
\begin{aligned}
	F \wedge \alpha = \dd_A^2 \alpha = (\check{\dd}_A^2 \alpha + \eta') + \dd_A \eta
\end{aligned}
\end{equation}
Due to~\eqref{eq:ddA-squared}, this still implies that $\check\dd_A^2 \alpha = 0$.

The physical motivation for defining these \complexes is that their cohomologies naturally capture the infinitesimal moduli spaces of generalised instanton configurations. These are defined to be field configurations (i.e.\ connections on $V$ up to global gauge transformations given infinitesimally by $\delta A = \dd_A \lambda$ for $\lambda \in \Omega^0(\End V)$) where~\eqref{eq:gen-instanton} holds. Given such a field configuration, one can consider an infinitesimal deformation of it $\delta A = \alpha \in \Omega^1(\End V)$. The induced infinitesimal change in the curvature is then $\delta F = \dd_A \alpha$, so the infinitesimal moduli space is~\cite{ReyesCarrion:1998si}
\begin{equation}
\label{eq:instanton-moduli}
\begin{aligned}
	\mc{M}_{\text{Instanton}} 
		= \frac{\{ \alpha : \check\dd_A \alpha = 0\}}{\{\alpha = \dd_A \lambda \}}
		= \frac{\ker (\check\dd : \cx^1 \ra \cx^2)}{\im (\check\dd : \cx^0 \ra \cx^1)} = H^1(\cx)
\end{aligned}
\end{equation}
One could also wonder why one defines a generalised instanton to be a configuration satisfying~\eqref{eq:gen-instanton}. Firstly, one notes for the four-dimensional case of~\eqref{eq:4d-Carrion} one recovers the usual condition of (anti-)self-dual curvature. As is now well-known, studying the moduli spaces of such objects led to the discovery of Donaldson invariants of four-manifolds~\cite{donaldson1997geometry}. More generally, a natural place where such configurations appear is in supersymmetric solutions to supersymmetric gauge theories on curved manifolds. Indeed, this setup exactly appears as the gauge sector part of the equations defining supersymmetric solutions of heterotic supergravity. 
In such theories, one can have a supersymmetry generated by a Killing spinor $\epsilon$ of the underlying supergravity background. Part of the supersymmetry conditions then become
\begin{equation}
\label{eq:instanton-susy}
\begin{aligned}
	\delta \chi \sim \gamma^{mn} F_{mn} \epsilon = 0
\end{aligned}
\end{equation}
where the spinor field $\chi$ is the fermionic superpartner of the gauge field. The spinor $\epsilon$ defines a $G$-structure on the manifold where $G$ is the stabiliser of $\epsilon$ and in the simplest case $\epsilon$ is parallel with respect to the Levi-Civita connection such that the $G$-structure is torsion-free. Equation~\eqref{eq:instanton-susy} then states that the two-form curvature $F$ lies in the Lie algebra $\mf{g}$ at each point, which is exactly our condition~\eqref{eq:gen-instanton} above.\footnote{Note again that for $G=\SU(N) \subset \SO(2N)$ the supersymmetry condition~\eqref{eq:instanton-susy} imposes both the holomorphic bundle condition $F^{0,2} = 0$ (F-term) and the hermitian Yang-Mills condition $\omega \inn F = 0$ (D-term). One expects that the infinitesimal moduli space of such configurations will be that of only the holomorphic bundle condition (F-term) moduli complexified gauge transformations. This will correspond to the cohomology of~\eqref{eq:Dolbeault-complex-anti} (with values in $\End(V)$), so we use this complex in these cases.} Note that, in the usual cases, such configurations satisfy a BPS bound, see~\cite{Acharya:1996tw} for more details.

\subsection{Product manifolds}\label{sec:product_manifolds}

Another complex one can construct using the above prescription, and one that will preempt some of the double complexes we will produce in later sections, is if the manifold has a product structure with a product metric. Suppose, for example, we have $M=X\times Y$ with $\dim X = m$, $\dim Y=n$, and metric $g_M = g_X + g_Y$. In this case, the structure group reduces
\begin{equation}\label{eq:product_structure}
    SO(d) \rightarrow SO(m)\times SO(n)
\end{equation}
and the de Rham complex decomposes into a double complex
\begin{equation}
    \ext^{k}T^{*}M \simeq \bigoplus_{k=p+q} \ext^{p}T^{*}X\otimes \ext^{q}T^{*}Y\ , 
\end{equation}
with
\begin{equation}
    \dd_{M} = \dd_{X} + \dd_{Y} \ , \quad \dd_{X}^{2} = \dd_{Y}^{2} = \dd_{X}\dd_{Y}+\dd_{Y} \dd_{X} = 0 \ .
\end{equation}

Suppose further that the manifolds $X,Y$ have a reduced structure group $G_{X} \subset SO(m)$, $G_{Y}\subset SO(n)$. Then it is easy to show that the \complex associated to $G$ is the tensor product of the \complexes associated to $G_{X}$ and $G_{Y}$. In the notation of section \ref{sec:basic_carrion}, we have
\begin{equation}\label{eq:product_manifold_complex}
    \begin{aligned}
        \bx^{k}_{M} &= \frac{\ext^{k}T^{*}M}{\mathfrak{g}\wedge\ext^{k-2}T^{*}M} \\
        & = \bigoplus_{p+q=k} \left( \frac{\ext^{p}T^{*}X}{\mathfrak{g}_{X}\wedge \ext^{p-2}T^{*}X } \right)\otimes \left( \frac{\ext^{q}T^{*}Y}{\mathfrak{g}_{Y}\wedge \ext^{q-2}T^{*}Y } \right) \\
        &= \bigoplus_{p+q=k} \bx^{p}_{X} \otimes \bx^{q}_{Y}
    \end{aligned}
\end{equation}
Furthermore, it easy to see that
\begin{equation}
    \check{\dd}_{M} = \check{\dd}_{X} + \check{\dd}_{Y} \ , \quad \check{\dd}_{X}^{2} = \check{\dd}_{Y}^{2} = \check{\dd}_{X}\check{\dd}_{Y}+\check{\dd}_{Y} \check{\dd}_{X} = 0 \ ,
\end{equation}
where $\check{\dd}_{X}$ is the differential associated to the \complex $\cx_{X}$ on $X$, and similarly for $\check{\dd}_{Y}$. Hence, the \complex decomposes into a double complex in this case.

It also follows from the representation theory of spinors that if $\cx_{X}$ and $\cx_{Y}$ are spinor complexes, then the total complex $\cx_{M}$ is also a spinor complex. Indeed, the tensor product of two pinor representations (over $\bbR$) gives an object on which one can act with the higher-dimensional Clifford algebra, which can be seen explicitly via similar gamma matrix decompositions to those found in appendix~\ref{app:Gamma}. 

\section{BV Chern--Simons theories}
\label{sec:BV}

In certain cases, the \complexes of section~\ref{sec:Carrion} occur with a natural graded symplectic pairing $\left< \cdot,\cdot \right>$ (or can be completed to give a natural pairing, see section \ref{sec:6d-superpotential}) which is compatible with the differential structure. When this happens, the \complex can provide the BV complex associated to some QFT. At the quadratic order, these are generally associated to generalisations of Chern--Simons theories. The general construction runs as follows. 

Suppose $p$ is such that $\langle \cx^p,\cx^{p+1}\rangle$ is non-zero.
In particular, this requires that the pairing is of degree $-(2p+1)$. Provided the pairing is compatible with the differential, there is a gauge symmetry $f_{p}\sim \check{\dd}f_{p-1}$, and gauge for gauge $f_{p-1}\sim\check{\dd}f_{p-2}$, etc. In the BV quantisation we would need to introduce ghosts, and ghosts for ghosts $f_{n}\in \cx^{n}$ for $n<p$. For each $f_{n}$, we introduce an anti-field $f_{2p+1-n} \in  \cx^{2p+1-n}$, and we can write the total BV action as
\begin{equation}
    S_{\text{BV}} = \frac{1}{2}\left< f,\check{\dd}f\right> \ , \qquad f\in \cx^{\bullet}\ .
\end{equation}
Above, we have written $f = f_{0} + f_{1} + ... $ for a generic element of the \complexNoSpace.

Since this action is quadratic, it is straightforward to quantise it and find the 1-loop partition function. A detailed discussion of how this is done in the de Rham case is given in e.g.\ \cite{Pestun:2005rp}, and the procedure works similarly for any case. Due to the alternating statistics of the ghosts and anti-ghosts, the modulus of the 1-loop partition function reduces to an alternating product of determinants of Laplacians $\check{\Delta}_{n}$ on $\cx^{n}$. The final result is, in general~\cite{Schwarz:1978cn,Schwarz:1979ae}
\begin{equation}
    |Z_{\text{1-loop}}| = \left(  \prod_{n=0}^{2p} (\det{}'\check{\Delta}_{n})^{(-1)^{n}} \right)^{\frac{1}{4}(-1)^{p+1}} = \mathrm{Tor}(\cx,\check{\dd})^{\frac{1}{2}(-1)^{p}}
\end{equation}
where, on the right hand side, we have introduced the definition of the analytic torsion of a complex $(\cx,\check{\dd})$.

We will see how this works for various cases below.

\subsection{3d abelian Chern--Simons}

The 3d abelian Chern--Simons theory is a quadratic theory with the following action.\begin{equation}
    S = \frac{1}{2}\int_{M} A_{1}\wedge \dd A_{1} \ , \qquad A_{1}\in \cx^{1}=\Omega^{1}(M)
\end{equation}
This has a 0-form gauge symmetry and so we need to introduce a ghost field $A_{0}\in \Omega^{0}$, and antifields $A_{n}\in \Omega^{n}(M)$ where $n=2,3$. The Grassman statistics of the fields satisfy $\epsilon(A_{n}) = (-1)^{n+1}$. The quadratic BV action for this theory is then given by
\begin{equation}
    S_{\text{BV}} = \frac{1}{2}\left< A,\check{\dd} A \right> = \frac{1}{2}\int_{M} A\wedge \dd A \ , \qquad A\in \cx^{\bullet} = \Omega^{\bullet}(M)
\end{equation}
where integration over the top-form component is implied. In this case, it is clear that the BV complex is just the de Rham complex, and the symplectic pairing is $\langle A,A'\rangle=\int_M A\wedge A'$. This is the \complex associated with the trivial structure group $G = \mathbb{1}$.\footnote{Note that by Stiefel's theorem, every orientable compact three-dimensional manifold is parallelisable~\cite{stiefel35}.} In this case, the 1-loop partition function gives the Ray-Singer torsion of the 3-manifold $M$ \cite{Witten:1990}.

\subsection{Holomorphic Chern--Simons}

Holomorphic Chern--Simons on a Calabi-Yau 3-fold works analogously. The quadratic action is given by
\begin{equation}
    S = \frac{1}{2}\int_{M} A_{0,1}\wedge \delb A_{0,1} \wedge \Omega \ , \qquad A_{0,1}\in \cx^{1} = \Omega^{0,1}(M)
\end{equation}
We need to introduce ghosts and anti-fields $A_{0,n} \in \Omega^{0,n}(M)$ with statistics $\epsilon(A_{0,n})=(-1)^{n+1}$. The quadratic BV action then becomes
\begin{equation}
    S_{\text{BV}} = \frac{1}{2}\left< A,\check{\dd} A \right>_{\Omega} = \frac{1}{2}\int_{M}A\wedge \delb A\wedge \Omega\ , \qquad A\in \cx^{\bullet} = \Omega^{0,\bullet}(M) 
\end{equation}
In this case, the BV complex is the \complex associated to the K\"ahler structure of the Calabi-Yau with $G= U(3)$, as in~\eqref{eq:Dolbeault-complex-anti} and the surrounding discussion. The symplectic pairing is the wedge-product followed by integration against the holomorphic top-form. Note that compatibility of the symplectic pairing with the differential $\delb$ requires that $\Omega$ is a holomorphic section of $\Omega^{3,0}(M)$, and hence we require a fully integrable $SU(3)$ structure, not just $U(3)$ structure, to recover the holomorphic Chern--Simons theory. Here, the 1-loop partition function gives the holomorphic Ray-Singer torsion of the Calabi-Yau 3-fold~\cite{thomas1997gauge}.

\subsection{\texorpdfstring{$G_{2}$}{G2} Chern--Simons}

There is a similar theory of $G_2$ instantons~\cite{Donaldson:1996kp,Earp:thesis} (further discussion can be found in e.g.~\cite{Walpuski:thesis,SaEarp:2014lbh,delaOssa:2016ivz}). The quadratic action is given by
\begin{equation}
    S = \frac{1}{2}\int_{M} A \wedge \dd A \wedge *\varphi \ , \qquad A \in \cx^{1} = \Omega^1(M)
\end{equation}
where $\varphi \in \Omega^3(M)$ is the $G_2$ three-form. 
We need to introduce ghosts and anti-fields $A_n \in \cx^n$ with statistics $\epsilon(A_{n})=(-1)^{n+1}$. The quadratic BV action then becomes
\begin{equation}
    S_{\text{BV}} = \frac{1}{2}\left< A,\check{\dd} A \right>_{\varphi} = \frac{1}{2}\int_{M}A\wedge \check{\dd}A\wedge *\varphi \ , \qquad A\in \cx^{\bullet}
\end{equation}
Here, the BV complex is simply the \complex for a $G_2$ manifold, and the symplectic pairing is the wedge product followed by integration with $*\varphi$. In this action, one could in fact use the usual exterior derivative instead of $\check\dd$ and the presence of $*\varphi$ will project $\dd A$ onto the relevant representation in $\cx$.

\subsection{4d Chern--Simons}

In \cite{Costello:2013zra}, a 4-dimensional version of Chern--Simons was put forward via a holomorphic twist of a 4-dimensional theory. It is defined on a manifold of the form $M=C\times \Sigma$, where $C = \bbR^{2}$ or $S^{1}\times \bbR$, and $\Sigma$ is a Riemann surface. To define the action, we require a meromorphic 1-form $\omega = \omega(z)\dd z$ on $\Sigma$, where $z$ is a holomorphic coordinate. The quadratic action is then given by
\begin{equation}\label{eq:4d_CS}
    S = \frac{1}{2}\int_{M}\omega\wedge A_{1}\wedge \dd A_{1} \ , \qquad A_{1} \in \Omega^{1}(M)
\end{equation}
Note that, because of the non-dynamical 1-form $\omega$, the $A_{z}\dd z$ component of $A_{1}$ drops out, and we can view $A\in \Gamma(\Lambda^{1}_\mb C(C)\oplus \Lambda^{0,1}(\Sigma))$.

To find the BV action, and relate this to the \complexes we have been discussing previously, we note that the background $C\times \Sigma$ has a natural $\mathbb{1}\times U(1) \subset SO(2)\times SO(2)\subset SO(4)$ structure. The associated \complex consists of sections of the bundles
\begin{equation}
\begin{aligned}
    \bx^{0} &= \Lambda^0_\mb C \ ,& \bx^{1} &= \Lambda^{1}_\mb C(C)\oplus \Lambda^{0,1}(\Sigma) \ , \\ 
    \bx^{2} &= \Lambda^{2}_\mb C(C) \oplus \big(\Lambda^{1}_\mb C(C) \otimes \Lambda^{0,1}(\Sigma)\big)  \ , & \bx^{3} &= \Lambda^{2}_\mb C(C)\otimes \Lambda^{0,1}(\Sigma) \ ,
\end{aligned}
\end{equation}
and the differential is
\begin{equation}
    \check{\dd} = \dd_{C} + \delb_{\Sigma}
\end{equation}
We can think of this as a double complex, as in section~\ref{sec:product_manifolds}, with the left and right differentials $\dd_{C}$ and $\delb_{\Sigma}$ respectively. To write down the BV action, we require a graded symplectic pairing which is compatible with the differential. Such a pairing is not unique, but is instead defined by an element $\omega \in \Omega^{1,0}(\Sigma)$, which must be $\dd_{C}+\delb_{\Sigma}$ closed, up to possible $\delta$-function contributions at the poles of $\omega$.\footnote{If this is the case, then one must carefully define boundary conditions for the field $A$. See e.g.\ \cite{Lacroix:2021iit}.} The full BV action can then easily be written as
\begin{equation}
    S_{\text{BV}} = \frac{1}{2}\left<A,\check{\dd}A\right>_{\omega} = \frac{1}{2}\int_{M}\omega\wedge A\wedge \check{\dd}A \ , \qquad A\in\cx^{\bullet}
\end{equation}
Restricted to degree 1, this reproduces the 4d Chern--Simons action \eqref{eq:4d_CS}, as required.

\section{Gravitational \complexNoSpace}
\label{sec:grav}

To build the gravitational \complex associated to a torsion-free $G$ structure, we start from the graded vector space $\Omega^\bullet (TM)$. This does not carry a natural differential. One then views $\mf{gl}(TM) \simeq \Lambda^1(T^*) \otimes T$ in the same way as one identified $\mathfrak{so}(TM) \simeq \Lambda^2 T^*$ in section~\ref{sec:Carrion}. Given a $G$-structure, one can then view $\mf{g} \subset \Lambda^1T^*\otimes T$ and form $\bx$ as the quotient of $\Lambda^\bullet T^*\otimes T$ by $\mf{g} \wedge \Lambda^\bullet T^*$, and again set $\cx:=\Gamma(\bx)$. The first terms of this complex are then\footnote{A similar complex to this also appears in~\cite{goto2004moduli}.}
\begin{equation}
\label{eq:GLd-moduli-complex}
\begin{aligned}
	0 \ra \Gamma(TM)
	\ra \Gamma(\mathfrak{gl}(TM) / \mathfrak{g})
	\ra \Gamma(T^{(\text{int})})  \ra \dots
\end{aligned}
\end{equation}
The third term here requires a little explanation. Consider a tensor $\Sigma \in \Omega^1(\mf{gl}(TM))$ which can be thought of as the difference of two connections on the tangent bundle. We define a map $\tau\colon T^* \otimes (T \otimes T^*) \ra T \otimes \Lambda^2 T^*$ to give the difference of the torsions of the two connections, i.e.\ with respect to any frame $\{ \he_a \}$ for the tangent bundle
\begin{equation}
	\tau(\Sigma)^a{}_{bc} = - 2 \Sigma_{[b}{}^a{}_{c]}
\end{equation}
This map restricts to a map $\tau|$ on $T^* \otimes \mf{g}$. We then have an exact sequence of bundles 
\begin{equation}
	0 \ra \ker (\tau|) \stackrel{\iota}{\ra} T^* \otimes \mf{g}
		\stackrel{\tau|}{\ra} T \otimes \Lambda^2 T^*
		\stackrel{\pi}{\ra} T^{(\text{int})} (\tau|) \ra 0
\end{equation}
where we have defined the bundle
\begin{equation}
	T^{(\text{int})} = {\rm coker} (\tau|) = (T \otimes \Lambda^2 T^*) / \im (\tau|)
\end{equation}
Given a $G$-compatible connection, the projection of the torsion of this connection onto $T^{(\text{int})}$ does not change if one shifts to a different $G$-compatible connection by adding to it a tensor $\Sigma \in \Omega^1(\mf{g})$. Therefore, it is independent of the choice of $G$ compatible connection and represents a property of the $G$-structure itself. It is called the intrinsic torsion of the structure, and can be thought of as a part of the torsion which is common to all connections compatible with the $G$-structure.

Given a torsion-free $G$-compatible connection $\hat\LC$, we can define on $\alpha \in \Omega^k(TM)$
\begin{equation}
	(\hat\dd \alpha)^m{}_{n_1 \dots n_{k+1}} 
		= (k+1) \hat\LC_{[n_1} \alpha^m{}_{n_2 \dots n_{k+1}]}
\end{equation}
As the connection is compatible, and thus preserves $G$ representations, this can then be projected onto the quotient complex to define an operator $\check\dd$ on the quotient complex $\mc{A}$ as in section~\ref{sec:Carrion}. One can see that this is independent of the choice of torsion-free compatible connection and thus $\check\dd$ is a natural operator on the complex. First, if one shifts the torsion-free compatible connection by a torsion-free tensor $\Sigma \in \Omega^1(\mf{g})$ then the shift of the operator $\hat\dd$ acting on $\alpha\in\Omega^k(TM)$ representing an element of $\mc{A}^k$ is
\begin{equation}
\begin{aligned}
	\delta ( \hat\dd \alpha)^m{}_{n_1 \dots n_{k+1}} 
		&= (k+1) (\Sigma_{[n_1} \cdot \alpha)^m{}_{n_2 \dots n_{k+1}]} \\
		&= (k+1) \Sigma_{[n_1}{}^m{}_{|p|} \alpha^p{}_{n_2 \dots n_{k+1}]}
			- k(k+1) \Sigma_{[n_1}{}^p{}_{n_2} \alpha^m{}_{|p| \dots n_{k+1}]} \\
		& = (k+1) \Sigma_{p}{}^m{}_{[n_1} \alpha^p{}_{n_2 \dots n_{k+1}]}
\end{aligned}
\end{equation}
which lies in $\Gamma(\mf{g} \wedge \Lambda^{k}T^*)$ and is thus annihilated on projection to $\mc{A}$.

Further one can check that $\check\dd^2 = 0$ via the same proof as in section~\ref{sec:Carrion}. For $\alpha\in\Omega^k(TM)$ representing a class in $\mc{A}^k$, one has
\begin{equation}
\begin{aligned}
	(\hat\dd^2 \alpha)^m{}_{n_1 \dots n_{k+2}} 
		&= (k+1)(k+2) \hat\LC_{[n_1} \hat{\LC}_{n_2} \alpha^m{}_{n_3 \dots n_{k+2}]} \\
		&= (k+1)(k+2) R_{[n_1n_2}{}^m{}_{|p|} \alpha^p{}_{n_3 \dots n_{k+2}]} \\
			&\qquad\qquad
				- k(k+1)(k+2) R_{[n_1n_2}{}^p{}_{n_3} \alpha^m{}_{|p|n_4 \dots n_{k+2}]}  \\
		& = -2(k+1)(k+2) R_{p[n_1}{}^m{}_{n_2} \alpha^p{}_{n_3 \dots n_{k+2}]} \\
\end{aligned}
\end{equation}
where we have used the Bianchi identity $R_{[mn}{}^p{}_{q]} = 0$. Since the Riemann tensor is a section of $\Lambda^2 T^* \otimes \mf{g}$, we have $\check\dd^2\alpha\in\Gamma(\mf{g} \wedge \Lambda^{k+1}T^*)$. One can then write
\begin{equation}
	\hat\dd \alpha = \check\dd \alpha + \eta
\end{equation}
for $\eta \in \Gamma(\mf{g} \wedge \Lambda^{k}T^*)$ and then 
\begin{equation}
	\hat\dd^2 \alpha = \check\dd^2 \alpha + \eta' + \hat\dd\eta
\end{equation}
for $\eta' \in \Gamma(\mf{g} \wedge \Lambda^{k+1}T^*)$. Projecting this equation onto the quotient $\mc{A}^{k+2}$ one thus arrives at $\check\dd^2 = 0$ on $\mc{A}$.

One can interpret this \complex as follows. Consider a $G$-frame $\{ \hat{e}_a \}$ for the tangent bundle, which is a local section of the principal sub-bundle of the frame bundle corresponding to the $G$-structure. (In these frames, the components of all invariant tensors of $G$ take a specific set of constant values so that the matrix representation of the structure group $G$ is also the same at all points.)

For simplicity of exposition\footnote{This is not necessary: one can instead work via projections onto the quotient $\mf{gl}(d,\bbR) / \mf{g}$.} we assume reducibility of $\mf{gl}(d,\bbR)$ under the subalgebra $\mf{g}$ so that

\begin{equation}
	\mf{gl}(d,\bbR) = \mf{g} \oplus K
\end{equation}
for $K$ some representation of $\mf{g}$. An infinitesimal variation of the $G$-structure then corresponds to defining a new frame $\{ \hat{e}'_a \}$ for the deformed $G$-structure, which we can write as
\begin{equation}
	\hat{e}'_a = \hat{e}_a + X^b{}_a \hat{e}_b
\end{equation}
for components $X^a{}_b$ in $K$ at each point. The tensor $X$ thus defines an element of $\cx^1$.

Suppose we have a compatible connection $\LC$. For a vector field $v = v^c \he_c$ we then have
\begin{equation}
	\LC_v \he_a = v^c \LC_{\he_c} \he_a = v^c \omega_c{}^b{}_a \he_b 
		=: \omega_v{}^b{}_a \he_b
\end{equation}
where the components of $\omega_v$ lie in $\mf{g}$. We then look at deforming the connection $\LC$ to $\LC' = \LC + \Sigma$ for $\Sigma \in \Omega^1 (\mf{gl}(TM))$. Then 
\begin{equation}
\begin{aligned}
	\omega'_v{}^b{}_a \he'_b = \LC'_v \he'_a = \LC_v \he'_a + \Sigma_v{}^b{}_a \he'_b
\end{aligned}
\end{equation}
and we have
\begin{equation}
\begin{aligned}
	\LC_v \he'_a &= \LC_v (\hat{e}_a + X^b{}_a \hat{e}_b) \\
	&= \omega_v{}^b{}_a \he_b + (\der_v X^b{}_a) \he_b + X^b{}_a \omega_v{}^c{}_b \he_c \\
	&= \omega_v{}^b{}_a (\he'_b - X^c{}_b \he'_c) 
		+ (\der_v X^b{}_a + \omega_v{}^b{}_c X^c{}_a) \he'_b + O(X^2) \\
	&= (\omega_v{}^b{}_a  + \LC_v X^b{}_a) \he'_b + O(X^2)
\end{aligned}
\end{equation}
Note that in these equations, the components of $\omega_v$ are naturally the components of the connection $\LC$ with respect to the original frame $\he_a$, while the components of $\omega'_v$ of $\LC'$ and the tensor $\Sigma$ are taken with respect to the $\he'_a$ frame. We thus have that to first order in $X$ the variation of the connection is given by:
\begin{equation}
	\omega'_v{}^a{}_b  - \omega_v{}^a{}_b = \Sigma_v{}^{a}{}_b + (\LC_v X)^a{}_b
\end{equation}
We require that $\omega_v$ and $\omega'_v$ lie in $\mf{g}$ and therefore so must $\Sigma_v + \LC_v X$. If we write
\begin{equation}
	\Sigma = \Sigma^{(\mf{g})} + \Sigma^{(K)}
\end{equation}
then we have that
\begin{equation}
	\Sigma^{(K)}_v = - \LC_v X
\end{equation}
We then define the map $\tau$ as above and $\tau^{(\text{int})} = \pi \circ \tau$ and have
\begin{equation}
\begin{aligned}
	\tau^{(\text{int})}(\Sigma) &= \tau^{(\text{int})}(\Sigma^{(\mf{g})} + \Sigma^{(K)})
		= \tau^{(\text{int})}(\Sigma^{(K)})
		= \pi (\tau(\Sigma^{(K)})) \\
		&= \mc{P}_{A^2} \Big( \tfrac12 (2 \LC_{[b} X^a{}_{c]}) \; 
			\he_a \otimes (e^b \wedge e^c) \Big) \\
		&= \check\dd X
\end{aligned}
\end{equation}
so that $\check\dd X$ is the intrinsic torsion of the new $G$-structure to first order in $X$.

Next, we consider what it means for $X$ to be exact. If we act with an infinitesimal diffeomorphism generated by a vector field $v$ on our frame, we have 
\begin{equation}
\label{eq:diff-variation}
	\delta \he_a = \mc{L}_v \he_a = \mc{L}^{\LC}_v \he_a
	 	= \LC_v \he_a - (\LC \times v) \cdot \he_a \\
		= (\omega_v{}^b{}_a - \LC_a v^b) \he_b
\end{equation}
where here we used the notation $\LC \times v$ for $\LC v$ viewed as an endomorphism of $TM$. Now, as $\omega_v$ lies in $\mf{g}$ an infinitesimal rotation of the frame by $\omega_v$ merely results in a new frame compatible with the original $G$-structure and thus does not change the $G$-structure itself. Therefore, for our purposes here we can discard this part of the variation. The last term in~\eqref{eq:diff-variation}, however, does contain a part which appears to change the $G$-structure. This is the part of $\LC \times v$ with components in $K$, which is precisely $\check\dd v$. Therefore, $G$-structures which are related by an infinitesimal diffeomorphism have 
\begin{equation}
	X = \check\dd v
\end{equation}
for some vector field $v$.

We thus see that the infinitesimal moduli space of torsion-free $G$-structures is given by
\begin{equation}
\label{eq:G-str-moduli}
\begin{aligned}
	\mc{M}_{G\text{-str}} 
		= \frac{\{ X : \check\dd X = 0\}}{\{ X = \check\dd v \}}
		= H^1(\cx)
\end{aligned}
\end{equation}
exactly as for~\eqref{eq:instanton-moduli}. 

For example, one can consider the case of a $\GL(n,\bbC)$ structure on a $2n$-dimensional manifold, corresponding to a complex structure. In this case one has that $\mf{g} \sim \Big[ \big ((T^*)^{1,0}\otimes T^{1,0} \big) \oplus \big((T^*)^{0,1}\otimes T^{0,1} \big)\Big]_\bbR$, so that projecting out $\mf{g} \wedge \Lambda^\bullet T^*$ we are left with the \complex $\cx^\bullet$ with
\begin{equation}
	\cx^k = \Big[ \Omega^{0,k}(T^{1,0}) \oplus \Omega^{k,0}(T^{0,1}) \Big]_\bbR \ ,
\end{equation}
with the differential given by the usual $\bar\der$ on $\Omega^{0,k}(T^{1,0})$ and $\der$ on $\Omega^{k,0}(T^{0,1})$. As for the discussion of the usual Dolbeault complexes of forms in section~\ref{sec:Carrion} above, we tend to use the complex parameterisation of this complex and simply write it as
\begin{equation}
	(\cx^\bullet, \check\dd)  = (\Omega^{0,k}(T^{1,0}), \bar\der) 
\end{equation}

This \complex is of course well-known in the study of complex structures (see e.g.~\cite{Huybrechts}). Indeed, $\mu \in \Omega^{0,1}(T^{1,0})$ provides a Beltrami differential, deforming the complex structure via
\begin{equation}
	\frac{\der}{\der \bar{z}'^{\ba}} = \frac{\der}{\der \bar{z}^{\ba}} 
		+ \mu^b{}_{\ba} \frac{\der}{\der z^{b}}
\end{equation}
and in the infinitesimal case, such a deformation is induced by an infinitesimal diffeomorphism if
\begin{equation}
	\mu = \bar\der \epsilon \hs{30pt} \epsilon \in \Gamma(T^{1,0})
\end{equation}
The condition that $[T^{0,1}, T^{0,1}] \subset T^{0,1}$ then comes out to be 
\begin{equation}
	\bar\der \mu = 0
\end{equation}
so that indeed the infinitesimal moduli space of complex structures is given by the first cohomology as above.

In the above example, the structure group $\GL(n,\bbC)$ does not preserve a metric. In general, given a torsion-free $G$-structure on $TM$, the Riemann tensor $R \in \Omega^2(\mf{g})$. If $G$ preserves a metric, i.e.\ $G\subset \SO(d)$ as we assumed in section~\ref{sec:Carrion}, then $R_{mnpq} = R_{pqmn}$ and we have $R \in \Gamma(\mf{g} \otimes \mf{g}) \subset \Omega^2(\mf{g})$. Thus for metric $G$-structures, one has a close analogue of~\eqref{eq:gen-instanton}. 

\section{The \complex for Courant algebroids and supergravity}
\label{sec:gen-Dolbeault}

We now proceed to mimic the construction of section~\ref{sec:grav} in generalised geometry. We will see that this produces a natural \complex associated to torsion-free generalised $G$-structures, one of whose cohomology groups will later be seen to give the infinitesimal moduli space of such structures in section~\ref{sec:moduli}. For particular types of structure which preserve a generalised metric, we will see that the \complex becomes a double complex, which is a tensor product of \complexes of the type considered in section~\ref{sec:Carrion}. We show that these satisfy K\"ahler type identities in general. There is also a notion of spinor-type complexes in generalised geometry, which have the further interesting properties that the left and right Laplacian operators are equal, and thus they satisfy $\der\bar\der$-type lemmas.

We start with the tensor hierarchy (excluding dilaton terms) for $O(d,d)$ generalised geometry. This is simply the graded vector space $\Gamma(\Lambda^\bullet E)$ where $E \simeq T \oplus T^*$ is the generalised tangent bundle. In fact, one can also consider the case of $O(d,d+n)$ generalised geometry for heterotic supergravity where $E \simeq T \oplus \End(V) \oplus T^*$ for a gauge bundle $\End(V)$, or more generally any Courant algebroid. Similarly to the gravitational construction of section~\ref{sec:grav}, the graded vector space $\Gamma(\Lambda^\bullet E)$ does not carry a natural differential. 
However, we will consider the case where $E$ has a torsion-free generalised $G$-structure~\cite{CSW4} and take a quotient of $\Lambda^\bullet E$ by  $\mathfrak{g} \wedge \Lambda^\bullet E$  viewing $\mathfrak{g} \subset \Lambda^2 E \sim \mf{so}(d,d+n)$. 
We will then find that the sections of this quotient have a natural differential as in the previous section.

Demonstrating the existence of the differential and deriving its properties is slightly harder than in the constructions of the previous sections, in large part due to the complications of defining a Riemann tensor in generalised geometry. 
In this section, we explain how the complex is constructed for any generalised $G$-structure in $O(d,d+n)$ generalised geometry, and go on to show that for a large class of structure groups which preserve a generalised metric one in fact obtains a double complex satisfying K\"ahler type identities. 
Further, as $\SO(d,d+n)$ is an orthogonal group, it has spin representations and there is an analogue of the spinor type complexes of section~\ref{sec:Carrion}. The generalised-metric-compatible cases of these turn out to have the property that the two natural Laplacians on the double complex are equal, which is the analogue of the statement that $\Delta_\der = \Delta_{\bar\der}$ in ordinary K\"ahler geometry.

A quick note on notation and conventions in this section. We shall use $e$ to denote vectors in a local frame of the Courant algebroid $E$. Since $E \cong E^{*}$, we shall not distinguish between their frames. Uppercase Roman indices $e_{A}$ will denote an arbitrary frame and indices will be raised and lowered with the canonical inner product $\left<\cdot,\cdot \right>$. Lowercase Roman indices $e_{a},e_{\bar{a}}$ will denote frames of $C_{\pm}$ respectively, which are the positive and negative eigenbundles of the generalised metric. 

  \subsection{Preliminaries}
  \subsubsection{Courant algebroids}
    Let us start setting the stage by introducing Courant algebroids \cite{liu1997manin} and discussing some of their basic properties.

    A Courant algebroid is a vector bundle $E\to M$ equipped with 
    \begin{itemize}
        \item a bracket $[\slot,\slot]\colon \Gamma(E)\times\Gamma(E)\to\Gamma(E)$
        \item a fiberwise non-degenerate symmetric bilinear form $\langle \slot,\slot\rangle$
        \item a vector bundle map $\rho\colon E\to TM$
    \end{itemize}
    satisfying the following conditions for all $u,v,w\in\Gamma(E)$ and $f\in C^\infty(M)$:
        \begin{equation}[u,[v,w]]=[[u,v],w]+[v,[u,w]],\qquad [u,fv]=f[u,v]+(\rho(u)f)v,\end{equation}
        \begin{equation}\rho(u)\langle v,w\rangle=\langle [u,v],w\rangle+\langle v,[u,w]\rangle,\qquad [u,v]+[v,u]=\rho^*d\langle u,v\rangle,\end{equation}
        where $\rho^*\colon T^*M\to E^*$ is the transpose of $\rho$, and we used the identification $E\cong E^*$ provided by the pairing $\langle\slot,\slot\rangle$.
    Note that, in parallel with ordinary geometry we will also use the notation $L_uv:=[u,v]$ and call $L$ the Dorfman derivative.
 
    Various other properties can be derived from these axioms, such as
  \begin{equation}\rho([u,v])=[\rho(u),\rho(v)]\end{equation}
  or $\rho\circ \rho^*=0$. The latter property can be rephrased as the statement that
  \begin{equation}0\to T^*M\xrightarrow{\rho^*} E\xrightarrow{\rho}TM\to 0\end{equation}
  is a chain complex. When this is in fact an exact sequence, we say that the Courant algebroid is \emph{exact}. More generally, if the complex is exact in the last (or equivalently in the first) point, i.e.\ when $\rho$ is surjective, we say that the algebroid is \emph{transitive}.

  An example of an exact Courant algebroid is given by
  \begin{equation}\label{eq:exactCA}
      E=TM\oplus T^*M,
  \end{equation}
  equipped with
  \begin{equation}\langle X+\alpha,Y+\beta\rangle=\alpha(Y)+\beta(X),\qquad \rho(X+\alpha)=X,\end{equation}
  \begin{equation}[X+\alpha,Y+\beta]=\mc L_XY+(\mc L_X\beta-i_Yd\alpha+H(X,Y,\slot),\end{equation}
  where $H$ is closed 3-form. In fact, one can show \cite{Severa,Severa:2015hta} that every exact Courant algebroid is of this form, for some $H$. 

    More generally, starting from an arbitrary principal $G$-bundle $P \to M$ with vanishing first Pontryagin class w.r.t.\ an invariant pairing on $\mf g$, one can construct a transitive Courant algebroid structure on
    \begin{equation}E=TM\oplus\on{ad}(P)\oplus T^*M.\end{equation}
    Every transitive Courant algebroid is locally of this form \cite{Severa,Severa:2015hta}.

    Returning now to the general case, let $e_A$ be a local frame for which $\langle e_A,e_B\rangle$ are constant functions. It is then easy to see that the structure coefficients \begin{equation}c_{ABC}:=\langle [e_A,e_B],e_C\rangle\end{equation}
    are completely antisymmetric. The Jacobi identity can then be written as
    \begin{equation}\label{eq:jacobi}
        \rho(e_{[A})c_{BC]E}+c_{D[AB}c\indices{_{C]}^D_E}-\tfrac13\rho(e_E)c_{ABC}=0.
    \end{equation}
    Note that in this section all the indices will always be raised/lowered with $\langle \slot,\slot\rangle$ and we will freely assume the identification $E\cong E^*$ provided by this pairing.

    \subsubsection{Generalised metrics, connections, and \texorpdfstring{$G$}{G}-structures}
        The analogue of an ordinary metric is provided by a \emph{generalised metric} $\mc G$, which is a symmetric endomorphism (vector bundle map) of $E$ satisfying $\mc G^2=1$. We will denote the $\pm 1$-eigenbundles of $\mc G$ by $C_\pm$. Note that 
        \begin{equation}\label{eq:ogdec}
            E=C_+\oplus C_-    
        \end{equation}
        is an orthogonal decomposition. Conversely, any orthogonal decomposition \eqref{eq:ogdec} corresponds to a generalised metric $\mc G$. Generalised metric is thus equivalent to the choice of a subbundle $C_+
        \subset E$ for which $\langle \slot,\slot\rangle|_{C_+}$ is non-degenerate.

        As an example, any choice of metric $g$ and 2-form $B$ on $M$ defines a generalised metric on the exact Courant algebroid \eqref{eq:exactCA} by
        \begin{equation}C_+:=\on{graph}(g+B)=\{X+(g(X,\slot)+B(X,\slot))\mid X\in TM\}.\end{equation}
        Similarly, one can encode the data of $g$, $B$, and a connection $A$ using a generalised metric on a transitive Courant algebroid.
        
        If the signature of the pairing $\langle \slot,\slot\rangle$ is $(p,q)$, the Courant algebroid naturally has a reduced structure group $O(p,q)\subset GL(p+q,\mb R)$. A choice of a generalised metric breaks this group down further to the product of the orthogonal group corresponding to the decomposition \eqref{eq:ogdec}. In particular, if the induced pairing on $C_+$ and $C_-$ is positive and negative definite, respectively, then the group reduces as
        \begin{equation}O(p,q) \to O(p)\times O(q).\end{equation}

        For any subgroup $G\subset O(p,q)$, a \emph{generalised $G$-structure} is a reduction of the structure group from $O(p,q)$ to $G$. In particular, if $G\subset O(p)\times O(q)$ then a generalised $G$-structure induces a generalised metric of the above type. Note that since any local $G$-frame is also an $O(p,q)$-frame, the functions $\langle e_A,e_B\rangle$ are automatically constant.

        Quite analogously to the standard case, we define \emph{(Courant algebroid) connections} \cite{AXu} as operators $D$ satisfying
        \begin{equation}D_{fu}v=fD_uv,\quad D_u(fv)=fD_uv+(\rho(u)f)v,\quad \rho(u)\langle v,w\rangle=\langle D_uv,w\rangle+\langle v,D_uw\rangle.\end{equation}
        Defining the action of $D$ on functions by $D_uf:=\rho(u)f$ and using the Leibniz rule, we can act with $D$ on an arbitrary section of the tensor products of $E$.
        
        A \emph{torsion} of $D$ is the tensor \cite{AXu}
        \begin{equation}T_D(u,v)=D_uv-D_vu-[u,v]+\langle Du,v\rangle.\end{equation}
        It is easy to see that $T\in\Gamma(\Lambda^3 E)$. A connection $D$ is called \emph{Levi--Civita} if it has vanishing torsion and $D\mc G=0$.
        
        Finally, a generalised $G$-structure is called \emph{torsion-free} if it admits a torsion-free connection preserving the generalised $G$-structure. For instance, any generalised metric is torsion-free \cite{Garcia-Fernandez:2016ofz}.

  \subsubsection{Riemann tensor}
    For any torsion-free connection $D$ we define the \emph{Riemann tensor} \cite{Siegel:1993th}
    \begin{equation}\Riem(w,z,x,y):=\tfrac12w^D y^B(x^A[D_A,D_B]z_D+z^A[D_A,D_D]x_B-(D_A x_B)(D^A z_D)).\end{equation}
    It is not difficult to see that this indeed defines a tensor with the following symmetries:
    \begin{equation}\Riem_{ABCD}=\Riem_{[AB]CD}=\Riem_{AB[CD]}=\Riem_{CDAB}.\end{equation}
    
    The algebraic Bianchi identity \cite{Hohm:2011si,Jurco:2016emw}
    \begin{align}\label{eq:sym}
        \Riem_{[ABC]D}=0.
    \end{align}
    is more involved. Too see this, note that due to the other symmetries this statement is equivalent to $\Riem_{[ABCD]}=0$. First, picking a local frame with $\langle e_A,e_B\rangle$ constant, we calculate
      \begin{equation*}\begin{aligned}
        D_A D_B (e_C)_D
        &=(D_{e_A}D_{e_B}e_C)_D
        -(D_{D_{e_A}e_B} e_C)_D
        =(D_{e_A}(\Gamma_B{}^E{}_C e_E))_D
        -\Gamma_A{}^E{}_B(D_{e_E}e_C)_D \\
        &=\rho(e_A)\Gamma_{BDC}+\Gamma_{ADE}\Gamma\indices{_B^E_C}-\Gamma\indices{_A^E_B}\Gamma\indices{_E_D_C},
      \end{aligned}\end{equation*}
      Using the torsion-free condition $2\Gamma_{[AB]C}+\Gamma_{CAB}=-c_{ABC}$ twice in a row we get
      \begin{equation*}\begin{aligned}
        \Riem_{[DCAB]}&=2(e_{[A|})^E D_E D_{|B} (e_C)_{D]}-\tfrac12(D_E (e_{[A})_B)(D^E(e_C)_{D]})\\
        &=2D_{[A} D_{B} (e_C)_{D]}-\tfrac12\Gamma_{E[BA}\Gamma\indices{^E_{DC]}}\\
        &=-2\rho(e_{[A})\Gamma_{BCD]}-2\Gamma\indices{_{[A}_B^E}\Gamma_{CD]E}+2\Gamma\indices{_E_{[A}_B}\Gamma\indices{_C^E_{D]}}-\tfrac12\Gamma_{E[AB}\Gamma\indices{^E_{CD]}}\\
        &=\tfrac23\rho(e_{[A})c_{BCD]}+(\Gamma\indices{^E_{[A}_B}\Gamma_{CD]E}+c\indices{_{[A}_B^E}\Gamma_{CD]E})\\
        &\quad+2\Gamma\indices{_E_{[A}_B}\Gamma\indices{_C^E_{D]}}+\tfrac12(2\Gamma_{[AB|E|}\Gamma\indices{^E_{CD]}}+c_{[AB|E|}\Gamma\indices{^E_{CD]}})\\
        &=\tfrac23\rho(e_{[A})c_{BCD]}+c\indices{_{[A}_B^E}\Gamma_{CD]E}+\tfrac12c_{[AB|E|}\Gamma\indices{^E_{CD]}}\\
        &=\tfrac23\rho(e_{[A})c_{BCD]}-\tfrac12c\indices{_{[A}_B^E}c_{CD]E},
      \end{aligned}\end{equation*}
      which vanishes by the Jacobi identity \eqref{eq:jacobi}.

  \subsubsection{Curvature operator}
    Let $D$ be a Levi-Civita connection for a generalised metric $\mc G$. Define the \emph{curvature operator}~\cite{CSW1}
\begin{equation}
\label{eq:gen-curvature}
    \curv\colon \Gamma(C_+)\times\Gamma(C_-)\to\on{Der}(E), \qquad \curv(x_+,y_-):=x_+^a y_-^{\bar a}[D_a,D_{\bar a}],
\end{equation}
    where $\on{Der}(E)$ stands for degree 0 derivations of the algebra $\Gamma(\Lambda^\bullet E)$. We can also rewrite this as follows:
    \begin{equation}\curv(x_+,y_-)=x_+^a y_-^{\bar a}([D_{e_a},D_{e_{\bar a}}]-D_{D_{e_a}e_{\bar a}-D_{e_{\bar a}}e_a})=x_+^a y_-^{\bar a}([D_{e_a},D_{e_{\bar a}}]-D_{[e_a,e_{\bar a}]}),\end{equation}
    where we used $x_+^a y_-^{\bar a}\langle De_a,e_{\bar a}\rangle=0$. From this it follows that $\curv(x_+,y_-)$ vanishes on $C^\infty(M)\cong \Gamma(\Lambda^0E)$ and hence is a purely algebraic operator,
    \begin{equation}\curv(x_+,y_-)\in\Gamma(\on{End}(E)).\end{equation}
        Explicitly, we have
    \begin{equation}
        [D_a,D_{\bar a}]^A{}_B=2\Riem_{a\bar a}{}^A{}_B.
    \end{equation}

    Suppose now that we have a torsion-free generalised $G$-structure, where $G\subset O(p)\times O(q)$. Let $D$ be a torsion-free compatible connection and $e_A$ a local $G$-frame. Then an easy calculation shows that
    \begin{equation}
        [D_a,D_{\bar a}]=\rho(e_a)\Gamma_{\bar a}-\rho(e_{\bar a})\Gamma_a+[\Gamma_a,\Gamma_{\bar a}]-c_{a\bar a}{}^A\Gamma_A,\qquad (\Gamma_A)^B{}_C:=\Gamma_A{}^B{}_C.
    \end{equation}
    Since in a $G$-frame we have $\Gamma_A\in\mf g$, we in particular obtain
    \begin{equation}\label{eq:hol}
        R(x_+,y_-)\in\Gamma(\mf g)\subset\Gamma(\on{End}(E)).
    \end{equation}

\subsection{The \complex for torsion-free generalised \texorpdfstring{$G$}{G}-structures}
    Suppose now that we have a torsion-free generalised $G$-structure. Set $\gb:=\Lambda^\bullet E$, $\gb^p:=\Lambda^p E$, and with the corresponding spaces of sections denoted $\gc$ and $\gc^p$, respectively. Just as before, we define $\bx:=\gb/(\mf g\wedge \gb)$ and $\cx:=\Gamma(\bx)$.
  
  Let $D$ be a torsion-free compatible generalised connection. This gives a derivation \begin{equation}\hat D\colon \gc\to \gc,\qquad (\hat D\omega)_{AB\dots C}:=(p+1)D_{[A}\omega_{B\dots C]}\quad \text{for }\omega\in\gc^p.\end{equation}
  In particular we have $\hat D f=\rho^*(df)$. Taking $e_A$ any $G$-frame, the compatibility of $D$ means $\Gamma_{ABC}e^B \otimes e^C\in \mf g\subset \gb$, while torsion-freeness corresponds to $c_{ABC}=-3\Gamma_{[ABC]}$. Note that the antisymmetry in the last two indices gives $\Gamma_{[ABC]}=\tfrac23\Gamma_{[AB]C}+\tfrac13\Gamma_{CAB}$.

We claim that $\hat{D}$ preserves $\mf{g} \wedge \gb$ so that $\hat{D}$ descends to a well-defined operator $\check D$ on $\cx$. This follows exactly as for the ordinary case of section~\ref{sec:Carrion}. The connection $D$ preserves $G$-representations, so that if $\alpha \in \Gamma(\mf{g})$ then $D_{e_A} \alpha \in \Gamma(\mf{g})$ and so $\hat{D} \alpha = e^A \wedge D_{e_A} \alpha \in \Gamma(E \wedge \mf{g})$. The result then follows from the fact that $\hat{D}$ is a derivation of the wedge product on $\gc$.

Denoting equality up to an element of $\Gamma(\mf g\wedge \gb)$ by $\equiv$, we have
  \begin{equation}\hat De_A=\Gamma_{[CB]A}e^C\wedge e^B=(\tfrac32\Gamma_{[CBA]}-\tfrac12\Gamma_{ACB}) e^C\wedge e^B\equiv -\tfrac12 c_{CBA}e^C\wedge e^B,\end{equation}
  which shows that the induced operator $\check D$ is actually independent of the choice of $D$.

    Finally, let us show that $\check D$ is a differential, \begin{equation}\check D^2=0.\end{equation} Since $\hat D^2=\tfrac12[\hat D,\hat D]$ is a derivation, it suffices to check that $\hat D^2f$ and $\hat D^2e_A$ both lie in $\Gamma(\mf g\wedge\gb)$. For the former we have
  \begin{align*}
    \hat D^2f&=\hat D(e^A \rho(e_A)f)\equiv -\tfrac12c\indices{_C_B^A}e^C\wedge e^B \rho(e_A)f-e^A\wedge e^B \rho(e_B)\rho(e_A)f,
  \end{align*}
  which vanishes due to $[\rho(e_B),\rho(e_A)]=\rho([e_B,e_A])=c\indices{_B_A^C}\rho(e_C)$. For the latter,
  \begin{align*}
    \hat D^2e_A&\equiv -\tfrac12 \hat D(c_{CBA}e^C\wedge e^B)=(-\tfrac12\rho(e_D)c_{CBA}+\tfrac12c_{EBA}c\indices{_D_C^E})e^D\wedge e^C\wedge e^B\\
    &=-\tfrac16\rho(e_A)c_{DCB}e^D\wedge e^C\wedge e^B\equiv 0,
  \end{align*}
  where we have used the Jacobi identity \eqref{eq:jacobi}.

Thus we have a natural complex $(\cx, \check{D})$ associated to our torsion-free generalised $G$-structure, which is our \complex in this setting.  

\subsection{The \texorpdfstring{$G_+ \times G_-$}{G+ x G-} double complex and K\"ahler identities}
\label{sec:G+G-}

Suppose we have a Courant algebroid $E$ of signature $(n_+,n_-)$, and a torsion-free $G$-structure, where $G=G_+\times G_-$, with $G_\pm\subset O(n_\pm)$. This induces a positive-definite generalised metric $\mc G\in\on{End}(E)$ and the associated bundle decomposition $E=C_+\oplus C_-$. Let now $D$ be a torsion-free compatible generalised connection. Assume that the trace of $D$ coincides with the divergence w.r.t.\ some volume form $\Phi$ on $M$, i.e.\ for every $u\in\Gamma(E)$ we have
\begin{equation}\label{eq:compatible_with_density}
      D_A u^A=\Phi^{-1}\mc L_{\rho(u)}\Phi.
\end{equation}
  
Note that the pair consisting of a $G$-structure and a volume form $\Phi$ is equivalent to a $G$-structure in the sense of $O(p,q)\times\mb R^+$-geometry of \cite{CSW1}. The torsion-free connection $D$ automatically extends to a corresponding connection in the $O(p,q)\times\mb R^+$-geometry by taking $\Phi$ to be covariantly constant. The condition \eqref{eq:compatible_with_density} is then equivalent to the statement that this new connection is also torsion-free (in the $O(p,q)\times\mb R^+$-sense). Since it is the latter geometric description in which supersymmetry takes the natural form, in the physically relevant examples studied below the condition \eqref{eq:compatible_with_density} will be satisfied.
  
Assuming \eqref{eq:compatible_with_density}, we can now define a positive-definite inner product on each $\gc^p$ by
\begin{equation}
    (\omega,\tau)_{\gc}:=\tfrac1{p!}\int_M \Phi\,\mc G^{AB}\!\dots \mc G^{CD}\omega_{A\dots C}\tau_{B\dots D},
\end{equation}
and we have
  \[(\hat D^\dagger \omega)_{A\dots B}=-\mc G^{CD}D_C\omega_{DA\dots B}.\]
  
Now extend $\mc G$ to a derivation $\hat {\mc G}$ of $\gb$ (with trivial action on $\gb^0$). Defining the bicomplex
\begin{equation}\gc^{p,q}:=\Gamma(\gb^{p,q}),\qquad \gb^{p,q}:=\Lambda^{p}C_+\otimes\Lambda^{q}C_-\end{equation}
we have $\hat{\mc G}\omega=(p-q)\omega$ for any $\omega\in\gc^{p,q}$. Since $D$ is Levi-Civita, we have the decomposition
\begin{equation}\hat D=\hat D_++\hat D_-,\qquad \hat D_+ \gc^{p,q}\subset \gc^{p+1,q},\quad \hat D_- \gc^{p,q}\subset \gc^{p,q+1}.
\end{equation}
  
Since $\mc G$ induces a positive-definite inner product on the whole $\gb$, we can again decompose $\gb$ into 
\begin{equation}
\label{eq:ortho-decomp}
\gb=\bx\oplus\by,\qquad \by:=\mf g\wedge \gb,\qquad \bx:=\by^\perp.
\end{equation}
Both $\bx$ and $\by$ inherit the bigrading from $\gb$. Explicitly, we have
\begin{equation}\bx=\gb^0\oplus\gb^1\oplus\{\omega\in \gb^{\ge2}\mid x^{AB}\omega_{AB\dots C}=0,\;\forall x\in\mf g\}.\end{equation}
We will use $p\colon \gb\to\bx$ and $i\colon \bx\to\gb$ for the orthogonal projection and inclusion, respectively.
  
Define also $\gb_\pm:=\Lambda^\bullet C_\pm$ and $\bx_\pm:=(\mf g_\pm\wedge\gb_\pm)^\perp\subset \gb_\pm$, with $p_\pm\colon\gb_\pm\to\bx_\pm$ denoting the orthogonal projection. Crucially,
  \begin{equation}\bx=\bx_+\otimes\bx_-,\end{equation}
    and so in particular $p=p_+\otimes p_-$.
To see this, we directly calculate
  \begin{align*}
      \bx&=(\mf g\wedge\gb)^\perp=(\mf g_+\wedge\gb_+\wedge\gb_-+\mf g_-\wedge\gb_+\wedge\gb_-)^\perp\\
      &=(\mf g_+\wedge\gb_+\wedge\gb_-)^\perp\cap (\gb_+\wedge \mf g_-\wedge \gb_-)^\perp\\
      &\cong ((\mf g_+\wedge\gb_+)\otimes\gb_-)^\perp\cap (\gb_+\otimes( \mf g_-\wedge \gb_-))^\perp\\
      &=(\bx_+\otimes \gb_-)\cap (\gb_+\otimes \bx_-)=\bx_+\otimes\bx_-.
  \end{align*}

Note that $\hat D$ preserves $\cy$, since the adjoint action of $\mf g$ preserves $\mf g\subset \gb^2$. Consequently, $\hat D^\dagger$ preserves $\cx$. 
Using the orthogonal decomposition~\eqref{eq:ortho-decomp}, we can write the differential $\check{D}$ on $\cx$ now via
\begin{equation}\check D:=p\circ \hat D\circ i\colon \cx\to\cx.\end{equation}
  
We now define the Laplacian \begin{equation}\hat\Delta:=\hat D\hat D^\dagger+\hat D^\dagger \hat D\colon \gc\to\gc.\end{equation}
A quick calculation reveals that for $\omega\in\gc^p$ we have
\begin{equation}\hat\Delta \omega_{AB\dots C}=-\mc G^{DE}D_D D_E \omega_{AB\dots C}+p\,\mc G^{DE}[D_D,D_{[A}]\omega_{|E|B\dots C]}.\end{equation}
Let us now show that if $\mc G$ is Ricci flat then the operator
\begin{equation}\label{eq:laplacehat_vs_bigrading}
        p\circ \hat\Delta\circ i\colon \cx\to\cx
\end{equation} preserves the bigrading. This is equivalent to showing that in the Ricci flat case
\begin{equation}\label{eq:laplacehat_vs_bigrading_different}
    [\hat{\mc G},\hat\Delta]\,\cx\subset\cy.
\end{equation}
We start by calculating
\begin{align*}
    [\hat {\mc G},\hat\Delta]\omega&=\tfrac p{p!}(\mc G^{DE}[D_D,D_{F}]\mc G\indices{^F_{A}}\omega_{EB\dots C}-\mc G^{DE}[D_D,D_{A}]\omega_{FB\dots C}\mc G\indices{^F_E})e^A\wedge e^B\wedge \dots\wedge e^C\\
    &=\tfrac p{p!}(\mc G\indices{^D_E}[D^E,D_{F}]\mc G\indices{^F_{A}}\omega_{DB\dots C}-[D^D,D_{A}]\omega_{DB\dots C})e^A\wedge e^B\wedge \dots\wedge e^C\\
    &=-\tfrac{2p}{p!} \left[([D^c,D_{\bar a}]\omega_{cB\dots C})e^{\bar a}\wedge e^B\wedge \dots\wedge e^C+([D^{\bar c},D_{a}]\omega_{\bar cB\dots C})e^a\wedge e^B\wedge \dots\wedge e^C\right]\\
      &=\tfrac{4p(p-1)}{p!} (R\indices{^c_{\bar a}^D_B}\omega_{cD\dots C}e^{\bar a}\wedge e^B\wedge \dots\wedge e^C+R\indices{^{\bar c}_a^D_B}\omega_{\bar cD\dots C}e^a\wedge e^B\wedge \dots\wedge e^C),
      \end{align*}
      where in the last line we have used $R\indices{^c_{\bar a}^D_c}=0=R\indices{^{\bar c}_a^D_{\bar c}}$ (note that $R_{a\bar bc\bar d}$ automatically vanishes). Continuing, and using $R_{A[BC]D}=-\tfrac12R_{AD BC}$ by \eqref{eq:sym}, we have
      \begin{align*}
      R\indices{^c_{\bar a}^D_B}&\omega_{cD\dots C}e^{\bar a}\wedge e^B
      +R\indices{^{\bar c}_a^D_B}
      	\omega_{\bar cD\dots C}e^a\wedge e^B\\
      &=R\indices{^c_{\bar a}^d_b}\omega_{cd\dots C}e^{\bar a}\wedge e^b
      +R\indices{^c_{\bar a}^{\bar d}_{\bar b}}\omega_{c\bar d\dots C}e^{\bar a}\wedge e^{\bar b} \\
      &\qquad\qquad
      +R\indices{^{\bar c}_a^d_b}\omega_{\bar cd\dots C}e^a\wedge e^b
      +R\indices{^{\bar c}_a^{\bar d}_{\bar b}}\omega_{\bar c\bar d\dots C}e^a\wedge e^{\bar b}\\
      &=-R\indices{_{\bar a}^{[cd]}_b}\omega_{cd\dots C }e^{\bar a}\wedge e^b
      -R\indices{^c_{[\bar a\bar b]}^{\bar d}}\omega_{c\bar d\dots C}e^{\bar a}\wedge e^{\bar b} \\
      &\qquad\qquad
      -R\indices{^{\bar c}_{[ab]}^d}\omega_{\bar cd\dots C}e^a\wedge e^b
      -R\indices{_a^{[\bar c\bar d]}_{\bar b}}\omega_{\bar c\bar d\dots C}e^a\wedge e^{\bar b}\\
      &=\tfrac12( R\indices{_{\bar a}_b^{cd}}\omega_{cd\dots C}e^{\bar a}\wedge e^b
      +R\indices{^c^{\bar d}_{\bar a\bar b}}\omega_{c\bar d\dots C}e^{\bar a}\wedge e^{\bar b} \\
      &\qquad\qquad
      +R\indices{^{\bar cd}_{ab}}\omega_{\bar cd\dots C}e^a\wedge e^b
      +R\indices{_a_{\bar b}^{\bar c\bar d}}\omega_{\bar c\bar d\dots C}e^a\wedge e^{\bar b})\\
      &=\tfrac12(R\indices{_{\bar a}_b^{AB}}\omega_{AB\dots C}e^{\bar a}\wedge e^b+R\indices{^c^{\bar d}_{AB}}\omega_{c\bar d\dots C}e^{A}\wedge e^{B})=\tfrac12R\indices{^c^{\bar d}_{AB}}\omega_{c\bar d\dots C}e^{A}\wedge e^{B}\equiv0,
      \end{align*}
      where we have used the fact that $R_{a\bar aAB}e^A\wedge e^B\sim (\curv(e_a,e_{\bar a})_{AB})e^A\wedge e^B\in\Gamma(\mf g)$ due to \eqref{eq:hol}.
This proves \eqref{eq:laplacehat_vs_bigrading} and \eqref{eq:laplacehat_vs_bigrading_different}.

Let us also define another Laplacian
\begin{equation}\check \Delta:=\check D\check D^\dagger+\check D^\dagger \check D\colon \cx\to\cx.\end{equation}
Using the previous calculation, it is not difficult to show that $\check\Delta$ also preserves the bigrading of $\cx$:
    
We start by noting that
\begin{equation}p\circ \hat\Delta \circ i-\check\Delta=p\circ \hat D \circ (1-i\circ p)\circ \hat D^\dagger \circ i+p\circ \hat D^\dagger \circ (1-i\circ p)\circ \hat D\circ i=p\circ \hat D^\dagger \circ (1-i\circ p)\circ \hat D\circ i,\end{equation}
since $\hat D^\dagger$ preserves $\cx$. Thus it suffices to show that $[\hat{\mc G},p\hat{D}^\dagger(1-ip)\hat{D}i]=0$. Since 
\begin{equation}\hat D= e^A\wedge D_{e_A},\quad \hat D^\dagger =- \mc G^{AB}i_{e_A}D_{e_B},\quad [\hat{\mc G},\hat D]= \mc G^{AB}e_A\wedge D_{e_B},\quad [\hat{\mc G},\hat D^\dagger]=i_{e^A}D_{e_A},\end{equation}
we can write for any $\omega\in\cx$ (dropping $i$'s)
\begin{align*}
      [\hat{\mc G},p\hat{D}^\dagger(1-p)\hat{D}]\omega&=p[\hat{\mc G},\hat{D}^\dagger](1-p)\hat{D}\omega+p\hat{D}^\dagger(1-p)[\hat{\mc G},\hat{D}]\omega\\
      &=pi_{e^A}(1-p)D_{e_A}(e^B\wedge D_{e_B}\omega)\\
      &\quad-p\mc G^{AB}i_{e_A}(1-p)\mc G^{CD}D_{e_B}(e_C\wedge D_{e_D}\omega)\\
      &= 2pi_{e^a}(1-p)D_{e_a}(e^{\bar a}\wedge D_{e_{\bar a}}\omega)+2pi_{e^{\bar a}}(1-p)D_{e_{\bar a}}(e^a\wedge D_{e_a}\omega).
\end{align*}
This vanishes since for any $u_+\in C_+$, $v_-\in C_-$, $\tau=\tau_+\wedge \tau_-\in \bx=\bx_+\wedge\bx_-$ we have
\begin{align*}
      pi_{u_+}(1-p)(v_-\wedge \tau)&\sim pi_{u_+}(1-p_+p_-)(\tau_+\wedge (v_-\wedge \tau_-))\\
      &=pi_{u_+}(\tau_+\wedge (1-p_-)(v_-\wedge \tau_-))\\
      &=(p_+i_{u_+}\tau_+)\wedge p_-(1-p_-)(v_-\wedge\tau_-)=0,
\end{align*}
and similarly when we exchange $+$ and $-$. Thus $\check\Delta$ also preserves the bigrading.

As $D$ preserves the generalised metric, we can write $\check{D} = \dd_+ + \dd_-$ with 
\begin{equation}
	\dd_+\colon \cx^{p,q} \longrightarrow \cx^{p+1,q}
	\hs{30pt}
	\dd_-\colon \cx^{p,q} \longrightarrow \cx^{p,q+1}
\end{equation}
Similarly $\check{D}^\dagger = \dd_+^\dagger + \dd_-^\dagger$ with
\begin{equation}
	\dd_+^\dagger\colon \cx^{p,q} \longrightarrow \cx^{p-1,q}
	\hs{30pt}
	\dd_-^\dagger\colon \cx^{p,q} \longrightarrow \cx^{p,q-1}
\end{equation}
and we have
\begin{equation}
	\check{\Delta} = \{ \dd_+ , \dd_+^\dagger \} + \{ \dd_- , \dd_-^\dagger \} 
		+ \{ \dd_+ , \dd_-^\dagger \} + \{ \dd_- , \dd_+^\dagger \}
\end{equation}
The statement that $\check{\Delta}$ preserves the bigrading of $\cx$ corresponds precisely to the K\"ahler identities
\begin{equation}
	\{ \dd_+ , \dd_-^\dagger \} = \{ \dd_- , \dd_+^\dagger \} = 0
\end{equation}

\subsection{Examples}

\subsubsection{$G \times G$ structures in type II}\label{sec:GxG_in_type_II}

For $U(3) \times U(3)$, $G_2 \times G_2$ and $\Spin(7) \times \Spin(7)$ these double complexes appeared in~\cite{Ashmore:2021pdm} in the context of the type II topological string. We shall review the structure of the complexes here before reviewing the application to topological strings in section \ref{sec:applications}.

These generalised $G$-structures were originally used to describe supersymmetric backgrounds of type II \cite{Grana:2005sn,Claus_Jeschek_2005}. Indeed, a background preserving $\mathcal{N}=2$ supersymmetry in an NSNS background requires two internal $O(d)$ spinors $\epsilon^{\pm}$ which are parallel with respect to the Bismut connections $\nabla^{\pm}= \nabla \pm \tfrac{1}{2}H$, respectively. These two spinors define two different $SU(3)$, $G_{2}$, or $Spin(7)$ structures in 6, 7, and 8 dimensions respectively. Lifting this to generalised geometry, the two spinors transform with respect to the left and right groups in the generalised metric structure \begin{equation}O(d,d)\to O(d)_{+}\times O(d)_{-}.\end{equation} They define, respectively, $SU(3)\times SU(3)$, $G_{2}\times G_{2}$, or $Spin(7)\times Spin(7)$ structures, and the Killing spinor equations guarantee that one can choose some generalised Levi-Civita connection $D$ such that $D\epsilon^{\pm}=0$. That is, supersymmetry guarantees that the reduced $G$-structure is integrable.

For these cases, the \complex is a doubled version of Carri\'on's gauge theory \complex from section~\ref{sec:Carrion}. 
As discussed in section~\ref{sec:Carrion}, for on a torsion-free $\SU(3)$ structure it is natural to take the complex~\eqref{eq:Dolbeault-complex-anti} which is half of that for $U(3)$ with the scalar part complexified. Correspondingly, for $\SU(3)\times\SU(3)$ structures in generalised geometry we take the complex for $U(3)\times U(3)$, also with complexified scalar. 
For the $G_{2}$ and $Spin(7)$ cases, it will be convenient to enhance the notation and explicitly write the representations that appear. Namely, we write
\begin{equation}
    \mathcal{A}^{p,q}_{\mathbf{r},\mathbf{s}} = \Gamma(\Lambda^{p}_{\mathbf{r}}C_{+} \otimes \Lambda^{q}_{\mathbf{s}}C_{-})
\end{equation}
where $\mathbf{r},\mathbf{s}$ are the representations appearing in the gauge theory \complexesNoSpace. 

We can then write the $G_{2}$ complex as
\begin{equation}\label{eq:double_complex}
    \begin{tikzcd}[column sep = tiny, row sep = tiny]
    & & \arrow{dl}[swap]{\dd_{+}} & \cA^{0,0}_{\rep{1},\rep{1}} \arrow[dl] \arrow[dr] & \arrow{dr}{\dd_{-}} & & \\
    &\left.\right. & \cA^{1,0}_{\rep{7},\rep{1}} \arrow[dl] \arrow[dr] & &  
    \cA^{0,1}_{\rep{1},\rep{7}} \arrow[dl] \arrow[dr] & \left.\right. & \\
    & \cA^{2,0}_{\rep{7},\rep{1}} \arrow[dl] \arrow[dr] & & 
    \cA^{1,1}_{\rep{7},\rep{7}} \arrow[dl] \arrow[dr] & &
    \cA^{0,2}_{\rep{1},\rep{7}} \arrow[dl] \arrow[dr] & \\
    \cA^{3,0}_{\rep{1},\rep{1}} \arrow[dr] & &
    \cA^{2,1}_{\rep{7},\rep{7}} \arrow[dl] \arrow[dr] & & 
    \cA^{1,2}_{\rep{7},\rep{7}} \arrow[dl] \arrow[dr] & &
    \cA^{0,3}_{\rep{1},\rep{1}} \arrow[dl] \\
    & \cA^{3,1}_{\rep{1},\rep{7}} \arrow[dr] & &
    \cA^{2,2}_{\rep{7},\rep{7}} \arrow[dl] \arrow[dr] & &
    \cA^{1,3}_{\rep{7},\rep{1}} \arrow[dl] & \\
    & & \cA^{3,2}_{\rep{1},\rep{7}} \arrow[dr] & & 
    \cA^{2,3}_{\rep{7},\rep{1}} \arrow[dl]  & & \\
    & & & \cA^{3,3}_{\rep{1},\rep{1}} & & &
    \end{tikzcd}
\end{equation}
and the $Spin(7)$ complex as
\begin{equation}\label{eq:Spin(7)_double_complex}
    \begin{tikzcd}[column sep = small, row sep = small]
     & \arrow{dl}[swap]{\dd_{+}} & \cA^{0,0}_{\rep{1},\rep{1}} \arrow[dl] \arrow[dr] & \arrow{dr}{\dd_{-}} &  \\
    \left. \right. & \cA^{1,0}_{\rep{8},\rep{1}} \arrow[dl] \arrow[dr] & &  
    \cA^{0,1}_{\rep{1},\rep{8}} \arrow[dl] \arrow[dr] & \left. \right.  \\
    \cA^{2,0}_{\rep{7},\rep{1}} \arrow[dr] & & 
    \cA^{1,1}_{\rep{8},\rep{8}} \arrow[dl] \arrow[dr] & &
    \cA^{0,2}_{\rep{1},\rep{7}} \arrow[dl]  \\
    & \cA^{2,1}_{\rep{7},\rep{8}}  \arrow[dr] & &
    \cA^{1,2}_{\rep{8},\rep{7}} \arrow[dl]  &  \\
 & & \cA^{2,2}_{\rep{7},\rep{7}} & & 
\end{tikzcd}
\end{equation}

In the case of $SU(3)$ structures, there is a subtlety. The gauge theory \complexNoSpace~\eqref{eq:Dolbeault-complex-anti} in this case is isomorphic to the Dolbeault complex, and we may choose whether this is $(\Omega^{p,0},\del)$ or $(\Omega^{0,p},\delb)$. In section \ref{sec:Carrion}, we made an arbitrary choice to associate it with the antiholomorphic complex. When one combines two copies of the complex, however, there are two inequivalent choices which correspond to whether we take two antiholomorphic copies of the Dolbeault complex, or one holomorphic and one antiholomorphic. We therefore have two non-isomorphic \complexes for $SU(3)\times SU(3)$ structures given by
\begin{align}
    \cA^{p,q}_{A} &= \Gamma( \Lambda^{p,0}C_{+} \otimes \Lambda^{0,q}C_{-}) \\
    \cA^{p,q}_{B} &= \Gamma(\Lambda^{0,p}C_{+}\otimes \Lambda^{0,q}C_{-})
\end{align}
In \cite{Ashmore:2021pdm}, they named these two choices the $A$- and $B$-complexes, and indeed they are relevant for the topological A- and B-models respectively, as we will review later.

In \cite{Kapustin:2004gv}, this structure was related to the generalised K\"ahler geometry associated to the $SU(3)\times SU(3)$ structure. In that case, one has two generalised complex structures $\mathcal{J}_{i}$ which commute and define a generalised metric $\mc G=-\mathcal{J}_{1}\mathcal{J}_{2}$. Since they commute, they define a decomposition of the complexification of $E=T\oplus T^{*}$ into simultaneous $(\pm i, \pm i) $ eigenspaces. Moreover, since they commute with the generalised metric, this decomposition respects the generalised metric decomposition into $C_{\pm}$. In particular, the simultaneous eigenspaces are precisely the spaces $C^{1,0}_{\pm}, C^{0,1}_{\pm}$, and one can show that the $+i$ eigenspaces $L_{i}^{+}$ of $\mathcal{J}_{i}$ satisfy
\begin{equation}
    L_{1}^{+} = C^{1,0}_{+} \oplus C^{1,0}_{-} \ , \qquad L_{2}^{+} = C^{1,0}_{+} \oplus C^{0,1}_{-}\ .
\end{equation}
Therefore, the $A$- and $B$-complexes have the following total spaces
\begin{equation}
    \bigoplus_{p,q}\cA^{p,q}_{A} = \bigoplus_{n} \Gamma(\Lambda^{n} \overline{L_{2}^{+}}) \ , \qquad \bigoplus_{p,q}\cA^{p,q}_{B} = \bigoplus_{n} \Gamma(\Lambda^{n} \overline{L_{1}^{+}})
\end{equation}
and correspond to a refinement of the Dolbeault complex associated to the generalised complex structure into a double complex \cite{Gualtieri-thesis,Gualtieri:2010fd}. The two inequivalent choices correspond to the two different generalised complex structures in a generalised K\"ahler structure.

\subsubsection{$G \times \SO(d+n)$ structures for heterotic}

For heterotic structures, we have the groups $G \times \SO(d+n) \subset \SO(d,d+n)$, which are of the type discussed in section~\ref{sec:G+G-}, but are not spinor type. 
In these cases the generalised Dolbeault complex takes the form of the tensor product of the gauge theory \complex for the group $G$ on the left (which is of spinor type in many interesting cases) with the trivial one~\eqref{eq:trivial-ord-complex} on the right:
\begin{equation}\label{eq:het-double-complex}
    \begin{tikzcd}[column sep = tiny, row sep = tiny, cells={text width={width("$\Omega^{+,2}_{\rep{r_2}}(C_-) $")}, align=center}]   
    & & & & \arrow[dr, "\dd_{-}", xshift=1ex, yshift=-1ex]  \\
    & & & \arrow{dl}[swap]{\dd_{+}} 
    & \Omega^{+,0} \arrow[dl] \arrow[dr] 
    & \left.\right. 
    & \left.\right.& \\
    & &\left.\right. 
    & \Omega^{+,1} \arrow[dl] \arrow[dr] 
    & &  \Omega^{+,0}(C_-) \arrow[dl] & \left.\right. 
    & \\
    & & \Omega^{+,2}_{\rep{r_2}}  \arrow[dl] \arrow[dr] 
    & & \Omega^{+,1}(C_-) \arrow[dl] 
    \\
    & \Omega^{+,3}_{\rep{r_3}} \arrow[dr] \arrow[dl]& &
    \Omega^{+,2}_{\rep{r_2}}(C_-) \arrow[dl] 
    \\
    \phantom{\Omega^{+,3}_{\rep{r_3}}} & & \Omega^{+,3}_{\rep{r_3}}(C_-) \arrow[dl]
    \\
    & \phantom{\Omega^{+,3}_{\rep{r_3}}(C_-) }
    \end{tikzcd}
\end{equation}
where $\Omega^{+,\bullet}=\Gamma(\Lambda^\bullet C_+)$ and the diagram continues to the lower left as far as needed. We identify the bundles called $Q$ in refs~\cite{delaOssa:2014cia,delaOssa:2017pqy} with $C_-$ (or its complexification in the $\SU(N)$ case). The lower row of~\eqref{eq:het-double-complex} then forms the various differential complexes involving $Q$ written down in those references. Note that $C_- \simeq TM \oplus \End(V)$ and in the case $G = \SU(N)$ we have $(C_-)_\bbC \simeq TM_\bbC \oplus \End(V)_\bbC \simeq T^{(1,0)} \oplus T^{*(1,0)} \oplus \End(V)_\bbC$, which is the bundle $Q$ discussed in~\cite{delaOssa:2014cia}. The upper row forms part of the gauge structure needed to write BV actions for such theories.

\subsection{Spinor type complexes in \texorpdfstring{$O(d,d)\times\bbR^+$}{O(d,d)xR+} generalised geometry}
\label{sec:Odd-spinor}
We now specialise to the case of $O(d,d)$ generalised geometry relevant to the NS-NS fields of type II theories as in~\cite{CSW1}, where the generalised tangent space is an exact Courant algebroid. We will see that these geometries admit analogues of the spinor type complexes of section~\ref{sec:spinor-type}, where the \complex becomes a sum of spinors of $O(d,d)$ decomposed under the structure group. 
In $O(d,d)$ generalised geometry, one can have spinor type complexes for structure groups which do not preserve a generalised metric. For example, the structure groups $U(N,N) \subset \SO(2N,2N)$ relevant to generalised complex structures have this property. 
However, here we will focus on the case where the structure group has the form $G_+ \times G_-$ as in section~\ref{sec:G+G-} and the corresponding generalised metric is positive definite and generalised Ricci flat. 
In these cases, the complexes can be viewed as tensor products of two gauge theory spinor type complexes. Similarly to what was found in section~\ref{sec:spinor-type}, the spinor type property leads to relations between the various Laplacians appearing in the construction. In particular, we show that $\Delta_+ = \Delta_-$ for these cases, which when combined with the K\"ahler identities is sufficient to prove a $\der\bar\der$-type lemma. We also construct a different Laplacian operator which is shown to agree with the $H$-twisted de Rham Laplacian on the polyform representation of the spinors. We start by reviewing some features of spinors in $O(d,d)\times\bbR^+$ generalised geometry, and how the additional $\bbR^+$ factor in the structure group enables one to represent the spinors as polyforms.

First, let us recall some features of the spinor bundle $S(E)$ in $O(d,d)\times\bbR^+$ generalised geometry (see~\cite{CSW1} for full details) and the associated Clifford algebra $\Cliff(d,d;\bbR)$. The spinor bundle (with zero $\bbR^+$ weight) is isomorphic to the bundle $(\det T^*)^{-1/2} \otimes \Lambda^\bullet T^*$ whose sections are weighted polyforms. As such, we will here assume that we have a section $\Phi \in \Gamma(\det T^*)$ which gives us an isomorphism to weighted spinors $S(E)_{1/2}$ which can be represented directly as polyforms under the relevant $\GL(d,\bbR)$ subgroup of $O(d,d)\times\bbR^+$
\begin{equation}
	S(E) \stackrel{\Phi}{\simeq} S(E)_{1/2} \simeq \Lambda^\bullet T^*M
\end{equation}
In applications to physics, we will take $\Phi = \sqrt{g} \ee^{-2\phi}$ to be the natural string frame integration measure. We will also assume that our $O(d,d)\times\bbR^+$ connections are compatible with the density $\Phi$ such that we need not distinguish carefully between weighted and un-weighted spinors  when acting with our differential operators built from these connections. Recall that we also noted this as our assumption at the start of section~\ref{sec:G+G-}. 

We denote the generators of the Clifford algebra $\Cliff(d,d;\bbR)$ by $\Gamma^A$. In the cases we examine, the generalised structure group will always be contained in the maximal compact subgroup, so that it will define a generalised metric. We can thus consider the decomposition of the spinors and Clifford algebra under $\Spin(d) \times \Spin(d)$. As such, we decompose the index $A \ra (a,\ba)$, the vector indices for $C_{\pm}$ and let the matrices $\gamma^a$ and $\gamma^{\ba}$ be two sets of generators for $\Cliff(d,\bbR)$, again one for each of $C_\pm$. There are then several possible cases for the decomposition of the spinors and the matrices $\Gamma^A$, depending on the dimension $d$, as we review in appendix~\ref{app:Gamma}. However, these differ only slightly in form. The decomposition of the spinor always has the form
\begin{equation}
	\Psi = \sum \, \zeta_+ \otimes \zeta_- \otimes t
\end{equation}
where $\zeta_\pm \in S(C_\pm)$ and $t=1$ for $d$ even or a constant vector in a two dimensional auxiliary space for $d$ odd. There are similar tensor product decompositions of the matrices $\Gamma^A$ acting on the decomposed spinor, as we describe in more detail in appendix~\ref{app:Gamma}. For our purposes here, the important point to extract is that in all of these cases we have the decompositions
\begin{equation}
\label{eq:Gamma-2-decomp}
\begin{aligned}
	\Gamma^a \Gamma^b &= \eta^{ab} \id + \Gamma^{ab} 
		= (g^{ab} \id + \gamma^{ab})\otimes \id \otimes \id
		= (\gamma^a \gamma^b)\otimes \id\otimes \id \\
	\Gamma^{\ba} \Gamma^{\bb} &= \eta^{\ba\bb} \id + \Gamma^{\ba\bb} 
		= \id \otimes (-g^{\ba\bb} \id - \gamma^{\ba\bb})\otimes \id
		= -\id \otimes  (\gamma^{\ba} \gamma^{\bb})\otimes \id		
\end{aligned}
\end{equation}
This means that if we have two spinors $\zeta_\pm \in S(C_\pm)$ and we embed the tensor product $\zeta_+ \otimes \zeta_-$ into $S(E)$ by tensoring it with some constant auxiliary vector $t$ as above, then acting with $\Gamma^a \Gamma^b$ on it corresponds to acting on $\zeta_+ \otimes \zeta_-$ with $ (\gamma^a \gamma^b)\otimes \id$ and similarly acting with $\Gamma^{\ba} \Gamma^{\bb}$ on it corresponds to acting on $\zeta_+ \otimes \zeta_-$ with $ -\id \otimes  (\gamma^{\ba} \gamma^{\bb})$.

Finally, we recall from~\cite{CSW1} that for a torsion-free generalised connection the $O(d,d)$ Dirac operator $\slashed{D} = \Gamma^A D_A$ acts on the polyform presentation of a weighted spinor $\Psi$ via the exterior derivative. In fact, this is true in the ``twisted" picture" of generalised geometry in which a section of $S(E)_{1/2}$ is in fact a collection of local polyforms on patches of the space related by gauge transformations. One can also work in the ``untwisted picture" in which we use the $B$-field specified by the generalised metric to define a global polyform corresponding to the spinor. In this picture, the Dirac operator becomes $\dd_{H} = \dd + (H \wedge)$. In any presentation, one sees that the Dirac operator squares to zero 
\begin{equation}
\label{eq:Dirac-sqaured-zero}
	\slashed{D}^2 = (\Gamma^A D_A)^2 = 0 \ .
\end{equation}

We now use these facts to derive properties of the Laplacians on spinor type complexes as follows. First, we examine~\eqref{eq:Dirac-sqaured-zero} in terms of $\Cliff(d,d,\bbR)$ $\Gamma$-matrices with indices split under $O(d)\times O(d)$ acting on a generalised spinor
\begin{equation}
\label{eq:d-squared}
	(\Gamma^A D_A)^2 = (\Gamma^a D_a + \Gamma^{\ba} D_{\ba})^2 
		=  (\Gamma^a D_a)^2 + (\Gamma^{\ba} D_{\ba})^2 
			+ \Gamma^a \Gamma^{\ba} [D_a, D_{\ba}]
\end{equation}
The last term here can be expressed as the $\Spin(d) \times \Spin(d)$ action of the generalised curvature operator~\eqref{eq:gen-curvature} on the generalised spinor $\Psi$ which is given by
\begin{equation}
	[D_a, D_{\ba}] \Psi
	= \frac14 R_{a\ba bc} \Gamma^{bc} \Psi + \frac14 R_{a\ba \bb\bc} \Gamma^{\bb\bc} \Psi
\end{equation}
since in terms of $\Cliff(d,d,\bbR)$ $\Gamma$-matrices, $\Spin(d) \times \Spin(d)$ is generated by $\Gamma^{ab}$ and $\Gamma^{\ba\bb}$. Thus contracting with the additional $\Gamma$-matrices in~\eqref{eq:d-squared} we have 
\begin{equation}
\begin{aligned}
	\Gamma^a \Gamma^{\ba} [D_a, D_{\ba}] \Psi
	&= \frac14 \Gamma^{\ba}  ( R_{\ba a bc} \Gamma^a \Gamma^{bc}) \Psi 
		+ \frac14 \Gamma^a (R_{a\ba \bb\bc} \Gamma^{\ba} \Gamma^{\bb\bc}) \Psi \\
	&= \frac14 \Gamma^{\ba}  (R_{\ba [abc]} \Gamma^{abc} 
		+ 2 \eta^{ab} R_{\ba abc} \Gamma^{c}) \Psi \\
	&\qquad \qquad + \frac14 \Gamma^a  (R_{a[\ba \bb\bc]} \Gamma^{\ba\bb\bc} 
		+ 2\eta^{\ba\bb}R_{a\ba \bb\bc}\Gamma^{\bc}) \Psi \\
	&= 0
\end{aligned}
\end{equation}
where in the last step we have used the algebraic Bianchi identity~\eqref{eq:sym} and that $\eta^{ab} R_{\ba abc} \sim R_{\ba c} = 0$ and $\eta^{\ba\bb}R_{a\ba \bb\bc} \sim R_{a\bc}=0$ on a generalised Ricci flat manifold. So we have
\begin{equation}
\label{eq:d-squared-Laplacians}
	(\Gamma^a D_a)^2 + (\Gamma^{\ba} D_{\ba})^2 = 0
\end{equation}

Next we show that the terms $(\Gamma^a D_a)^2 \Psi$ and $(\Gamma^{\ba} D_{\ba})^2 \Psi$ become $\Delta_+$ and $-\Delta_-$ when seen as actions on an element of the double complex. 

From~\eqref{eq:Gamma-2-decomp}, on a spinor $\Psi = \zeta_+ \otimes \zeta_- \otimes t$, we have that 
\begin{equation}
	(\Gamma^a D_a)^2 \Psi 
		= \Big[ (\gamma^a D_a \otimes \id)^2 (\zeta_+ \otimes \zeta_-)\Big]  \otimes t
\end{equation}
The operator $(\gamma^a D_a)$ acting on the spinor $\zeta_+$, viewed as an element of the spinor type complex $\cx_+$, is the Dirac operator $\mc{D}_+$, the analogue of~\eqref{eq:Dirac-complex} for $\dd_+$, on $\cx_+$. When tensored with the $C_-$ spinor $\zeta_-$, the action of the generalised connections on $\zeta_-$ will be precisely such that $(\gamma^a D_a \otimes \id)$ acts on the tensor product as $\mc{D}_+$ on $\cx$. Thus 
\begin{equation}
	(\Gamma^a D_a)^2 \Psi = \Delta_+ \Psi
\end{equation}
Similarly, 
\begin{equation}
	(\Gamma^{\ba} D_{\ba})^2 \Psi 
		= - \Big[ (\id \otimes \gamma^{\ba} D_{\ba})^2 (\zeta_+ \otimes \zeta_-)\Big]  \otimes t
		= - \Delta_- \Psi
\end{equation}
so that~\eqref{eq:d-squared-Laplacians} becomes
\begin{equation}
\label{eq:Laplacians-equal}
	\Delta_+  = \Delta_-
\end{equation}
for the double complex $\cx$.

The equality \eqref{eq:Laplacians-equal} and the K\"ahler identities in particular imply the Hodge decomposition on compact spaces for each of $\check D$, $\dd_+$, $\dd_-$. This can be stated as 
\begin{equation}
\label{eq:filtered-Hodge}
	\cx^{p,q} = \dd_+ \dd_- \cx^{p-1,q-1} \oplus \dd_+^\dagger \dd_- \cx^{p+1,q-1} 
		\oplus \dd_+ \dd_-^\dagger \cx^{p-1,q+1} \oplus \dd_+^\dagger \dd_-^\dagger \cx^{p+1,q+1} \oplus \mc{H}^{p,q}
\end{equation}
From this it follows that we also have the $\der\bar\der$-lemma. For completeness, we repeat the proof here as it appears in K\"ahler geometry.

\begin{lemma}[\unboldmath{$\del\delb$}-lemma]
   Let $\alpha\in\cA^{p,q}$ be $\check{D}$-closed, where $\check{D}=\dd_++\dd_-$. Then the following are equivalent
   \begin{enumerate}[(a)]
       \item $\alpha$ is $\check{D}$-exact.
       \item $\alpha$ is $\dd_+$-exact.
       \item $\alpha$ is $\dd_-$-exact.
       \item $\alpha$ is $\dd_+\dd_-$-exact.
       \item $\alpha$ is orthogonal to the harmonic forms ${\cal H}^{(p,q)}$.
   \end{enumerate}
\end{lemma}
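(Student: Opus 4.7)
The plan is to follow the classical Kähler manifold argument almost verbatim, exploiting the two structural inputs already established: the Kähler identities $\{\dd_+,\dd_-^\dagger\}=\{\dd_-,\dd_+^\dagger\}=0$ and the equality of Laplacians $\Delta_+=\Delta_-$ from \eqref{eq:Laplacians-equal}. First I would set up the relevant Hodge theory. Expanding
\begin{equation*}
\check\Delta=\{\check D,\check D^\dagger\}=\Delta_++\Delta_-+\{\dd_+,\dd_-^\dagger\}+\{\dd_-,\dd_+^\dagger\},
\end{equation*}
the Kähler identities kill the cross terms, and then $\Delta_+=\Delta_-$ yields $\check\Delta=2\Delta_+=2\Delta_-$. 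Consequently, the harmonic spaces for $\check D$, $\dd_+$, and $\dd_-$ all coincide in each bidegree, and each operator admits the usual Hodge decomposition of $\cA^{p,q}$ into the image of itself, the image of its adjoint, and the common harmonic space $\mc H^{p,q}$.

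Next I would dispatch the easy implications. For (d)$\,\Rightarrow\,$(a), (b), (c): if $\alpha=\dd_+\dd_-\beta$ then $\alpha=\dd_+(\dd_-\beta)$, $\alpha=-\dd_-(\dd_+\beta)$ using $\{\dd_+,\dd_-\}=0$, and $\alpha=\check D(\dd_-\beta)$ using $\dd_-^2=0$. For each of (a), (b), (c)$\,\Rightarrow\,$(e): if $\alpha$ lies in the image of any of $\check D$, $\dd_+$, $\dd_-$, then $\langle\alpha,h\rangle=\langle\beta,\check D^\dagger h\rangle=0$ (or the analogous statement) for any harmonic $h$, since harmonic forms are annihilated by all three adjoints by the common-harmonic-space observation above.

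The substantive step is (e)$\,\Rightarrow\,$(d). Apply the Hodge decomposition with respect to $\dd_+$ in bidegree $(p,q)$, obtaining $\alpha=\dd_+\gamma+\dd_+^\dagger\delta+h$ with $h$ harmonic. By hypothesis $h=0$, and $\dd_+\alpha=0$ gives $\dd_+\dd_+^\dagger\delta=0$, so pairing with $\delta$ yields $\dd_+^\dagger\delta=0$; hence $\alpha=\dd_+\gamma$ with $\gamma\in\cA^{p-1,q}$. Now apply the Hodge decomposition of $\gamma$ with respect to $\dd_-$, writing $\gamma=\dd_-\eta+\dd_-^\dagger\mu+h'$ with $h'$ harmonic. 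Since $h'\in\ker\dd_+$ and using the Kähler identity $\dd_+\dd_-^\dagger=-\dd_-^\dagger\dd_+$, we get
\begin{equation*}
\alpha=\dd_+\dd_-\eta-\dd_-^\dagger\dd_+\mu.
\end{equation*}
Applying $\dd_-$ to both sides, using $\dd_-\alpha=0$ and $\{\dd_+,\dd_-\}=0$, the first term disappears and we find $\dd_-\dd_-^\dagger\dd_+\mu=0$. Pairing with $\dd_+\mu$ gives $\|\dd_-^\dagger\dd_+\mu\|^2=0$, so the second term also vanishes and $\alpha=\dd_+\dd_-\eta$, establishing (d).

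No step here is genuinely difficult once one has the Kähler identities and $\Delta_+=\Delta_-$; the main thing to be careful about is tracking bidegrees so that each Hodge decomposition and each inner-product pairing is taken in the correct graded component, and ensuring the Kähler identity is invoked with the right sign in reshuffling $\dd_+\dd_-^\dagger$ into $-\dd_-^\dagger\dd_+$.
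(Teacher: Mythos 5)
Your proof is correct and follows essentially the same route as the paper: the easy implications are handled identically, and for (e)$\Rightarrow$(d) you likewise first use the $\dd_+$-Hodge decomposition to write $\alpha=\dd_+\gamma$, then decompose the preimage under the $\dd_-$-Hodge decomposition and use the K\"ahler identity to kill the $\dd_-^\dagger$ and harmonic pieces. The only difference is that you spell out a couple of steps the paper leaves implicit (vanishing of the $\dd_+^\dagger$ component of $\alpha$ and of $\dd_+$ applied to the harmonic part of the preimage), which is fine.
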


\begin{proof}
    Clearly $(d)$ implies $(a)$, $(b)$ and $(c)$. The usual Hodge decomposition also gives that any of $(a)$, $(b)$ and $(c)$ imply $(e)$. To show that $(e)$ implies $(d)$, note that as $\alpha$ is $\check{D}$-closed of pure type, $\alpha$ is also $\dd_+$-closed and $\dd_-$-closed. Assuming $(e)$, and using the $\dd_+$-Hodge decomposition, it follows that $\alpha$ is $\dd_+$-exact,
    \begin{equation}
        \alpha=\dd_+\eta\:,
    \end{equation}
for $\eta\in\cA^{p-1,q}$. But $\eta$ can be decomposed under the $\dd_-$-Hodge decomposition as 
\begin{equation}
\eta =  \dd_-\gamma + \dd_-^\dagger \gamma' + \theta \ .
\end{equation} 
Using the K\"ahler identities, we see that only the $\dd_-$-exact part of $\eta$ contributes to $\alpha$:
\begin{equation}
0 = \dd_- \alpha = \dd_- \dd_+ \eta 
= - \dd_- \dd_-^\dagger \dd_+ \gamma'
\quad \Ra \quad 
\dd_-^\dagger \dd_+ \gamma' = 0 \ .
\end{equation} 
We therefore have
    \begin{equation}
        \alpha=\dd_+\dd_-\gamma\:,
    \end{equation}
    for $\gamma\in\cA^{p-1,q-1}$.
\end{proof}

Next, consider the Pin-cover $s$ of the generalised metric, viewed as an endomorphism. This has
\begin{equation}
	\mc G^A{}_B (s \Gamma^B s^{-1}) = \Gamma^A
	\hs{10pt} \Ra \hs{10pt} s \Gamma^A s^{-1} = \mc G^A{}_B \Gamma^B
\end{equation}
and thus
\begin{equation}
	s \Gamma^A D_A s^{-1} \Psi = G^{AB} \Gamma_A D_B \Psi
		= (\Gamma^a D_a - \Gamma^{\ba} D_{\ba}) \Psi
		:= \slashed{D}^{(\mc G)} \Psi
\end{equation}

Viewing the $O(d,d)$ spinor as a polyform, it is known~\cite{CSW1} that 
\begin{equation}
	s = \left\{ \begin{matrix} 
	\Gamma^{(+)} = \tfrac{1}{d!} \epsilon^{a_1 \dots a_d} 
		\Gamma_{a_1 \dots a_d} & \hs{10pt} d\text{ odd} \\
	\Gamma^{(-)} = \tfrac{1}{d!} \epsilon^{\ba_1 \dots \ba_d} 
		\Gamma_{\ba_1 \dots \ba_d} &  \hs{10pt}d\text{ even} 		
	\end{matrix} \right. 
\end{equation}
and that on a form $\psi_k$ of degree $k$
\begin{equation}
	\Gamma^{(+)} \psi_k = (-1)^{\lfloor n/2 \rfloor} * \psi_k
	\hs{30pt}
	\Gamma^{(-)} \psi_k = (-1)^d (-1)^{\lfloor (n+1)/2 \rfloor} * \psi_k
\end{equation}
Putting all these together, one finds that 
\begin{equation}
\label{eq:mass-shell}
	(\Delta_+ + \Delta_-) \Psi= (\Gamma^a D_a)^2 \Psi - (\Gamma^{\ba} D_{\ba})^2 \Psi
	= \tfrac12 \{ \slashed{D}, \slashed{D}^{(\mc G)} \} \Psi = \tfrac{1}{2} (-1)^d \Delta_{H} \Psi
\end{equation}
is essentially the $H$-twisted de Rham Laplacian $\Delta_{H} = \{\dd_H, \dd_H^\dagger \}$ (in the untwisted picture where $\slashed{D} = \dd_H$).

\section{Applications}
\label{sec:applications}

In this section, we will look at a few applications of the \complex to supergravity and string theory. To match conventions in supergravity, it is often easier to work in a different frame to the one used in the previous section. We will follow \cite{CSW1} and define a frame $\{ \hE_A \}$ for the generalised tangent bundle a frame $\{ E^{A} \}$ for the dual frame of the dual bundle induced by the generalised metric. 
That is, the natural inner product is $\langle E^A , \hE_{B} \rangle = \delta^A{}_B$. 
Note this differs to the conventions in the previous section in a relative minus sign when raising/lowering indices on sections of $C_{-}$.

\subsection{Moduli spaces of flux backgrounds}
\label{sec:moduli}

It is straightforward to see that the \complex $\cx$ we have constructed calculates the infinitesimal moduli of the underlying torsion-free generalised $G$-structure as its second cohomology group, by employing the same argument as in section~\ref{sec:grav}.
Indeed we can interpret the first few terms as (similar statements have appeared in~\cite{goto2005deformations,Ashmore:2015joa}) 
\begin{equation}
\label{eq:Odd-moduli-complex}
\begin{aligned}
	0 \ra \Gamma(\Lambda^{0} E) \stackrel{}{\ra} \Gamma(\Lambda^1 E) \stackrel{}{\ra}  \Gamma(\mathfrak{o}(d,d+n)/\mathfrak{g}) \stackrel{}{\ra} \Gamma(T^{(\text{int})}) \ra \dots
\end{aligned}
\end{equation}
Here, there is a reducible gauge symmetry generated by sections of $\Lambda^{0} E$ (i.e.\ scalars), which in cases relevant to supergravity corresponds to the gauge-of-gauge transformations for the $B$-field, so that the cohomology relevant to the moduli of the structure becomes the second cohomology here. 
Again the last term $T^{(\text{int})}$ written here corresponds to the bundle of which the intrinsic torsion of the generalised $G$-structure is a section. This can be understood as follows.

Consider a tensor $\Sigma \in E^* \otimes \mf{g}$ which can be thought of as the difference of two compatible generalised connections $D\colon E \ra E^* \otimes E$. Recall that the torsion of a generalised connection can be defined using the Dorfman derivative to give a tensor $T(D) \in \Gamma(\Lambda^3 E)$. 
We define a map $\tau\colon E^* \otimes \Lambda^2 E^* \ra \Lambda^3 E$ to give the difference of the torsions of the two connections. With respect to such the frame $\{\hE_{A}\}$, the map $\tau$ takes the form
\begin{equation}\label{eq:torsion}
	\tau(\Sigma)_{ABC} = 3 \Sigma_{[ABC]}
\end{equation}
This map restricts to a map $\tau|$ on $E^* \otimes \mf{g}$. We then have an exact sequence of bundles 
\begin{equation}
	0 \ra \ker (\tau|) \stackrel{\iota}{\ra} E^* \otimes \mf{g}
		\stackrel{\tau|}{\ra} \Lambda^3 E
		\stackrel{\pi}{\ra} T^{(\text{int})} \ra 0
\end{equation}
where we have defined the bundle of which the intrinsic torsion of a $G$-structure is a section by
\begin{equation}
	T^{(\text{int})} = \on{coker} (\tau|) = (\Lambda^3 E) / \im (\tau|)
\end{equation}
Given a $G$-compatible connection, the projection of the torsion of this connection onto $T^{(\text{int})}$ does not change if one shifts to a different $G$-compatible connection by adding to it a tensor $\Sigma \in E^* \otimes \mf{g}$. Therefore, it is independent of the choice of $G$-compatible connection and represents a property of the $G$-structure itself. It is called the intrinsic torsion of the structure, and can be thought of as a part of the torsion which is common to all connections compatible with the $G$-structure.

As in section~\ref{sec:grav}, we can interpret the maps $\check{D}$ by considering a $G$-frame $\hE_A$ for $E$ and a $G$-compatible connection $D$ with
\begin{equation}
	D_V \hE_A = \Omega_V{}^B{}_A \hE_B
\end{equation}
where again we use the notation $\Omega_V{}^B{}_A = V^C \Omega_C{}^B{}_A$ and $\Omega_V$ is a section of $\mf{g}\subset E^*\otimes E$.

As in section~\ref{sec:grav}, for simplicity we assume reducibility so that we can decompose\footnote{Again, this is not necessary: one can instead work via projections onto the quotient $\mf{so}(d,d+n) / \mf{g}$.}
\begin{equation}
	\mf{so}(d,d+n) = \mf{g} \oplus \mf{k}
\end{equation}
and write a frame for a nearby $G$-structure as
\begin{equation}
	\hE'_A = \hE_A + X^B{}_A \hE_B
\end{equation}
where the generalised tensor $X$ lies in $\mf{k}$.

We can then compute the intrinsic torsion induced by the deformation $X$ of the structure, working to first order in $X$. We consider a deformed compatible connection $D' = D + \Sigma$, for $\Sigma \in \Gamma(E^* \otimes \mf{g})$, so that
\begin{equation}
\begin{aligned}
	\Omega'_V{}^B{}_A \hE'_B = D'_V \hE'_A = D_V \hE'_A + \Sigma_V{}^B{}_A \hE'_B
\end{aligned}
\end{equation}
and we have
\begin{equation}
\begin{aligned}
	D_V \hE'_A &= D_V (\hE_A + X^B{}_A \hE_B) \\
	&= \Omega_V{}^B{}_A \hE_B 
		+ (\der_V X^B{}_A) \hE_A + X^B{}_A \Omega_V{}^C{}_B \hE_C \\
	&= \Omega_V{}^B{}_A (\hE'_B - X^C{}_B \hE'_C) 
		+ (\der_V X^B{}_A + \Omega_V{}^B{}_C X^C{}_A) \hE'_B + O(X^2) \\
	&= (\Omega_V{}^B{}_A  + D_V X^B{}_A) \hE'_B + O(X^2)
\end{aligned}
\end{equation}
The shift in the components of the connection (each with respect to their corresponding frames) is thus given by 
\begin{equation}
\begin{aligned}
	\Omega'_V{}^A{}_B - \Omega_V{}^A{}_B = \Sigma_V{}^A{}_B + D_V X^A{}_B
\end{aligned}
\end{equation}
and the quantities on both sides must lie in $\mf{g}$. Let us again split the tensor $\Sigma$ via 
\begin{equation}
	\Sigma = \Sigma^{(\mf{g})} + \Sigma^{(\mf{k})}
\end{equation}
and the previous equation tells us that
\begin{equation}
	\Sigma^{(\mf{k})}_V = - D_V X
\end{equation}
Using the map $\tau$ as above and defining $\tau^{(\text{int})} = \pi \circ \tau$, we have
\begin{equation}
\begin{aligned}
	\tau^{(\text{int})}(\Sigma) &= \tau^{(\text{int})}(\Sigma^{(\mf{g})} + \Sigma^{(\mf{k})})
		= \tau^{(\text{int})}(\Sigma^{(\mf{k})})
		= \pi (\tau(\Sigma^{(\mf{k})})) \\
		&= \mc{P}_{\mc{A}^2} \Big( -\tfrac1{3!} (3 D_{[A} X_{BC]}) \; 
			E^A \wedge E^B \wedge E^C \Big) \\
		&= -\mc{P}_{\mc{A}^2} \Big( \hat{D} X \Big)\\
		&= - \check{D} X
\end{aligned}
\end{equation}
so that for $X \in \mc{A}^1$ we have that $\check{D} X$ is proportional to the intrinsic torsion of the new $G$-structure to first order in $X$.

Next we examine which tensors $X$ are induced by the action of an infinitesimal generalised diffeomorphism generated by a generalised vector $V \in \Gamma(E)$. The action on generalised tensors is via the Dorfman derivative so we have
\begin{equation}
\label{eq:Gdiff-variation}
	\delta \hE_A = L_V \hE_A = L^{D}_V \hE_A
	 	= (D_V - D \times_{\mf{so}(d,d)} V) \cdot \hE_A \\
		= \Big( \Omega_V{}^B{}_A - 2 \eta^{BC}D_{[A} V_{C]} \Big) \hE_B
\end{equation}
The parts of the last expression in parentheses which lies in $\mf{g}$ merely rotate $\hE_A$ to a new $G$-frame for the original $G$-structure and thus gives no deformation of the structure itself. We are thus interested only in the part which lies in $\mf{k}$. As $\Omega_V$ lies in $\mf{g}$, this is given by
\begin{equation}
	X = - \mc{P}_{\mf{k}} \Big(\tfrac{1}{2} (2 D_{[A} V_{B]}) E^A \wedge E^B \Big)
		= - \mc{P}_{\mc{A}^1} (\hat{D} V)
		= - \check{D} V
\end{equation}
We thus see that the infinitesimal moduli space of torsion free generalised $G$-structures is given by
\begin{equation}
\label{eq:gen-G-str-moduli}
\begin{aligned}
	\mc{M}_{G\text{-str}} 
		= \frac{\{ X : \check{D} X = 0\}}{\{ X = \check{D} V \}}
		= H^1(\mc{A})
\end{aligned}
\end{equation}
exactly as for~\eqref{eq:instanton-moduli} and~\eqref{eq:G-str-moduli}. 

Note that for the heterotic cases \begin{equation}G \times O(d+n) \subset O(d)\times O(d+n)\subset O(d,d+n),\end{equation} the \complex takes the form~\eqref{eq:het-double-complex}, in which the lower row is the tensor product of a gauge theory \complex for $C_+$ tensored with $C_-$ 
and the differential $\dd_+$ is defined using a generalised connection. Denote this complex by $\mc{A}_+(C_-)$. Again, the bundle $C_-$ (or its complexification) is the one denoted $Q$ in~\cite{delaOssa:2014cia,delaOssa:2017pqy}, and $\mc{A}_+(C_-)$ is the complex used to compute the moduli in those references. 
The upper row is the gauge theory \complexNoSpace, denoted here $\mc{A}_+$, again with the differential $\dd_+$ defined using a generalised connection . Denoting by $\mc{A}_T$ the total complex, one has a short exact sequence of complexes:
\begin{equation}
	0 \ra \mc{A}_+(C_-)[1] \ra \mc{A}_T \ra \mc{A}_+ \ra 0
\end{equation}
where the maps are inclusion and projection and $[1]$ denotes a degree shift by one. 
If one has $H^{k-1}(\mc{A}_+) = H^{k}(\mc{A}_+) = 0$ for some $k$, then the associated long exact sequence in cohomology gives us
\begin{equation}
	H^k (\mc{A}_T) \simeq H^k(\mc{A}_+(C_-))
\end{equation}
For $k=2$, this would give us the infinitesimal moduli space of the $G \times O(d+n) \subset O(d,d+n)$ structure in terms of the cohomology of $\mc{A}_+(C_-)$, which was the moduli space appearing  in~\cite{delaOssa:2014cia,delaOssa:2017pqy}. 

Note that naively this is not quite the full moduli space of the supersymmetric background, which would be the moduli space of a $G \times O(d+n) \subset O(d,d+n) \times \bbR^+$ structure, with the additional $\bbR^+$ factor corresponding to the dilaton field. Thus in general, $H^2 (\mc{A}_T)$ will give the moduli space of the supergravity background at fixed dilaton. 

However, in the case of $\SU(3)$ structures, as considered in~\cite{delaOssa:2014cia}, it turns out very non-trivially that it matches the physical moduli space. This follows from similar reasoning to that used in the discussion of gauge theory instantons on torsion-free $\SU(N)$ structures in section~\ref{sec:Carrion}. 
In particular, we take the analogue of the complex~\eqref{eq:Dolbeault-complex-anti} with the complexified scalar, so that overall we have
\begin{equation}
\label{eq:het-su3-double-complex-moduli}
    \begin{tikzcd}[cramped, column sep = tiny, row sep = tiny, cells={text width={width("$\Omega^{(0,0)}_+(C_-)$")}, align=center}]
    & & & \arrow[dr, "\dd_{-}", xshift=2ex, yshift=-2ex]  \\
    & & \arrow{dl}[swap]{\dd_{+}} 
    & \Omega^{(0,0)}_+ \arrow[dl] \arrow[dr] 
    & \left.\right. 
    & & \\
    &\left.\right. 
    & \Omega^{(0,1)}_+ \arrow[dl] \arrow[dr] 
    & &  \Omega^{(0,0)}_+(C_-) \arrow[dl] & \left.\right. 
    & \\
    & \Omega^{(0,2)}_+ \arrow[dl] \arrow[dr] 
    & & \Omega^{(0,1)}_+ (C_-) \arrow[dl] 
    \\
    \Omega^{(0,3)}_+ \arrow[dr] & &
   \Omega^{(0,2)}_+(C_-) \arrow[dl] 
    \\
    &\Omega^{(0,3)}_+(C_-)
    \\
    \end{tikzcd}
\end{equation}
Taking the tensor product with the (complex) forms in $\Omega_+^{0,\bullet}$ effectively complexifies the $C_-$ bundle. 
Consequently, taking the cohomology in $\dd_+$ effectively is a quotient by the complexification of the infinitesimal gauge transformations of the supergravity theory. 
Further, as we actually construct the \complex using the Lie algebra $\mf{u}(3) \otimes \mf{so}(6+n)$ we are really computing the moduli of the structure as a $U(3) \times \SO(6+n) \subset \SO(6,6+n)$ structure, which corresponds to the $J$-structure of~\cite{Ashmore:2019rkx}. 
To get the physical moduli space, one needs to include the $\psi$-structure, which includes an additional ${C}^\infty(\bbC^*)$ of degrees of freedom compared with the $J$-structure, as we will discuss below in section~\ref{sec:6d-superpotential}.  
The physical moduli space (plus an additional $\bbC^*$ factor corresponding to constant rescaling of $\psi$) is the K\"ahler quotient of the moduli space of $\psi$-structures with integrable $J$-structure by the gauge transformations. But one expects that this can be computed as a regular quotient by the complexified gauge transformations, at least if one restricts to so-called polystable points. (For a more detailed discussion of this construction, see~\cite{Ashmore:2019rkx}.) 
As a $\bbC^*$ family of integrable $\psi$-structures give the same $J$-structure, the moduli space of $J$-structures should match the physical moduli space. 
Thus, the second cohomology of~\eqref{eq:het-su3-double-complex-moduli} in fact should match the physical moduli space. 


\subsection{Target spaces for topological strings}

As mentioned in section \ref{sec:GxG_in_type_II}, the first appearance of the type II \complex is in the study of topological strings of~\cite{Ashmore:2021pdm}. It was noted that the double complex precisely captures the target space realisation of the worldsheet BRST complex of the topologically twisted sigma model. Moreover, the 1-loop partition function of the theory calculates a quantity associated to the complex called the analytic torsion.

To see this, note that in an $\mathcal{N}=(1,1)$ 2-dimensional sigma model, the left- and right-moving fermion fields $\psi_{\pm}$ can be seen as sections of $S(\Sigma)\otimes \phi^{*}(C_{\pm})$ respectively, where $S(\Sigma)$ is the worldsheet spinor bundle and $\phi\colon\Sigma \rightarrow M$ is the embedding function. When the target space has a refined $G$-structure, the worldsheet theory has enhanced symmetry. The most famous example is when the target space is K\"ahler and the worldsheet theory has enhanced $\mathcal{N}=(2,2)$ supersymmetry. For the $G_{2}$ and $Spin(7)$ strings, the enhanced symmetries are particular $\mathcal{W}$-algebras. Schematically, these algebras are generated by the worldsheet operators $T, G$, which generate the left/right-moving $\mathcal{N}=1$ Virasoro algebra, as well as operators $X,M$ which take the approximate form\footnote{More details can be found in \cite{Ashmore:2021pdm}.}
\begin{equation}
    X= \Phi_{a_{1}...a_{n}}\psi^{a_{1}}...\psi^{a_{n}}+... \ , \qquad M = \Phi_{a_{1}...a_{n}}\psi^{a_{1}}...\del\phi^{a_{n}}+... \ .
\end{equation}
Here $\Phi_{a_{1}...a_{n}}$ are the components of differential forms in singlet representations with respect to the $G$-structure.\footnote{For $G_{2}$ structures and $SU(3)$ structures, we get one set of tensors for \emph{each} singlet differential form.}

In each of the cases above, one can define a twist of the theory by an operator related to the spectral flow operator of the SCFT. In particular, one looks for a bosonic operator $\rho$ through which one can define a twisted energy momentum tensor
\begin{equation}
    T_{\text{twist}}\sim T + \del^{2}\rho
\end{equation}
which has vanishing central charge. In the $SU(3)$ case, the operator $\ee^{\ii\rho}$ defines spectral flow of the worldsheet SCFT. In the $G_{2}$ and $Spin(7)$ cases, one can bosonise the theory and find that
\begin{equation}
    X\sim (\del\rho)^{2} + \del^{2}\rho
\end{equation}
In any case, the new twisted energy momentum tensor provides new Lorentz charges to all the fields. One finds that the dimensionless operators can be identified with sections of $\bx^{p,q}$, where $\bx^{p,q}$ are the vector bundles in the double complex defined in section \ref{sec:GxG_in_type_II}, i.e.\ the vector bundles appearing in the \complexNoSpace. Furthermore, one can find a nilpotent operator $Q$ acting on the Hilbert space which decomposes into left- and right-moving pieces and acts as
\begin{equation}
    Q=Q_{L} + Q_{R}
\end{equation}
\begin{equation}\label{eq:pullback}
    Q_{L}\colon\Gamma(\bx^{p,q})\to \Gamma(\bx^{p+1,q}) \ , \quad Q_{R}\colon\Gamma(\bx^{p,q})\to \Gamma(\bx^{p,q+1})
\end{equation}
The left- and right-moving pieces are the operators $\dd_{\pm}$ from the double complexes of section~\ref{sec:GxG_in_type_II} \begin{equation}
    Q_{L} =  \dd_{+} \ , \quad Q_{R} =  \dd_{-}\ .
\end{equation}

From this, the authors in \cite{Ashmore:2021pdm} were able to identify the 1-loop partition function by using the formula for free energy at 1-loop
\begin{equation} \label{eq:topl-string-one-loop}
    \mathcal{F}_{1} = \delta(H_{L}-H_{R}) \frac{1}{2}\log\left[ \prod_{F_{L},F_{R}} (\det{}'(H_{L}+H_{R}))^{(-1)^F F_{L}F_{R} } \right]\ .
\end{equation}
Here $F_{L/R}$ is the left/right fermion number and $H_{L}=\{Q_{L},Q_{L}^{\dagger}\}$ is the left-moving Hamiltonian, and similarly for $H_{R}$. By the identification of $Q_{L/R}$ with $\dd_{\pm}$, and using the fact that $\Delta_{+} = \Delta_{-}$ for these cases,\footnote{The Laplacians $\Delta_\pm$ correspond to the worldsheet Hamiltonians for the left- and right-moving modes. The condition~\eqref{eq:Laplacians-equal} then becomes the level matching condition on the worldsheet. It is then natural that this is related to the vanishing of the square of the Dirac operator on the generalised spinor, as in the coordinate basis one has:
\begin{equation}
	\dd^2 = (\Gamma^A \der_A)^2 = \der^A \der_A + \Gamma^{AB} \der_{[A} \der_{B]} 
	= \der^A \der_A
\end{equation}
and the ``weak constraint" $\der^A \der_A = 0$ is well-known to be related to the level matching condition in the literature on double field theory~\cite{Hull:2009mi}. The other natural Laplacian operator~\eqref{eq:mass-shell} has leading term $\Delta \Phi \sim 2 G^{AB} D_A D_B + \dots$, and being harmonic with respect to this operator is analogous to the mass shell condition.} we identify the term in the $[...]$ as the analytic torsion of the \complexNoSpace.

In the case of topological strings on K\"ahler manifolds, and topological strings on $\G_{2}$ manifolds, where the \complex admits a pairing, we can also provide a Chern--Simons-like description of the target space theory. In these cases, the pairing is given by 
\begin{equation}
	\langle f, f' \rangle = \int_M \Phi f \wedge f' \ , \qquad f\in \cx^{\bullet}\ .
\end{equation}
where $\Phi = \sqrt{g} \ee^{-2\phi}$ is the generalised density for the dilaton field $\phi$ and we assume that we work with a generalised Levi-Civita connection which preserves this density. 

One can then construct a diagonal complex $\mc{D}$ from the double complex via $\mc{D}^k = \cx^{k,k}$, corresponding to the level-matched worldsheet operators, with second order differential $\dd_{\mc{D}} = \dd_+ \dd_-$. The ghost number zero fields in the theory then form a general element $f \in \mc{D}^\bullet$, with Chern--Simons type action
\begin{equation}
\label{eq:topl-string-CS}
	S = \langle f_0, \dd_{\mc{D}} f \rangle = \int_M \Phi f \wedge \dd_+ \dd_- f \ , \qquad f\in \mc{D}^{\bullet}\ .
\end{equation}
However, unlike in our examples above, this is only the ghost number zero fields and to write a BV action, one must introduce ghosts and anti-fields, which can also be described in terms of elements of the \complex $\cx$, and the BRST operator which can be described in terms of $\dd_+$ and $\dd_-$. The gauge symmetry for the field strength $F = \dd_{\mc{D}} f = \dd_+ \dd_- f$ is
\begin{equation}
	\delta f^{k,k} = \dd_+ \lambda^{k-1,k} + \dd_- \kappa^{k,k-1}
\end{equation}
so that the cohomology of the BRST complex at ghost number zero is given by the Aeppli cohomologies of the double complex $\cx$ for the degree $(k,k)$ elements. One could thus think of this theory as a Chern--Simons-type theory for the Aeppli cohomology. One can represent the full set of fields and anti-fields diagrammatically using a series of copies of the double complex $\cx$, one for each element $f$ appearing in~\eqref{eq:topl-string-CS}, and selecting only a particular subset of the elements in each, similarly to the complexes appearing in~\cite{Tseng:2009gr,tseng2014generalized}. Further details of this construction will be presented elsewhere. 
Note, however, that the $\der\bar\der$-lemma means that the Aeppli cohomologies are isomorphic to the $\dd_\pm$-cohomologies for these complexes, which are in turn isomorphic to the spaces of harmonic elements. Thus, despite the apparently different form of the action, the classical states for these theories are still related to on-shell deformations of the classical backgrounds.

The BV quantisations of the actions~\eqref{eq:topl-string-CS} were previously shown~\cite{Pestun:2005rp,Ashmore:2021pdm} to reproduce the analytic torsion appearing in the one-loop partition functions~\eqref{eq:topl-string-one-loop}.


\subsection{\texorpdfstring{$\SU(3)$}{SU(3)} heterotic superpotential}
\label{sec:6d-superpotential}

In this section, we will see that in six-dimensions the heterotic $U(3)\times\SO(6+n)$ complex (on a supersymmetric background with a torsion-free $\SU(3)\times\SO(6+n) \subset \SO(6,6+n)\times\bbR^+$ structure) is closely related to the theory of~\cite{Ashmore:2018ybe,Ashmore:2023vji} based on the heterotic superpotential~\cite{Gurrieri:2004dt,delaOssa:2015maa,McOrist:2016cfl}. In fact, one has to perform a mild extension of this \complex to give it a BV symplectic pairing. One can easily see that an enlargement of some kind must be necessary as the superpotential depends on the dilaton, which does not appear in the degrees of freedom of the $U(3)\times\SO(6+n) \subset \SO(6,6+n)$ structure (one must enlarge the generalised structure group to $\SO(6,6+n)\times\bbR^+$ to see this degree of freedom~\cite{CSW1}). Correspondingly, the equation of motion of the superpotential theory is more constraining than simply requiring that the $U(3)\times\SO(6+n)$ structure is torsion-free~\cite{Ashmore:2019rkx}.  
There is an additional part of the generalised intrinsic torsion of the enclosing $\SU(3)\times\SO(6+n) \subset \SO(6,6+n)\times\bbR^+$ structure transforming in the $\repp{\bar{3}}{1}$ representation that is constrained to vanish by the vanishing of the superpotential and its variation.
However, these points are cured by using a simple extended version of the $U(3)\times\SO(6+n)$ complex described below, and we will see that the natural Chern--Simons-type theory for this complex precisely reproduces the heterotic superpotential at quadratic order. 

The superpotential is a functional of a generalised $\SU(3)\times\SO(6+n) \subset \SO(6,6+n)\times \bbR^+$ structure. 
It was shown in~\cite{Ashmore:2019rkx} how a generalised $U(3)\times\SO(6+n) \subset \SO(6,6+n)$ structure (a ``$J$ structure") is defined by a generalised tensor $J \in \Gamma(\Lambda^2 E)$. As $U(3)\times\SO(6+n) \subset \SO(6)\times\SO(6+n)$, this defines a generalised metric on $E$, and $J$ annihilates $C_-$ and squares to minus one acting on $C_+$. 
The (complexified) $C_+$ bundle is thus split into the $+\ii$-eigenbundle $C_+^{1,0}$ and the $-\ii$-eigenbundle $C_+^{0,1}$. 
If the line-bundle $\Lambda^3 C_+^{0,1}$ is trivial, then a further reduction of the structure group to $\SU(3)\times\SO(6+n)$ is possible, and this has an associated invariant tensor $\chi \in \Gamma(\Lambda^3 C_+^{0,1}) \subset \Gamma(\Lambda^3 E)$. 
To define a $\SU(3)\times\SO(6+n) \subset \SO(6,6+n) \times\bbR^+$ structure, one requires a generalised tensor $\psi \in \Gamma(\Lambda^3 E \otimes \det T^*M)$, which is the product of a tensor of the same type as $\chi$ above with a density $\Phi = \sqrt{g} \ee^{-2\phi}$ that trivialises the $\bbR^+$ factor of the enlarged generalised structure group $\SO(6,6+n) \times\bbR^+$ which includes the dilaton. Such a tensor defines the $\SU(3)\times\SO(6+n)$ subgroup entirely and thus also defines a generalised metric, a dilaton and an associated $J$ structure.

For a $\psi$-structure, with associated $J$-structure, the value of the heterotic superpotential is 
\begin{equation}
\label{eq:gen-superpotential}
    W \sim \int \tr (J \cdot D \proj{\adj} \psi) \sim \int \psi^{ABC} D_A J_{BC}
\end{equation}
In~\cite{Ashmore:2019rkx}, it was stated that $D$ in this equation is a generalised Levi-Civita connection, but in fact the equation holds for any torsion-free generalised connection. 

Let us write $\psi^{ABC} = \Phi \chi^{ABC}$ where $\Phi = \sqrt{g} \ee^{-2\phi}$ is the generalised volume density. 
We define an $\SU(3)\times\SO(6+n)$ frame $\{ \hE_A \} = \{ \hE^+_a, \hE^+_{\ba}, \hE^-_{\tm} \}$ (nb.\ $a,b = 1,2,3$ are holomorphic indices for $C_+$ and $\tm, \tn = 1,\dots, 6+n$ are real indices for $C_-$) to be one for which we have:
\begin{equation}
\label{eq:SU3xSO6-frame}
	J^a{}_b = \ii \delta^a{}_b
	\hs{30pt}
	J^{\ba}{}_{\bb} = -\ii \delta^{\ba}{}_{\bb}
	\hs{30pt}
	\chi^{\ba\bb\bc} = \epsilon^{\ba\bb\bc}
	\hs{30pt}
	\Phi = 1	
\end{equation}
where the last condition also fixes the $\bbR^+$ frame. 

If we then vary the connection in~\eqref{eq:gen-superpotential} by an arbitrary tensor $\Sigma_A{}^B{}_C$, we find
\begin{equation}
\begin{aligned}
	\delta( \psi^{ABC} D_A J_{BC}) &= \epsilon^{\ba\bb\bc} (\Sigma_{\ba} \cdot J)_{\bb\bc} \\
		&= \epsilon^{\ba\bb\bc} (-2 \Sigma_{\ba}{}^E{}_{\bb} J_{E\bc}) \\
		&= \epsilon^{\ba\bb\bc} (-2 \Sigma_{\ba}{}^e{}_{\bb} J_{e\bc}) \\
		&= 2 \ii \epsilon^{\ba\bb\bc} ( \Sigma_{[\ba\bc\bb]}) \\
\end{aligned}
\end{equation}
which vanishes if $\Sigma$ is torsion-free (so that $\Sigma_{[ABC]} = 0$). This demonstrates the claim above that the connection appearing in the definition of the superpotential~\eqref{eq:gen-superpotential} can be any torsion-free generalised connection. 

We can then parameterise an infinitesimal variation of the structure via a variation of the frame and density:
\begin{equation}
\label{eq:gen-deform}
\begin{aligned}
	\delta \hE^+_a &= \ii \alpha \hE^+_a 
		+ \bar\beta_{a}{}^{\bb} \hE^+_{\bb} + \Lambda_{a}{}^{\tm}  \hE^-_{\tm} \\
	\delta \hE^+_{\ba} &= -\ii \alpha \hE^+_{\ba} 
		+ \beta_{\ba}{}^{b} \hE^+_{b} + \Lambda_{\ba}{}^{\tm}  \hE^-_{\tm}\\
	\delta \hE^-_{\tm} &= \Lambda^{a}{}_{\tm} \hE^+_a  + \Lambda^{\ba}{}_{\tm} \hE^+_{\ba} \\
	\delta \Phi &= \Lambda \Phi
\end{aligned}
\end{equation}
The parameters in this equation parameterise a (real) element of $\mf{so}(6,6+n) \times \bbR^+ / \mf{su}(3)\times \mf{so}(6+n)$ at each point, 
so that $\hE^+_{\ba} = (\hE^+_a)^*$, $\Lambda_{\ba}{}^\tm = (\Lambda_a{}^{\tm})^*$ and $ \bar\beta_{a}{}^{\bb} = (\beta_{\ba}{}^{b})^*$ . Note here we are using the convention of raising/lowering decomposed indices with the ordinary metric $g_{a\bb}$ as in~\cite{CSW1}. The ``canonical" index positions in which they are matched with the $O(d,d+n)$ indices are
\begin{equation}
	J^A{}_B \ , \hs{30pt} \psi^{ABC} \ , \hs{30pt} \Lambda^{A}{}_B \ , \hs{30pt} \hE_A \ ,
 \hs{30pt} E^A \ .
\end{equation}

To calculate the components of $\delta \psi^{ABC}$ and $\delta J^A{}_B$ we write
\begin{equation}
\begin{aligned}
	\psi' &= \Phi' \chi'
		= \tfrac{1}{3!} \Phi' \epsilon^{\ba\bb\bc} 
			\hE'^+_{\ba} \wedge \hE'^+_{\bb} \wedge \hE'^+_{\bc} \\
	J' &= \ii \delta^a{}_b \hE'^+_a\otimes E'^{+b} 
		- \ii \delta^{\ba}{}_{\bb} \hE'^+_{\ba} \otimes E'^{+\bb}
\end{aligned}
\end{equation}
and then expand out the primed objects to first order in the deformation parameters $\alpha, \beta_{\ba\bb}, \Lambda_{\ba}{}^\tm$ and $\Lambda$. This results in
\begin{equation}
\begin{aligned}
	\delta\psi^{\ba\bb\bc} &= (\Lambda - 3\ii\alpha)\epsilon^{\ba\bb\bc}
	\hs{30pt}
	\delta\psi^{\ba\bb c} = \epsilon^{\ba\bb\be} \beta_{\be}{}^{c}
	\hs{30pt}
	\delta\psi^{\ba\bb\tm} = \epsilon^{\ba\bb\be} \Lambda_{\be}{}^{\tm}\\
	\delta J^{\tm}{}_{a} &= \ii \Lambda_a{}^{\tm}
	\hs{30pt}
	\delta J^{a}{}_{\tm} = -\ii \Lambda^{a}{}_{\tm}\\
	\delta J^{\tm}{}_{\ba} &= -\ii\Lambda_{\ba}{}^{\tm}
	\hs{30pt}
	\delta J^{\ba}{}_{\tm} = \ii \Lambda^{\ba}{}_{\tm} \\
	\delta J^{b}{}_{\ba} &= -2\ii \beta_{\ba}{}^b
	\hs{30pt}
	\delta J^{\bb}{}_{a} = 2\ii \beta_a{}^{\bb} \\
\end{aligned}
\end{equation}

Using that one can take $D$ in~\eqref{eq:gen-superpotential} to be any fixed background torsion-free connection (i.e.\ it does not depend on the structure), it is then very straightforward to calculate the perturbative expansion of the superpotential. One has:
\begin{equation}
    \delta W \sim \int \Big[ \delta \psi^{ABC} D_A J_{BC} + \psi^{ABC} D_A \delta J_{BC} \Big]
\end{equation}
and 
\begin{equation}
    \delta^2 W \sim \int \Big[ \delta^2 \psi^{ABC} D_A J_{BC} + 2\delta \psi^{ABC} D_A \delta J_{BC} 
    	+ \psi^{ABC} D_A \delta^2 J_{BC} \Big]
\end{equation}
However, if we are expanding around a supersymmetric background, we can make the convenient choice that $D$ is a torsion-free compatible connection for that background solution. This means that in the background $D J = D\psi = 0$, so on integration-by-parts $\delta W = 0$ (as it should be around a supersymmetric solution) and 
\begin{equation}
\label{eq:W-quadratic}
\begin{aligned}
    \delta^2 W &\sim \int  2\delta \psi^{ABC} D_A \delta J_{BC} \\
    &= \int 2\ii \epsilon^{\ba\bb\bc} \Big[ 
    	-2 \Lambda_{\ba}{}^{\tm} D_{\bb} \Lambda_{\bc\tm}
	+2 \Lambda_{\ba}{}^{\tm} D_{\tm} \beta_{\bb\bc}
	+ 6 \varphi D_{\ba} \beta_{\bb\bc}
	- 2\beta_{\ba}{}^{e} D_{e} \beta_{\bb\bc} \Big]
\end{aligned}
\end{equation}
where we have defined the complex scalar $\varphi = \Lambda - 3\ii\alpha$. This is the quadratic action (for ghost number zero fields) which we seek to recover from our extended \complex below. 
Note that it depends only on the variables $(\Lambda_{\ba}{}^{\tm}, \beta_{\ba\bb}, \varphi)$ and not their complex conjugates, so that it is indeed holomorphic on the parameter space. 


We now examine the heterotic complex~\eqref{eq:het-double-complex} for the structure group $U(3) \times \SO(6+n)$ in six-dimensions, which takes the form:
\begin{equation}
\label{eq:het-su3-double-complex}
    \begin{tikzcd}[cramped, column sep = tiny, row sep = tiny, cells={text width={width("$\Omega^{(0,0)}_+(C_-)$")}, align=center}]
    & & & \arrow[dr, "\dd_{-}", xshift=2ex, yshift=-2ex]  \\
    & & \arrow{dl}[swap]{\dd_{+}} 
    & \Omega^{(0,0)}_+ \arrow[dl] \arrow[dr] 
    & \left.\right. 
    & & \\
    &\left.\right. 
    & \Omega^{(0,1)}_+ \arrow[dl] \arrow[dr] 
    & &  \Omega^{(0,0)}_+(C_-) \arrow[dl] & \left.\right. 
    & \\
    & \Omega^{(0,2)}_+ \arrow[dl] \arrow[dr] 
    & & \Omega^{(0,1)}_+ (C_-) \arrow[dl] 
    \\
    \Omega^{(0,3)}_+ \arrow[dr] & &
   \Omega^{(0,2)}_+(C_-) \arrow[dl] 
    \\
    &\Omega^{(0,3)}_+(C_-)
    \\
    \end{tikzcd}
\end{equation}
One might try to write a BV action associated to the total complex of~\eqref{eq:het-su3-double-complex}, i.e.\ a free-field theory whose equations of motion are the statement that the fields in $\Omega^{(0,2)}_+ \oplus \Omega^{(0,1)}_+(C_-)$ are closed in the total differential, with the gauge symmetry shifting them by exact pieces. Unfortunately,~\eqref{eq:het-su3-double-complex} does not have a cyclic structure (in the language of the $L_\infty$-algebra community) or graded symplectic pairing (in the BV language). The row with $C_-$ factors is self-dual, but there are no partners for the fields and anti-fields in the first row. This means that one cannot apply the standard procedure for constructing an action from the complex (see e.g.~\cite{Hohm:2017pnh}).

However, there is a simple procedure to construct a complex equipped with a cyclic structure from one that does not. Consider a general complex $(C, \dd_C)$. If one simply adds the dual complex shifted by one degree $(C^*[-1], \dd_C^*)$ to form
\begin{equation}
\label{eq:add-dual-complex}
	\dots \to C_{-2} \oplus C^*_{3} 
		\to C_{-1} \oplus C^*_{2} 
		\to C_{0} \oplus C^*_{1} 
		\to C_{1} \oplus C^*_{0} 
		\to C_{2} \oplus C^*_{-1} 
		\to C_{3} \oplus C^*_{-2} 
		\to \dots
\end{equation}
then this has a natural pairing of the spaces opposite each other with respect to the middle arrow. One could then write a BV action using this pairing.\footnote{Note that one could do this for any of the the complexes which lack a BV symplectic pairing. For example, for the Carri\'on complexes for structure groups $\SU(2)$ in four dimensions or $\Spin(7)$ or $\SU(4)$ in eight dimensions, this recovers part of the action for the theories of instantons in~\cite{Witten:1990,Baulieu:1997jx}. One could also do this for the heterotic versions of these complexes.}

In our case, the $\Omega^{(0,\bullet)}_+(C_-)$ row of~\eqref{eq:het-su3-double-complex} is self-dual and would have a pairing of the right type were it all we had. It is the other $ \Omega^{(0,\bullet)}_+$ row which lacks a pairing. We can thus add a copy of the dual of this row and use that $\Omega^{0,p}_+ \simeq (\Omega^{0,3-p}_+)^*$ via contraction with the anti-holomorphic top-form $\bar\Omega$ for $C_+$. Via this isomorphism we also have that $\dd_+^*$ becomes $\dd_+$ again. Overall, we simply add a degree-shifted copy of the top row to get
\begin{equation}\label{eq:het-su3-complex-BV}
    \begin{tikzcd}[column sep = tiny, row sep = tiny, cells={text width={width("$\Omega^{(0,0)}_+(C_-)$")}, align=center}]
    & & 
    & \Omega^{(0,0)}_+ \arrow[dl] \arrow[dr] 
    & 
    & & \\
    &\left.\right. 
    & \Omega^{(0,1)}_+\arrow[dl] \arrow[dr] \arrow[drrr, dotted]
    & &  \Omega^{(0,0)}_+(C_-) \arrow[dl] \arrow[dr]& \left.\right. 
    & \\
    & \Omega^{(0,2)}_+  \arrow[dl] \arrow[dr] \arrow[drrr, dotted]
    & & \Omega^{(0,1)}_+(C_-) \arrow[dl] \arrow[dr] 
    & & \Omega^{(0,0)}_+
    \arrow[dl]  
    \\
    \Omega^{(0,3)}_+ \arrow[dr] \arrow[drrr, dotted] & &
    \Omega^{(0,2)}_+(C_-) \arrow[dl] \arrow[dr] 
    & & \Omega^{(0,1)}_+ \arrow[dl] 
    \\
    & \Omega^{(0,3)}_+(C_-) \arrow[dr]
    & & \Omega^{(0,2)}_+ \arrow[dl]
    \\
    & & \Omega^{(0,3)}_+ 
    \end{tikzcd}
\end{equation}
The horizontal levels in this diagram then have ghost numbers $+2,+1,0,-1,-2,-3$ respectively and one has a symplectic pairing of degree $+1$. We also see that there are natural additional maps $\Omega^{(0,k)}_+(C_-) \ra \Omega^{(0,k)}_+$, given by the duals of our existing maps $\Omega^{(0,k)}_+ \ra \Omega^{(0,k)}_+(C_-)$ 
(i.e.\ contractions $D_{\tm} \omega_{\bc_1 \dots \bc_k}{}^{\tm}$ using the frame indices introduced above) and maps $\Omega^{(0,k+1)}_+ \ra \Omega^{(0,k)}_+$ proportional to $\dd_+^\dagger$, so we include these. The inclusion of these last maps means that naively~\eqref{eq:het-su3-complex-BV} will not have the structure of a double complex.

It is not obvious that the total differential $\dd_W$ on~\eqref{eq:het-su3-complex-BV} squares to zero. Acting on the middle and lower rows, one can deduce that this is so from the fact that the dual differential on total complex $\mc{A}_T$ from~\eqref{eq:het-su3-double-complex} squares to zero. However, acting on an element of the top row, there are several terms. The first is proportional to the square of the differential on $\mc{A}_T$ which we know vanishes from the results of section~\ref{sec:gen-Dolbeault}. However, in our extended complex there is also a term proportional to $\dd_-^\dagger \dd_- = \Delta_-$ (for the top row) and another proportional to $\{\dd_+, \dd_+^\dagger \} = \Delta_+$. It is not immediately clear that these terms cancel as we do not have a direct analogue of the results of section~\ref{sec:Odd-spinor} for the heterotic case.  
However, one can see that these Laplacians are in fact equal by using that the complex $\Omega^{0,\bullet}_+$ is spinor type and the supersymmetry algebra. 
We have that $\Omega^{0,\bullet}_+ \simeq \Gamma(S(C_+))$ and in terms of the spinor variables, the Dirac operator $\Dirac_+$ (i.e.\ the analogue of~\eqref{eq:Dirac-complex} for this single complex) corresponds to the Dirac operator appearing in the supersymmetry transformation of the dilatino field $\rho$, i.e.\ for $\omega \in \Omega^{0,\bullet}_+$ corresponding to $\theta \in \Gamma(S(C_+))$ we have
\begin{equation}
   \Dirac_+ \omega \quad \lra \quad \gamma^a D_a \theta = \delta_\theta \rho
\end{equation}
so that the Laplacian $\Delta_+ \propto \Dirac_+^2$ comes from
\begin{equation}
    \Dirac_+^2 \omega \quad \lra \quad (\gamma^a D_a)^2 \theta = \gamma^a D_a (\delta_\theta \rho)
\end{equation}

We also have that the map $\dd_-$ is given by the corresponding variation of the combined gravitino and gaugino fields $\psi_{\tm}$
\begin{equation}
    D_\tm \omega_{\ba_1 \dots \ba_k} \quad \lra \quad  D_{\tm} \theta = \delta_\theta \psi_{\tm}
\end{equation}
so that the corresponding Laplacian is proportional to
\begin{equation}
    D^{\tm} D_\tm \omega_{\ba_1 \dots \ba_k} \quad \lra \quad  
    	D^{\tm} D_{\tm} \theta = D^{\tm} (\delta_\theta \psi_{\tm})
\end{equation}
The proportionality of the Laplacians $\Delta_\pm$ then follows from the closure of the supersymmetry algebra on the $\rho$ equation of motion $\gamma^a D_a \rho  - D^{\tm} \psi_{\tm} = 0$ (or equivalently the construction of the generalised Ricci scalar as in~\cite{CSW1,Coimbra:2014qaa}):
\begin{equation}
\label{eq:het-Laplacians-equal}
\begin{aligned}
		\delta_\theta (\gamma^a D_a \rho  - D^{\tm} \psi_{\tm}) 
		= (\gamma^a D_a)^2 \theta - D^{\tm} D_{\tm} \theta
		= -\tfrac14 R \theta
		= 0
\end{aligned}
\end{equation}
where $R$ is the generalised Ricci scalar curvature which vanishes for a supersymmetric Minkowski vacuum.\footnote{Note that the closure of the supersymmetry algebra on the gravitino (and gaugino) equation of motion $\slashed{D} \psi_{\tm} - D_{\tm} \rho = 0$ also corresponds to properties of the complex: namely the double complex property $\{ \dd_+ , \dd_-\} = 0$ and the K\"ahler identity $\{ \dd_+^\dagger, \dd_-\} = 0$.}
This relation between the Laplacians leads to the result that the total differential $\dd_W$ on~\eqref{eq:het-su3-complex-BV} squares to zero. 

Having established that~\eqref{eq:het-su3-complex-BV} is a complex, which was constructed to have a BV symplectic pairing, one can then write the corresponding Chern--Simons-type BV action, as was done for gauge theories in section~\ref{sec:BV}. 
It is easy to see that taking an element $f_0$ of ghost number zero, with components $(\Lambda_{\ba}{}^{\tm}, \beta_{\ba\bb}, \varphi)$, 
one can apply the differential $\dd_W$ and use the natural integration against the $\psi$ structure (which has components $\epsilon^{\ba\bb\bc}$ in the $\SU(3)\times\SO(6+n)$ frames as in~\eqref{eq:SU3xSO6-frame}) to recover an expression of the type~\eqref{eq:W-quadratic} for the inner product
\begin{equation}
    S = \langle f_0, \check{\dd} f_0 \rangle \ .
\end{equation}
An element $f_{-1}$ of ghost number one has (complex) components $(V_{\ba}, V^{\tm})$ which are some of the components of a complex generalised vector $V$. Viewing $f_0$ as the parameters of a deformation of the generalised $\SU(3)\times\SO(6+n)$ structure, taking $f_0$ to be $\dd_W f_{-1}$ gives the action of an infinitesimal complexified generalised diffeomorphism on the structure. This is precisely the gauge symmetry of the superpotential theory and we see that~\eqref{eq:het-su3-complex-BV} is indeed the BRST complex of the superpotential theory, which has BV action
\begin{equation}
    S = \langle f, \check{\dd} f \rangle \ ,
\end{equation}
for $f$ a generic element.  


One could be troubled by the fact that~\eqref{eq:het-su3-complex-BV} looks different to the complex for the superpotential theory as described in~\cite{Ashmore:2023vji}. In particular, it does not have the structure of a double complex. This is due to the fact that we have expanded the superpotential functional in a different basis of fields. We will show in future work that a field redefinition, corresponding to a different parameterisation of the variation of the generalised structure, provides an equivalence to the complex of~\cite{Ashmore:2023vji}.


\subsection{\texorpdfstring{$G_2$}{G2} heterotic superpotential}

We can perform the same analysis for the $G_{2}$ heterotic superpotential. Such a background corresponds to an $G_{2}\times SO(7+n)$ structure. Such a structure is defined by a generalised tensor $\psi\in \Gamma(\ext^{3}E\otimes \det T^{*}M)$. In a frame $\{\hat{E}_{A}\} = \{ \hat{E}^{+}_{m}, \hat{E}^{-}_{\tilde{m}} \}$ (with $m,n,...=1,2,...,7$ indices for $C_{+}$ and $\tilde{m},\tilde{n},... = 1,2,...,7+n$ indices for $C_{-}$), we can write
\begin{equation}
    \psi = \Phi\chi \ , \qquad \chi = \varphi^{mnp} \hat{E}_{m}^{+}\wedge \hat{E}^{+}_{n}\wedge \hat{E}^{+}_{p} \ , \qquad \Phi = \sqrt{g}\ee^{-2\dil} \ ,
\end{equation}
where $\varphi^{mnp}$ are the components of a stable 3-form in 7-dimensions (i.e.\ the components of a $\G_{2}$ 3-form). With this, the (real) superpotential takes the form
\begin{equation}
    W\sim \int \Phi  \chi^{ABE}\chi^{CD}{}_{E}\, D_{[A}\chi_{BCD]}\ ,
\end{equation}
where $D$ is any torsion free generalised connection (one can use the same proof as in the previous section to show that $W$ is independent of the choice of torsion free connection).

We would like to take the second variation of this around some on-shell background. This means we are free to choose $D$ to be torsion-free and compatible with the $G_{2}\times SO(7+n)$ structure, in which case the second variation takes the form
\begin{equation}\label{eq:G2_superpotential_2nd_var}
    \delta^{2}W \sim \int \left( \delta \Phi \chi^{ABE}\chi^{CD}{}_{E} + \Phi  \delta\chi^{ABE}\chi^{CD}{}_{E} + \Phi  \chi^{ABE}\delta\chi^{CD}{}_{E} \right) D_{[A} \delta \chi_{BCD]} \ .
\end{equation}
We can take the following form for the variations
\begin{equation}
    \begin{aligned}
        \delta \chi^{mnp} &= 3\beta^{[m}{}_{q} \varphi^{np]q} \ , \\
        \delta \chi^{\tilde{m}np} &= 3\Lambda^{\tilde{m}}{}_{q}\varphi^{npq}\ , \\
        \delta \Phi &= \Lambda \Phi\ ,
    \end{aligned}
\end{equation}
where $\beta_{mn} \in  \Omega^{+,2}_{\mathbf{7}}$, $\Lambda^{\tilde{m}}{}_{n} \in \Omega^{+,1}_{\mathbf{7}}(C_{-})$, and $\Lambda \in \Omega^{+,0}_{\mathbf{1}}$. With this parameterisation, the action \eqref{eq:G2_superpotential_2nd_var} takes the form
\begin{equation}
    \delta^{2}W \sim \int 6 \varphi^{mnp} \left( 2 \Lambda D_{m}\beta_{np} - 3 \beta_{mn} D^{r}\beta_{rp} + 2 \Lambda^{\tilde{r}}{}_{m}D_{\tilde{r}}\beta_{np} + 6 \Lambda_{\tilde{r}m} D_{n}\Lambda^{\tilde{r}}{}_{p} \right)
\end{equation}

How does this relate to the \complexes we have defined? The \complex for $G_{2}\times SO(7+n)$ structures takes the form
\begin{equation}\label{eq:het-g2-double-complex}
   \begin{tikzcd}[column sep = tiny, row sep = tiny, cells={text width={width("$\Omega^{+,0}_{\rep{1}}(C_-)$")}, align=center}]
   & & & \arrow[dr, "\dd_{-}", xshift=1.5ex, yshift=-1.5ex]  \\
   & & \arrow{dl}[swap]{\dd_{+}} 
   & \Omega^{+,0}_{\rep{1}} \arrow[dl] \arrow[dr] 
   & \left.\right. 
   & & \\
   &\left.\right. 
   & \Omega^{+,1}_{\rep{7}} \arrow[dl] \arrow[dr] 
   & &  \Omega^{+,0}_{\rep{1}}(C_-) \arrow[dl] & \left.\right. 
   & \\
   & \Omega^{+,2}_{\rep{7}}  \arrow[dl] \arrow[dr] 
   & & \Omega^{+,1}_{\rep{7}}(C_-) \arrow[dl]
   \\
   \Omega^{+,3}_{\rep{1}} \arrow[dr] & &
   \Omega^{+,2}_{\rep{7}}(C_-) \arrow[dl]
   \\
   & \Omega^{+,3}_{\rep{1}}(C_-)
   \\
   \end{tikzcd}
\end{equation}
As before, this does not have a cyclic structure but we can perform the same procedure as for the $SU(3)$ case to write down an extension which does. We find
\begin{equation}\label{eq:het-g2-complex-BV}
   \begin{tikzcd}[column sep = tiny, row sep = tiny, cells={text width={width("$\Omega^{+,0}_{\rep{1}}(C_-)$")}, align=center}]
   & & 
   & \Omega^{+,0}_{\rep{1}} \arrow[dl] \arrow[dr] 
   & 
   & & \\
   &\left.\right. 
   & \Omega^{+,1}_{\rep{7}} \arrow[dl] \arrow[dr] \arrow[drrr, dotted]
   & &  \Omega^{+,0}_{\rep{1}}(C_-) \arrow[dl] \arrow[dr]& \left.\right. 
   & \\
   & \Omega^{+,2}_{\rep{7}}  \arrow[dl] \arrow[dr] \arrow[drrr, dotted]
   & & \Omega^{+,1}_{\rep{7}}(C_-) \arrow[dl] \arrow[dr] 
   & & \Omega^{+,0}_{\rep{1}}
   \arrow[dl]
   \\
   \Omega^{+,3}_{\rep{1}} \arrow[dr] \arrow[drrr, dotted] & &
   \Omega^{+,2}_{\rep{7}}(C_-) \arrow[dl] \arrow[dr] 
   & & \Omega^{+,1}_{\rep{7}} \arrow[dl]
   \\
   & \Omega^{+,3}_{\rep{1}}(C_-) \arrow[dr]
   & & \Omega^{+,2}_{\rep{7}} \arrow[dl]
   \\
   & & \Omega^{+,3}_{\rep{1}}
   \end{tikzcd}
\end{equation}
The maps $\Omega^{+,k}(C_{-}) \to \Omega^{+,k}$ and $\Omega^{+,k+1} \to \Omega^{+,k}$ are precisely the duals of the maps appearing in \eqref{eq:het-g2-double-complex} as before. The horizontal levels are defined to have ghost number $+2,+1,0,-1,-2,-3$, and the symplectic pairing is of degree $+1$.

We need to determine whether the total complex of \eqref{eq:het-g2-complex-BV} really defines a complex, i.e.\ the total derivative squares to 0. The difficult part is the total derivative acting on the top row which will return a term proportional to ${\dd}^{2}\omega \propto (\Delta_{+} - \Delta_{-})\omega$. Once again, we can use the fact that this complex is a spinor complex and $\Omega^{+,\text{ev}} \simeq \Omega^{+,\text{odd}} \simeq \Gamma(S(C_{+}))$. The conditions coming from supersymmetry ensure that the left and right Laplacians are equal on the top row, as in \eqref{eq:het-Laplacians-equal} and hence the total space of \eqref{eq:het-g2-complex-BV} is a complex.

A generic element $f_{0}$ of degree 0 in this complex can be parameterised by $(\Lambda^{\tilde{m}}{}_{n}, \beta_{mn}, \Lambda)$. One then finds that the action \eqref{eq:G2_superpotential_2nd_var} takes the form
\begin{equation}
    S = \left<f_{0},\check{\dd} f_{0} \right>
\end{equation}
and that the associated BV action is given by a generic element $f$ of the complex
\begin{equation}
    S_{\text{BV}} = \left<f,\check{\dd} f\right>
\end{equation}
It remains an open and interesting question as to whether a field redefinition of the form used in \cite{Ashmore:2023vji} will allow us to rewrite \eqref{eq:het-g2-complex-BV} as a double complex. If it is possible then we can immediately write down the 1-loop partition function following the techniques in that paper.

\section{Discussion and outlook}

In this work, we have shown that Carri\'on's prescription to construct a complex associated to Donaldson-Thomas-type instantons of a gauge theory on a manifold with a torsion-free $G$-structure is actually part of a much more general picture. 
This includes gravitational instantons (i.e.\ torsion-free $G$-structures themselves) and supergravity instantons (i.e.\ supersymmetric backgrounds, or equivalently torsion-free generalised $G$-structures). We have shown how to construct these more general complexes, which we have labelled \complexesNoSpace, via information from only the (generalised) $G$-structure. 
Further, we have shown that for particular choices of the group $G$, these \complexes become equivalent to spinors and the corresponding differentials and their adjoints are packaged together into the Dirac operator acting on those spinors. 
This observation provided elegant general proofs of statements relating the the Laplacian on the \complex to the de Rham Laplacian, or its $H$-twisted version in the generalised geometry case. Thus far, these relations had been noted in specific cases and proved by direct calculations, whose shape appears to have little in common between the cases \cite{2003math......5124B,Ashmore:2021pdm}. 
The spinorial description thus provides a pleasing general structure to these results, as well as suggesting strong connections between these constructions and supersymmetry. 
It remains an interesting problem in algebra to find an elegant classification of the groups and subgroups for which the \complexes are spinor type in this sense, and to extract possible connections to the theory of pure spinors. 

We also explored how, in cases where the \complex has a BV symplectic pairing, one can write a quadratic BV actions associated to it. The classical on-shell states of these theories are those for which the gauge field is closed in the differential on the \complexNoSpace, and thus are instantons in the relevant sense. 
These actions reproduced the linearised versions of many Chern--Simons gauge theories that have been of interest over the years. 
One could thus view the construction as a way to construct interesting gauge theories of instantons associated to $G$-structures, using only their algebraic data, as the construction of the \complex is essentially a purely algebraic prescription. 
We also briefly noted that for cases which do not have a BV symplectic pairing of this type, there are other constructions of actions that one could perform. For example, for structure groups $\SU(2)$ in four dimensions or $\Spin(7)$ or $\SU(4)$ in eight dimensions one could proceed as in~\eqref{eq:add-dual-complex} to produce a new complex equipped with a symplectic pairing and then write an action.

For the generalised $G \subset O(d,d+n)$ structures appearing in heterotic and type II supersymmetric backgrounds, we have also found that the \complex possess significant additional structure. In particular, whenever the structure group 
has the form $G_+ \times G_-$ such that one can associate $G_\pm$ with the left- and right-moving modes of the string, the \complex becomes a double complex, which is the tensor product of the left- and right-moving gauge theory \complexesNoSpace. Further, we have seen that these satisfy K\"ahler identities.

In the cases of heterotic structures $G \times \SO(d+n)$, we were able to relate these double complexes fairly directly to prior works on heterotic moduli~\cite{delaOssa:2014cia,delaOssa:2017pqy}, and in spinor type cases, key properties of the complex such as the equality of the Laplacian $\Delta_+$ and $\Delta_-$ on the first row of the \complex could be expressed as the closure of the supersymmetry algebra on the fermion equations of motion. 
In the case of $\SU(3)\times\SO(6+n)$ in six-dimensions, ot three complex dimensions, we were able to construct a mild extension of the \complex to describe the BRST-BV complex of the superpotential theory, and this equality of Laplacians was crucial for consistency. Moreover, without prior knowledge of the superpotential theory, the extended \complex could be motivated by wishing for the existence of a BV symplectic pairing, such that one could write an associated BV action. In this way, one could have reconstructed the superpotential theory at quadratic order starting from the \complexNoSpace. This provides, admittedly with considerable hindsight, another example of the use of this formalism to construct interesting action functionals, also in complex dimensions other than three.  

Perhaps the most elegant of our examples, though, are the spinor type complexes in $O(d,d)$ geometries relevant to type II strings. These were shown to satisfy not only the K\"ahler identities but also the equality of the left- and right-moving Laplacians, which further gives rise to a $\der\bar\der$-lemma. In many ways, these \complexes thus behave much like the double complex of $(p,q)$-forms on a K\"ahler manifold. The equality of the Laplacians $\Delta_\pm$ and their relation to the ($H$-twisted) de Rham Laplacian can further be interpreted as target space artefacts of the level matching and mass-shell conditions on the worldsheet. Examples of these spinor type complexes were shown in previous work~\cite{Ashmore:2021pdm} to describe the physical operators on the worldsheets of (quasi-)topological strings. Our treatment here provides elegant general proofs of the properties of these double complexes that were previously derived by direct calculation in the specific cases considered. Further, we have provided a more general structure to the target space actions associated to these theories as a kind of Aeppli-Chern--Simons theory in which the the diagonal complex of the double complex provides the ghost number zero field content, with the second order differential $\dd_+ \dd_-$ providing the kinetic operators. The ghost number zero physical states thus become the Aeppli cohomologies of the double complex, while the gauge structure involves complexes of a similar nature to those appearing in~\cite{Tseng:2009gr,tseng2014generalized}. We aim to explore the further general properties of these theories in future work. 

There are many other directions in which one could hope to extend and apply this construction. For example, one could consider weakening the conditions on the torsion of the underlying $G$-structure. Throughout this article we have taken our $G$-structures to be torsion-free, which corresponds to supersymmetric Minkowski vacua (on the external space part of a compactification) in the gravitational cases. However, it is known that in Carri\'on's original construction, only some of the intrinsic torsion classes are required to vanish for the complex to exist. For example, for $U(N)$ structures in $2N$ dimensions, one only requires that the manifold be complex (rather than K\"ahler) for the usual Dolbeault complex to exist. 
One could explore if, in particular, one could allow a constant singlet torsion, with accompanying non-zero scalar curvature, to define similar structures on supersymmetric AdS vacua. 
It may also be possible to test for the existence of torsion-free structures by comparing Bott-Chern or Aeppli type cohomologies against the full cohomology of the complex as one does in usual K\"ahler geometry. It would also be interesting to see if one can connect the cohomology of these complexes to sheaf cohomologies via Poincar\'e lemmas.

One could also consider corresponding statements in  exceptional generalised geometries~\cite{PiresPacheco:2008qik,Hull:2007zu,CSW2} describing the internal sectors of eleven-dimensional supergravity and type II theories including RR fluxes. Here, there is a clear picture of how to proceed. 
In the $O(d,d+n)$ geometries, the graded vector space $\Lambda^\bullet(E)$ which we start with has a physical interpretation as the tensor hierarchy~\cite{deWit:2005hv,deWit:2008ta} of the supergravity theory, which can be seen as the analogue of the de Rham complex in generalised geometry~\cite{Hohm:2015xna,Wang:2015hca}. 
Notice that in $O(d,d)$ geometry, there is an alternative analogue, which is given by the weighted spinors that can be represented as polyforms on the manifold. Our spinor type complexes thus provide a concrete relation between these two objects in the case of $\cN=2$ backgrounds. 
The bundles corresponding to the tensor hierarchy are known for exceptional geometries~\cite{Palmkvist:2013vya}. While the algebraic product on this is more complicated, one can still use it to generate a subspace from the Lie algebra of the structure group $\mf{g} \subset E \otimes E^*$. The resulting quotient will then contain the full tower of ghosts for the generalised diffeomorphism symmetry, the infinitesimal deformations of a $G$-frame and the intrinsic torsion space so that it will give the infinitesimal moduli of the structure as in section~\ref{sec:moduli} of the present article. 
An analogue of the superpotential complex can also be constructed for the $J$ structure in the $E_{7(7)}$ case, as one would expect. 
We hope to provide details of this construction in future work. 
There are substantial complications in applying the methods that we use here in the exceptional context. In particular, the absence of a generalised Riemann tensor would seem to obstruct proofs of the types we have employed in section~\ref{sec:gen-Dolbeault} (though see~\cite{Hassler:2023axp} for some recent ideas). The construction of these complexes in exceptional geometry may therefore shed light on how to construct such Riemann tensors.  
It would also be curious to examine whether there is an analogue of spinor type complexes in exceptional geometry, producing an analogue of spinorial representations of the exceptional groups. Unlike in the cases we examine here, these representations may be infinite-dimensional.

One can also view the double complexes appearing in the $O(d,d)$ cases as double copy constructions, where a gravitational theory can be seen as the product of two gauge theories (see~\cite{Bern:2019prr,White:2024pve} and references therein).
In particular, these would be examples where the background geometry is non-trivial. The \complex is the tensor product of two gauge theory \complexesNoSpace, one for each of the left- and right-moving string sectors, with differentials constructed from the corresponding generalised connections. The double copy one finds is perturbative around a supersymmetric supergravity background equipped with two gauge theory instanton solutions. The cohomologies representing the on-shell deformations of these instantons have harmonic representatives in the Laplacians $\Delta_\pm$. As these Laplacians are equal, one can then simply take the tensor product of two such deformations to get a harmonic deformation of the supergravity background. This is precisely a classical double copy relation. The double copy has been applied to topological gauge theories previously in~\cite{Ben-Shahar:2021zww,Bonezzi:2024dlv} and to four-dimensional instantons in~\cite{Alawadhi:2021uie,Armstrong-Williams:2022apo}. Our construction provides a linearised version for a wide class of theories of instantons.

While in this work we have considered only the linearised theories, corresponding to the infinitesimal deformations of the underlying instanton solutions, one could also wish to study the non-linear deformation theory, or correspondingly the interacting field theories. 
For the Chern--Simons gauge theories and the superpotential theories~\cite{Ashmore:2018ybe}, the interactions are known. It would be curious to see whether there is an elegant systematic construction of them extending our work here. As a first step in this direction, it would be interesting to compute the index of the corresponding deformation complexes, counting the expected, or ``virtual", dimension of the moduli space. This is of particular interest for $\cN=1$ backgrounds, where one might expect a vanishing index, and thus a zero-dimensional virtual moduli space. It is tempting to speculate whether analogs of Donaldson-Thomas invariants \cite{Donaldson:1996kp, thomas1997gauge} can be defined in these cases, and if the index has something to say about the true nature of the physical moduli space, and indeed the string theory moduli problem, when all higher order and non-perturbative corrections are included. 

We also expect that the BPS complex will be useful in the study quantum aspects of the corresponding (quasi-)topological theories. For example, anomalies are often phrased in terms of curvature polynomials on a ``universal geometry",\footnote{In the Universal geometry picture, one thinks of the manifold, together with the geometric structure of interest, as a fibration over the moduli space of said geometric structure.} as for example with the holomorphic anomaly of Kodaira-Spencer theory \cite{Bershadsky:1993cx, Bershadsky:1993ta}. The Universal geometry picture was first considered by Atiyah-Singer in \cite{Atiyah:1984tf}, and further developed in \cite{Donaldson:1990kn, donaldson1987orientation, Witten:1988ze, Harvey:1991hq} in the study of Donaldson theory and Donaldson invariants. A Universal geometry picture has also been developed for six-dimensional heterotic geometries \cite{Candelas:2016usb, Candelas:2018lib, McOrist:2019mxh}, where it was observed that the universal geometry has many features mimicking that of the underlying geometric structure. We hence expect that the technology presented here will be useful for in pursuing these ideas in more generality, e.g. in defining analogs of Donaldson invariants, and we hope to explore these ideas further in the future.


\acknowledgments

We are grateful to Anthony Ashmore, Leron Borsten, Xenia de la Ossa, Magdalena Larfors, Matthew Magill, Pavol \v{S}evera and Dan Waldram for helpful discussions. C.S.-C.~and F.V.~are supported by an EPSRC New Investigator Award, grant number EP/X014959/1. No new data was collected or generated during the course of this research. D.T.~is supported by the NSF grant PHYS-2112859.
C.S.-C., D.T.~and F.V.\ acknowledge hospitality from Simons Center for Geometry and Physics during the programme ``Supergravity, Generalized Geometry and Ricci Flow" at which some of this research was performed. 
C.S.-C.~and E.S.~acknowledge hospitality from the MATRIX institute programme ``New Deformations of Quantum Field and Gravity Theories" at which some of this research was performed. 


\appendix

\section{Gamma matrix decompositions}\label{app:Gamma}

Here we provide some details of the decomposition of the $\Cliff(d,d;\bbR)$ gamma matrices $\Gamma^A$ in terms of two sets of generators $\gamma^a$ and $\gamma^{\ba}$ for $\Cliff(d,\bbR)$, thought of as attached to the bundles $C_\pm$. As above, we have $A,B = 1,\dots,2d$ is an $O(d,d)$ vector index and $a,b = 1,\dots,d$ and $\ba,\bb = 1,\dots, d$ are vector indices for $C_\pm$. We also write that $\gamma^{(d)} = \gamma^1 \dots \gamma^d$.

For $d$ odd, we can decompose $\Gamma^A$ as:
\begin{equation}
\label{eq:gamma-odd-decomp}
	\Gamma^{a} = \gamma^a \otimes \id \otimes \sigma^1
	\hs{30pt}
	\Gamma^{\ba} = \id \otimes \gamma^{\ba} \otimes \ii\sigma^2
\end{equation}
where $\sigma^i$ are the Pauli matrices and we take the matrices $\gamma^a$ and $\gamma^{\ba}$ to generate irreducible representations of $\Cliff(d,\bbR)$ for odd $d$ so that $\gamma^{(d)}$ is proportional to the identity. The $\Spin(d,d)$ spinor thus decomposes as 
\begin{equation}
	S(E) = S(C_+) \otimes S(C_-) \otimes \bbR^2
	\hs{30pt}
	\Psi = \sum \, \zeta_+ \otimes \zeta_- \otimes t
\end{equation}
where $\zeta_\pm \in S(C_\pm)$ and $t$ is an auxiliary vector in $\bbR^2$. This auxiliary vector is necessary to account for the fact that the Clifford algebra representations are real spaces, and one is really taking the tensor products over $\bbR$, despite that they are often expressed in terms of complex components. This means that, for example, one must think of $\bbC$ as a subalgebra of $2\times 2$ real matrices. From~\eqref{eq:gamma-odd-decomp} One can then easily see that
\begin{equation}
	\Gamma^{a} \Gamma^b = (\gamma^a \gamma^b) \otimes \id \otimes \id
	\hs{30pt}
	\Gamma^{\ba} \Gamma^{\bb} = - \id \otimes (\gamma^{\ba}  \gamma^{\bb}) \otimes \id
\end{equation}

For $d=4n$, we have $(\gamma^{(d)}){}^2 = +\id$ so that we can decompose $\Gamma^A$ as:
\begin{equation}
	\Gamma^{a} = \gamma^a \otimes \id
	\hs{30pt}
	\Gamma^{\ba} = \gamma^{(d)} \otimes \gamma^{\ba}\gamma^{(d)}
\end{equation}
so that the $\Spin(d,d)$ spinor decomposes as 
\begin{equation}
	S(E) = S(C_+) \otimes S(C_-)
	\hs{30pt}
	\Psi = \sum \, \zeta_+ \otimes \zeta_- 
\end{equation}
where $\zeta_\pm \in S(C_\pm)$. Again one has
\begin{equation}
	\Gamma^{a} \Gamma^b = (\gamma^a \gamma^b) \otimes \id 
	\hs{30pt}
	\Gamma^{\ba} \Gamma^{\bb} = - \id \otimes (\gamma^{\ba}  \gamma^{\bb}) 
\end{equation}

For $d=4n+2$, we have $(\gamma^{(d)}){}^2 = -\id$ so that we can decompose $\Gamma^A$ as:
\begin{equation}
	\Gamma^{a} = \gamma^a \otimes \id
	\hs{30pt}
	\Gamma^{\ba} = \gamma^{(d)} \otimes \gamma^{\ba}
\end{equation}
so that the $\Spin(d,d)$ spinor decomposes as 
\begin{equation}
	S(E) = S(C_+) \otimes S(C_-) 
	\hs{30pt}
	\Psi = \sum \, \zeta_+ \otimes \zeta_-
\end{equation}
where $\zeta_\pm \in S(C_\pm)$ and again
\begin{equation}
	\Gamma^{a} \Gamma^b = (\gamma^a \gamma^b) \otimes \id 
	\hs{30pt}
	\Gamma^{\ba} \Gamma^{\bb} = - \id \otimes (\gamma^{\ba}  \gamma^{\bb}) 
\end{equation}

We have thus established equations~\eqref{eq:Gamma-2-decomp} in all cases.

\bibliographystyle{JHEP}
\bibliography{citations}

\end{document}